\newcommand{\union}{\cup}
\newcommand{\ID}{\ensuremath{\mathrm{ID}}}
\newcommand{\sig}[1]{\ensuremath{\mathrm{sig}(#1)}}
\newcommand{\tre}[2]{\mathcal{R}_{#1}(#2)}
\newcommand{\ordsig}[1]{\ensuremath{\mathrm{ordSig}(#1)}}
\newcommand{\procs}[1]{\ensuremath{\mathrm{participants}(#1)}}
\newcommand{\allRegs}{\ensuremath {\mathcal{R}}}
\newcommand{\full}[2]{$(#1,#2)${-full}}
\newcommand{\fullRegs}[3]{\ensuremath {\mathcal{R}_{#1,#2}(#3)}}
\newcommand{\poised}[2]{\ensuremath{\mathrm{poised}(#1,#2)}} 
\newcommand{\idle}[1]{\ensuremath{\mathrm{idle}(#1)}} 
\newcommand{\reg}{\ensuremath{{r}}} 
\newtheorem{claim}[theorem]{Claim}
\begin{document}
\sloppy
\markboth{Helmi et al.}{The Space Complexity of Long-lived and One-Shot Timestamp Implementations}

\title{The Space Complexity of Long-lived and One-Shot Timestamp Implementations}
\author{Maryam Helmi
\affil{University of Calgary}
Lisa Higham
\affil{University of Calgary}
Eduardo Pacheco
\affil{Universidad Nacional Automoma de Mexico}
Philipp Woelfel
\affil{University of Calgary}
}
\begin{abstract}
This paper is concerned with the problem of implementing an unbounded timestamp
  object from multi-writer atomic registers, in an asynchronous distributed
  system of $n$ processes with distinct identifiers where timestamps are taken from
  an arbitrary universe.  Ellen, Fatourou
  and Ruppert~\cite{EFR2008a} showed that $\sqrt{n}/2-O(1)$ registers are
  required for any obstruction-free implementation of long-lived 
  timestamp systems from atomic registers  
  (meaning processes can repeatedly get timestamps). 

  We improve this existing lower bound in two ways.  
  First we establish a lower bound of $n/6 - 1$ registers 
  for the obstruction-free long-lived timestamp problem.
  Previous such linear lower bounds were only known for constrained versions of the timestamp problem. 
  This bound is asymptotically tight; 
  Ellen, Fatourou and Ruppert~\cite{EFR2008a}  constructed a wait-free algorithm that uses $n-1$ registers. 
  Second we show that $\sqrt{2n} - \log n - O(1)$ registers are required 
  for any obstruction-free implementation of one-shot timestamp systems
(meaning each process can get a timestamp at most once).
We show that this bound is also asymptotically tight by providing a wait-free one-shot timestamp system that uses  
at most $\lceil 2 \sqrt{n} \rceil$ registers, 
thus establishing a space complexity gap between one-shot and long-lived  timestamp systems.  
\end{abstract}

\category{F.2.2}{Analysis of Algorithms and Problem Complexity}{Nonnumerical Algorithms and Problems}
\category{D.1.3}{Programming Techniques}{Concurrent Programming}[Distributed programming]

\terms{Algorithms, Theory}

\keywords{Timestamps, Solo-termination, Wait-free, Obstruction-free, Space Complexity, Shared Memory}

\begin{bottomstuff}
This work is supported by the National Sciences and Research Council of Canada Discovery Grants. 
E. Pacheco participated in this research while visiting the University of Calgary.

Author's addresses: M. Helmi {and} L. Higham {and} P. Woelfel, Computer Science Department,
University of Calgary;  
E. Pacheco,
Computer Science Department, University of Ottawa, Canada. 
\end{bottomstuff}

\maketitle

\section{Introduction}

In asynchronous multiprocessor algorithms, processes have no information about the real-time order of events 
that are incurred by other processes.
In order to solve distributed problems effectively, such as ensuring first-come-first-served fairness, or constructing synchronization primitives, it is often necessary that some reliable information about the relative order of these events can be gained.

Timestamp objects provide a means for processes to label events and then later compare those labels in order to gain information about the real-time order in which the corresponding events have occurred.
Such timestamping mechanisms have been used to solve numerous problems associated with asynchrony in distributed shared memory and message passing algorithms.
Examples of applications 
include mutual and $k$-exclusion algorithms \cite{Lam1974a,RA1981a,FLBB1989a,ADGMS1994a}, consensus algorithms \cite{Abr1988a},  
register constructions 
\cite{HV2002a,LTV1996a,VA1986a}, or adaptive renaming algorithms \cite{AF2003a}.

In 1978, Lamport \cite{Lam1978a} defined the ``happens before'' relation on events occurring in message passing systems to reflect the causal relationship of events.
The happens before relation is a partial order, where, informally, an event $e_1$ happens before event $e_2$, if there is a causal relation that forces 
event $e_1$ to precede $e_2$.
Lamport further devised a \emph{logical clock} that assigns an integer value $C(e)$, called a timestamp, to each event $e$ 
such that $C(e_1)<C(e_2)$ if event $e_1$ happens before event $e_2$.
Lamport's logical clock system based on integers was extended to clocks based on vectors 
(examples include \cite{Fid1988a} and \cite{Mat1989a}) and matrices (\cite{WB1986a} and \cite{SL1987a}).

In shared memory systems, 
events correspond to method invocations and responses. 
The happens before relation orders time intervals associated with method calls.
Method call $m_1$ happens before method call $m_2$, if the response of $m_1$ precedes the invocation of $m_2$.
Timestamp objects provide a mechanism to label events with timestamps from a \emph{timestamp universe} $\mathcal{T}$ through \getTS{} 
(sometimes called \FuncSty{timestamping} or \FuncSty{label}) method calls.
If $\mathcal{T}$ is finite, then the timestamp object is said to be \emph{bounded}, otherwise it is \emph{unbounded}.

Often, $\mathcal{T}$ is a partially ordered set, and all timestamps returned by \getTS{} method calls during an execution preserve the happens before relation of these method calls.
Such timestamp objects are called \emph{static}.
Non-static timestamp objects can take the current system state into account when comparing the order of two timestamps.
Thus, different executions can lead to different partial orders of the set $\mathcal{T}$.
Sometimes, in particular when $\mathcal{T}$ is bounded, the happens before relation is only preserved for a subset of \emph{valid} timestamps in $\mathcal{T}$, e.g., the set of the last timestamps obtained by each process.
In this case, timestamp objects often provide a \FuncSty{scan} method that returns an ordered list of all valid timestamps.
The literature contains several examples of constructions of bounded and unbounded timestamp objects \cite{Lam1974a,GLS1992a,IP1992a,IL1993a,DS1997a,DW1999a,HV2002a,AF2003a,GR2007a,EFR2008a}.

Ellen, Fatourou, and Rupert \cite{EFR2008a} studied the number of atomic registers needed to implement  timestamp objects.
In order to prove strong lower bounds, the authors considered a very weak definition of an unbounded non-static timestamp object, 
that, in addition to \getTS{} provides a method \compare{$t_1,t_2$} for two timestamps $t_1,t_2\in\mathcal{T}$.
The only requirement is that if a \getTS{} method $g_1$ that returns $t_1$
happens before another \getTS{} method $g_2$ that returns $t_2$ then
any later \compare{$t_1,t_2$} must return true and any later \compare{$t_2,t_1$} must return false.

As their main result, Ellen et al.\ showed that any implementation that satisfies non-deterministic solo-termination (a progress condition weaker than wait-freedom or obstruction-freedom, and that is defined in Section~\ref{section:preliminaries})
requires at least $\frac12\sqrt{n-1}$ registers, where $n$ is the number of processes in the system.
Despite the weak requirements, the best known algorithm (also in \cite{EFR2008a}) needs $n-1$ registers, leaving a large gap between the best known lower and upper bounds.
However, for two stronger versions of the problem, Ellen et al.\ obtain tight lower bounds, showing that $n$ registers are necessary,
first, for static algorithms, where $\mathcal{T}$ is \emph{nowhere dense} (i.e., any two elements $x,y\in\mathcal{T}$ 
satisfy $|\{z\in\mathcal{T}\,|x<z<y\}|<\infty$), 
and second, for anonymous algorithms.

\subsubsection*{Our Contributions}
We distinguish between \emph{one-shot} timestamp objects, where each process is allowed to call \getTS{} at most once, and \emph{long-lived} ones, where each process can call \getTS{} arbitrarily many times.
(In either case, the number of \compare methods calls is not restricted.)
We first improve the $\Omega(\sqrt n)$ lower bound of \cite{EFR2008a} for long-lived timestamp objects to an asymptotically tight one:
\begin{theorem}\label{theorem:long-lived}\samepage
  Any long-lived unbounded timestamp object that satisfies non-deterministic solo-termination uses at least $n/6-1$ registers.
\end{theorem}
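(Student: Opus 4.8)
The plan is a covering argument combined with an indistinguishability contradiction specifically tailored to timestamps. Write $m$ for the number of registers used by the implementation; I will show that if $6(m+1)\le n$ then no correct implementation exists, which yields the stated bound.

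The basic gadget exploits that \getTS{} may be invoked repeatedly: if a single process performs two \getTS{} operations, one after the other, and both return the \emph{same} value $t$, we are done, because the first operation happens before the second, so the specification forces any later \compare{$t,t$} call to return \texttt{true} (from the ordered pair) and \texttt{false} (from the reversed pair) simultaneously. A first consequence is a write lemma: from every reachable configuration, every solo-terminating \getTS{} execution writes to at least one register. Indeed, a solo-terminating \getTS{} by a process $p$ that writes to no register leaves the configuration unchanged, so the same execution is available again and returns the same value $t$, producing two ordered \getTS{} calls of the above kind.

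The core of the argument is an inductive construction. For $k=0,1,\dots$ I build an execution $\alpha_k$ that uses at most $6k$ processes and reaches a configuration $C_k$ in which a set $S_k$ of $k$ distinct registers is \emph{covered}: for each $r\in S_k$ a bounded number of processes are poised on $r$ so that their pending writes restore $r$ to its value in $C_k$, while the registers outside $S_k$ still hold their initial values. Given $C_k$, I let a fresh process $q$ run \getTS{} solo from $C_k$; by the write lemma it writes somewhere. If $q$ only ever writes to registers in $S_k$, I release one pending write per covered register --- a \emph{block write} --- which resets memory exactly to $C_k$'s contents without touching $q$'s local state; re-running $q$'s \getTS{} from this reset configuration reproduces the same return value, so the first call happens before a second call returning the identical value, and the gadget gives a contradiction. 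Otherwise $q$ writes to some register outside $S_k$; I stop $q$ just before its first such write, which exposes a new register to put into $S_k$, and a constant number of additional fresh processes suffice to re-establish the covering invariant for it, giving $C_{k+1}$. Since only $m$ registers exist and, as long as $6(m+1)\le n$, fresh processes never run out, the construction cannot keep capturing new registers forever, so the contradictory branch must eventually be reached.

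The base case ($k=0$, empty execution, initial configuration) and the shape of the final contradiction are routine. The real obstacle is the inductive step: keeping the covering invariant exact --- in particular making sure each parked process is poised to write precisely the value that makes the block write act as the identity on memory --- and arranging that capturing a new register always completes with only a constant number of extra processes no matter which already-covered registers the newcomer touches first; getting this bookkeeping tight is what pins down the constant $6$. Two further technical points must be handled along the way: block writes must leak no information, which holds because they merely consist of parked processes completing writes they were already committed to; and the whole construction has to be carried out relative to a fixed choice of solo-terminating executions, so that the nondeterminism permitted by solo-termination does not interfere with the indistinguishability claims.
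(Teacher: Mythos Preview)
Your argument has a genuine gap at its very core: the ``same process runs \getTS{} twice and must return the same value'' gadget does not work, because a configuration includes each process's local state, and that state changes when a \getTS{} completes. Concretely, after $q$ finishes its first \getTS{} returning $t$, a block write may restore shared memory to its $C_k$ values, but $q$'s local state is now different from what it was in $C_k$; an implementation is free to keep a local counter, or to remember $t$, and return something else on the second call. Nothing in the specification or the model forbids persistent local state between method invocations, and a lower bound must apply to every implementation. The same flaw already bites your write lemma: a \getTS{} by $p$ that performs no shared writes does not ``leave the configuration unchanged'' --- it changes $p$'s state --- so you cannot conclude that ``the same execution is available again.''

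The paper sidesteps this by never asking one process to reproduce a value. Its basic tool (the Ellen--Fatourou--Ruppert lemma) uses \emph{two} fresh processes $q_0,q_1$ and \emph{three} disjoint covering sets: if neither $q_i$ writes outside the covered set, then the two interleavings (block-write, $q_0$; block-write, $q_1$; block-write) and the same with $q_0,q_1$ swapped are indistinguishable to every process, yet the happens-before order of the two \getTS{} calls is reversed --- a contradiction detectable by a later \compare{}. This is why three covers (not one) are needed and why two fresh processes are consumed per step. The inductive engine is also different: rather than maintaining an ``identity cover'' whose block write restores exact register contents, the paper defines $(3,k)$-configurations (each register covered by at most three processes, $k$ covering processes total), uses a pigeonhole on \emph{signatures} to find two such configurations with identical coverage multiplicities connected by a schedule that begins with three block writes, and then inserts a fresh process after one of those block writes. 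Your ``identity cover'' invariant, besides being hard to maintain with a constant number of helpers, is not what drives the $6$; the $6$ in the paper comes from $3$-fold covering of at most $n/2$ processes.
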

Therefore, even under very weak assumptions, at least linear register space is necessary.
Since it is not possible to implement general timestamp objects using sublinear space, it makes sense to look at restricted solutions.

Several methods have solutions that are simpler than the general case, if each process is allowed to execute it only once.
Examples are renaming and mutual exclusion algorithms, splitter or snapshot objects, or agreement problems.
Other problems, such as consensus or non-resettable test and set objects are inherently ``one-time''.
It is conceivable that if an implementation of such an algorithm uses timestamp objects, 
then in the ``one-shot'' version of that algorithm each process needs to obtain a timestamp only once.
Therefore, we study the space complexity of one-shot timestamp objects:
\begin{theorem}
\label{theorem:one-timeLbd}
\samepage
  Any one-shot unbounded timestamp object that satisfies non-deterministic solo-termination uses at least 
$\sqrt{2n}-\log n -O(1)$ registers.
\end{theorem}

This one-shot lower bound is a factor of approximately $2\sqrt{2}$ larger than the previous best known lower bound 
for the long-lived case \cite{EFR2008a}, 
and holds for historyless objects as well as registers as explained later.

\begin{theorem}
\label{theorem:one-timeUbd}
\samepage
There is a  wait-free implementation of one-shot timestamp objects that uses $2 \lceil \sqrt{n} \rceil$ registers.
\end{theorem}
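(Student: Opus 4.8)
The plan is to let a timestamp encode the calling operation together with a ``collected view'' of shared memory, and to let \texttt{compare} be a purely local predicate testing whether one view bears witness that the other operation has already acted. Fix $m=\lceil\sqrt n\rceil$ and an injection $p\mapsto(\rho_p,\gamma_p)\in\{1,\dots,m\}^2$ of the $n$ process ids into a $\sqrt n\times\sqrt n$ grid (possible since $m^2\ge n$). Use $2m=2\lceil\sqrt n\rceil$ registers, a ``row'' group $G_1,\dots,G_m$ and a ``column'' group $H_1,\dots,H_m$, each holding a set of process ids, initially empty. A \texttt{getTS} by $p$ (i) adds $p$ to $G_{\rho_p}$ and to $H_{\gamma_p}$, each via a read--then--write of that one register, and (ii) reads all $2m$ registers, returning the timestamp $t_p=(p,\vec u,\vec v)$ where $\vec u,\vec v$ are the $m$-tuples of sets just read. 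I would define \texttt{compare}$(t_1,t_2)$ to return true iff a ``chase'' in $t_2$'s view reaches $p_1$: start from the ids stored by $t_2$ at positions $G_{\rho_{p_1}}$ and $H_{\gamma_{p_1}}$; if $p_1$ appears, return true; otherwise continue from each id $q$ found, inspecting the ids $t_2$ stored at $H_{\gamma_q}$ and $G_{\rho_q}$, and so on, returning true exactly if $p_1$ is ever visited. The one-shot restriction is essential here: because $p$ writes only once its marks are ``fresh'', and because the map is injective the two registers $p$ touches coincide with those of any other process in at most one coordinate --- these are the facts that should let $\Theta(\sqrt n)$ registers suffice, matching the $\sqrt{2n}-\log n-O(1)$ lower bound of Theorem~\ref{theorem:one-timeLbd}.

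I would prove correctness in two parts. The easy part is the ``false'' obligation: if $g_1$ happens before $g_2$, then every step of $g_1$ --- including its reads of all registers --- precedes every step of $g_2$ --- including its writes --- so nothing $g_2$ writes can appear in $g_1$'s view; since $p_2$ is only ever written (by $g_2$) to $G_{\rho_{p_2}}$ and $H_{\gamma_{p_2}}$, the id $p_2$ is absent from $t_1$'s view entirely, so the chase looking for $p_2$ in $t_1$ cannot succeed and \texttt{compare}$(t_2,t_1)=$ false. The substantial part is the ``true'' obligation: if $g_1$ happens before $g_2$ then, since $g_1$'s two writes precede $g_2$'s collect, positions $G_{\rho_{p_1}}$ and $H_{\gamma_{p_1}}$ of $t_2$'s view hold ids written no earlier than $g_1$'s writes; I would argue that the chase terminates at $p_1$ by following these ids --- if $p_1$'s mark at one of the two positions has been overwritten, the overwriter acted after $g_1$'s write there, and by classifying whether it started before or after $g_1$ \emph{completed}, one shows that a chain of such overwriters leads back through the grid to $p_1$, using that a single overwriter can ``break'' only one of the two coordinates. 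Wait-freedom is then immediate: \texttt{getTS} performs $O(\sqrt n)$ reads plus $O(1)$ writes with no loops, \texttt{compare} is a loop-free local computation over the $O(\sqrt n)$ recorded values, and the register count is exactly $2\lceil\sqrt n\rceil$.

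The main obstacle is this ``true'' direction: the chase predicate must be strong enough to survive arbitrary concurrent overwriting of $g_1$'s two marks before $g_2$ collects, yet weak enough never to fire for a reversed pair, and it must be evaluable from the two timestamps alone (\texttt{compare} never sees the execution). I expect the delicate work to lie in choosing the right notion of ``the chain carried forward by overwriters'' --- in particular, ensuring it is recorded faithfully in the collected views so the recursion in \texttt{compare} can reconstruct it --- and in verifying that when \emph{both} of $p_1$'s positions get overwritten before $g_2$'s collect, the two resulting chains cannot simultaneously fail to lead back to $p_1$. This is precisely where the two-dimensional grid and the one-shot property have to be exploited in a tight, quantitative way; it may well force a more refined choice of what each register stores and what the timestamp records than the bare ``set of ids'' sketched above.
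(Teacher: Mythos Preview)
Your grid-based construction is entirely different from the paper's and, as sketched, is incorrect on the ``true'' direction. Here is a four-process counterexample with $m=2$. Place $p_1$ at $(1,1)$, $a$ at $(1,2)$, $b$ at $(2,1)$, $p_2$ at $(2,2)$. Schedule: $a$ reads $G_1=\emptyset$ and pauses; $b$ reads $G_2=\emptyset$, writes $\{b\}$ to $G_2$, reads $H_1=\emptyset$ and pauses; now $p_1$ runs a complete \getTS{} (so $G_1\gets\{p_1\}$, $H_1\gets\{p_1\}$) and returns; then $a$ resumes, writing $\{a\}$ to $G_1$ (obliterating $p_1$) and $\{a\}$ to $H_2$; then $b$ resumes, writing $\{b\}$ to $H_1$ (obliterating $p_1$); finally $p_2$ runs a complete \getTS{}. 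In $t_2$'s view one has $G_1=\{a\}$, $G_2=\{b,p_2\}$, $H_1=\{b\}$, $H_2=\{a,p_2\}$, so your chase from $G_1\cup H_1=\{a,b\}$ reaches only $\{a,b,p_2\}$ and never $p_1$; \compare{$t_1,t_2$} returns false even though $g_1\rightarrow g_2$. Your intuition that ``a single overwriter can break only one of the two coordinates'' is correct per overwriter, but two concurrent overwriters with stale reads suffice to erase both of $p_1$'s marks, and since neither $a$ nor $b$ ever \emph{saw} $p_1$, no enrichment of what the registers store will let the chase through \emph{their} positions recover it. This is not a detail to be patched; the view-collect-and-chase idea does not survive non-atomic read--then--write on shared registers.

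The paper's algorithm is structurally unrelated. Timestamps are pairs $(rnd,turn)$ compared lexicographically, and \compare{} touches no shared memory. The $m$ registers $R[1],\dots,R[m]$ are used in \emph{phases}: in phase $\varphi$ register $R[\varphi]$ holds a validity vector, and a \getTS{} with $myrnd=\varphi$ walks $R[1],\dots,R[\varphi-1]$ looking for the first still-valid entry $R[j]$, invalidates it, and returns $(\varphi,j)$; if none is valid it takes a snapshot and opens phase $\varphi{+}1$ by writing $R[\varphi+1]$, returning $(\varphi{+}1,0)$. Correctness reduces to showing that once some \getTS{} has invalidated $R[j]$ in phase $\varphi$, every later \getTS{} with the same $myrnd$ fails at iteration $j$ (this is where the seemingly redundant overwrite in line~11 earns its keep). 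The space bound is an accounting argument: each completed phase $\varphi$ incurs exactly $\varphi$ invalidation writes, and a charging scheme shows there are at most $2M$ invalidation writes in total, whence $\sum_{\varphi=1}^{\Phi}\varphi\le 2M$ and $\Phi<2\sqrt{M}$, so $\lceil 2\sqrt{M}\rceil$ registers suffice.
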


Our lower bound proofs are based on covering arguments (as introduced by Burns and Lynch \cite{BurnsL93}), 
where one constructs an execution in which processes are poised to write to some registers 
(the processes are said to \emph{cover} these registers). 
We rely on a lemma by Ellen, Fatourou and Ruppert \cite{EFR2008a} that shows how in a situation where some processes cover a set $R$ of registers, other processes can be forced to write outside of $R$.
In order to obtain our improved lower bound for the long-lived case, we look at very long executions in which ``similar'' coverings are obtained over and over again.  
Our lower bound proof for the one-shot case is inspired by a geometric interpretation of the covering structure of configurations. 
The one-shot timestamps upper bound exploits the structure exposed by the lower bound.

\section{Preliminaries}\label{section:preliminaries}

 We consider an asynchronous shared memory system with a set $\mathcal{P}$ $= \{p_1,\dots,p_n\}$ 
of $n$ processes and a set $\mathcal{R} = \{ r_1,\dots,r_m \}$ of $m$ registers that support atomic read and write operations.
Processes can only communicate via those operations on shared registers.
We assume that processes can make arbitrary non-deterministic decisions, but we require that the result of any  execution is correct, 
meaning that the responses from method calls match the specification of timestamp objects.

A \emph{configuration} $C$ is a tuple $(s_1,\dots,s_n,v_1,\dots,v_m)$, denoting that process $p_i$, $1\leq i\leq n$, is in state $s_i$, and register $r_j$, $1\leq j\leq m$, has value $v_j$.
Configurations will be denoted by capital letters, and the initial configuration is denoted  $C_0$.

An implementation of a method satisfies \emph{non-deterministic solo-termination}, if for any configuration $C$ and any process $p_i$, $1\leq i\leq n$, there is an execution in which no process other than $p_i$ takes any steps, and $p_i$ finishes its method call within a finite number of steps \cite{FHS1998a}.
Hence, a process is guaranteed to finish its method call with positive probability, whenever there is no interference from other processes.
For deterministic algorithms, non-deterministic solo-termination is the same as obstruction-freedom and weaker than wait-freedom.
Both our lower bound results hold for timestamp objects that satisfy this progress condition, 
our algorithm, however, satisfies the stronger wait-free progress property.

A \emph{schedule} $\sigma$ is a (possibly infinite) sequence of process indices.
An \emph{execution} $(C;\sigma)$ is a sequence of steps beginning in configuration $C$ and moving through successive configurations one at a time. 
At each step, the next process $p_i$ indicated in the schedule $\sigma$, takes the next step in its program.
Since our computation model is non-deterministic, we fix the non-deterministic decision made by $p_i$ in our lower bound proofs.
We use an arbitrary (but fixed) one that guarantees that $p_i$ terminates within a bounded number of steps if it executes alone.
If $\sigma$ is a finite schedule, the final configuration of the execution $(C;\sigma)$ is denoted $\sigma(C)$.
If $\sigma$ and $\pi$ are finite schedules then  $\sigma\pi$ denotes the concatenation of $\sigma$ and $\pi$.
Let $P$ be a set of processes,  and $\sigma$ a schedule. 
If only indices of processes in $P$ appear in $\sigma$, 
then $\sigma$ is a \emph{$P$-only} schedule and any execution $(C; \sigma)$ is a \emph{$P$-only} execution. 
If $|P| =1$, a $P$-only  schedule $\sigma$ is a \emph{solo} schedule and any execution $(C; \sigma)$ is a \emph{solo} execution.

A configuration, $C$, is \emph{reachable} if there exists a finite schedule, $\sigma$, such that $\sigma(C_0)= C$. 

Any execution $(C;\sigma)$ defines a partial \emph{happens before} order ``$\rightarrow$'' on the method calls that occur during $(C;\sigma)$.
A method call $m_1$ happens before $m_2$, denoted $m_1\rightarrow m_2$, if the response of $m_1$ occurs before the invocation of $m_2$.

An unbounded timestamp object supports two methods, \getTS{} and \compare{}. 
The first one outputs a \emph{timestamp} without receiving any input; 
the \compare method receives any two timestamps as inputs, and returns true or false.
If  two \getTS{} instances $g_1$ and $g_2$ return $t_1$ and $t_2$, respectively, 
and $g_1\rightarrow g_2$, then \compare{$t_1,t_2$} returns true and \compare{$t_2$,$t_1$} returns false.

A timestamp object is \emph{long-lived}, 
if each process is allowed to invoke \getTS{} multiple times; 
it is \emph{one-shot} when each process is allowed to invoke \getTS{} only once.

Our lower bounds are  based on covering arguments.
We will construct executions, at the end of which processes are poised to write, i.e., they \emph{cover} several registers.
If other process are scheduled after this and if they write only to the same set of registers, their trace can be eliminated.
More precisely,
we say process $p_i$ \emph{covers} register $r_j$ in a configuration $C$, 
if there is a non-deterministic decision 
such that the one step execution $\bigl(C;(i)\bigr)$ is a write to register $r_j$. 
A set of processes $P$ covers a set of registers $R$ if for every register $r\in R$ there is a process $p\in P$ 
such that $p$ covers $r$. 

For a process set $P$, let $\pi_{P}$ denote an arbitrary (but fixed) permutation of $P$ 
(for example the one that orders processes by their ID).
If the process set $P$ covers the register set $R$ in configuration $C$, 
the information held in the registers in $R$ can be overwritten by letting all processes in $P$ execute exactly one step.
Such an execution by the processes in $P$ is called a \emph{block-write}.
More precisely, 
a \emph{block-write} by $P$ to $R$ is an execution $(C;\pi_{P})$. 

Two configurations $C_1=(s_1,\ldots,s_n,r_1,\ldots,r_m)$ and 
$C_2=(s'_1,\ldots,s'_n,r'_1,\ldots,r'_m)$ are \emph{indistinguishable} to process $p_i$ 
if $s_i=s'_i$ and $r_j=r'_j$ for $1\leq j\leq n$. 
If $S$ is a set of processes, and for every process $p\in S$, $C_1$ and $C_2$ are indistinguishable to $p$, 
then for any $S$-only schedule $\sigma$,  $\sigma(C_1)$ and $\sigma(C_2)$ are indistinguishable to $p$. 

Our first lower lower bound relies on a lemma 
which is based on the following observation. 
Suppose in configuration $C$ there are three disjoint sets of processes $B_0,B_1,B_2$, 
each covering a set ${R}$ of registers, 
and $q_0$ and $q_1$ are processes not in $B_0\cup B_1\cup B_2$.
Let $\sigma_i$, $i \in \{0,1\} $, denote an arbitrarily long $\{q_i\}$-only schedule.
If, for  $i \in \{0,1\} $, in the execution $(C;\pi_{B_i}\sigma_i)$, $q_i$ does not write outside $R$, 
then the configurations $\pi_{B_i}\sigma_i(C)$  and $\pi_{B_{i-1}}\sigma_{i-1}\pi_{B_i}\sigma_i(C)$ 
are indistinguishable to $q_i$. 
Furthermore, after a subsequent third block write by $B_2$ all trace left inside of $R$ can also be obliterated.
Thus, the configurations $C_0=\pi_{B_0} \sigma_0\pi_{B_1} \sigma_1\pi_{B_2}(C)$ 
and $C_1=\pi_{B_1} \sigma_1\pi_{B_0} \sigma_0\pi_{B_2}(C)$ are indistinguishable to all processes, 
unless at least one of either $q_0$ or $q_1$ writes outside ${R}$.
If, however, the solo executions by $q_0$ and $q_1$ both contain complete \getTS{} calls, 
then one happens after the other and so processes have to be able to distinguish between $C_0$ and $C_1$.
Hence, either $q_0$ or $q_1$ writes outside ${R}$ in both of the executions
$(C;\pi_{B_{0}}\sigma_{0} \pi_{B_{1}} \sigma_{1})$ and 
$(C;\pi_{B_{1}}\sigma_{1} \pi_{B_{0}} \sigma_{0})$.

The same idea works if we replace $q_0$ and $q_1$ with disjoint sets of processes,
as was done in the original version of this lemma due to 
Ellen, Fatourou, and Rupert \cite{EFR2008a}.
We state a simplified form here that suffices for our results and uses the 
form and notation of this paper.

\begin{lemma}[\cite{EFR2008a}]
\label{lemma:EFR2008a-main}
  Consider any timestamp implementation from registers that satisfies non-deterministic solo-termination 
  and let $C$ be a reachable configuration. 
  Let $B_0$, $B_1$, $B_2, U_0, U_1$ be disjoint sets of processes, where in $C$ each of $B_0$, $B_1$, and $B_2$ cover a set ${R}$ of registers.
  Then there exists $i\in\{0,1\}$ such that every $U_i$-only execution starting from $C_i=\pi_{B_i}(C)$ 
  that contains a complete \getTS{} method writes to some register not in ${R}$.
\end{lemma}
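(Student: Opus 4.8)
The plan is to formalize exactly the informal argument sketched in the paragraph immediately preceding the lemma statement. Fix the configuration $C$, the covering sets $B_0,B_1,B_2$, and the disjoint spectator sets $U_0,U_1$. Suppose, for contradiction, that the conclusion fails: there is a $U_0$-only execution $(C_0;\tau_0)$ containing a complete \getTS{} call that writes only inside ${R}$, and likewise a $U_1$-only execution $(C_1;\tau_1)$ containing a complete \getTS{} call that writes only inside ${R}$, where $C_i=\pi_{B_i}(C)$. Here I would take the liberty of extending $\tau_0$ and $\tau_1$ (using non-deterministic solo-termination) so that each solo run actually completes at least one \getTS{} method call; the hypothesis only guarantees ``contains a complete \getTS{} call,'' so I would truncate each $\tau_i$ right after its first complete \getTS{} call returns, which does not affect the ``writes only inside ${R}$'' property.

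Next I would construct the two executions that must look identical. Define
\[
  E_0 = \bigl(C;\ \pi_{B_0}\,\tau_0\,\pi_{B_1}\,\tau_1\,\pi_{B_2}\bigr),
  \qquad
  E_1 = \bigl(C;\ \pi_{B_1}\,\tau_1\,\pi_{B_0}\,\tau_0\,\pi_{B_2}\bigr).
\]
The key claim is that the final configurations $D_0$ and $D_1$ of $E_0$ and $E_1$ are indistinguishable to every process. I would prove this in layers. First, because $B_1$ covers ${R}$ in $C$, and the segment $\pi_{B_0}\tau_0$ touches only registers in ${R}$ (the block-write $\pi_{B_0}$ writes inside ${R}$ by definition of covering, and $\tau_0$ writes only inside ${R}$ by assumption), the configuration $\pi_{B_1}(C)$ is indistinguishable from $\pi_{B_1}\pi_{B_0}\tau_0(C)$ \emph{restricted to the registers and to the processes of $B_1\cup U_1$} — more carefully, I would argue that $\tau_1$ behaves identically whether run from $C_1=\pi_{B_1}(C)$ or from $\pi_{B_1}\pi_{B_0}\tau_0(C)$, because $q\in U_1$ has the same local state in both (it has taken no steps) and the block-write $\pi_{B_1}$ overwrites everything in ${R}$ that $\pi_{B_0}\tau_0$ could have changed. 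So the register contents and the $B_1\cup U_1$-states agree after $\tau_1$ in both orderings. By symmetry the $B_0\cup U_0$-states agree after $\tau_0$ in both orderings. Finally, the trailing block-write $\pi_{B_2}$ overwrites all of ${R}$, erasing the only registers where $E_0$ and $E_1$ could still differ; since all writes in both executions lie within ${R}$ (again by covering and by the assumption on $\tau_0,\tau_1$), after $\pi_{B_2}$ the register contents are identical, and every process's local state is identical (processes in $U_0$ ended their steps in $\tau_0$, processes in $U_1$ in $\tau_1$, processes in $B_i$ after their single step, processes elsewhere never moved). Hence $D_0$ and $D_1$ are indistinguishable to all $n$ processes, and in particular the same $\mathcal{R}_{\ge}$-extensions are possible.

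To derive the contradiction I would invoke the specification of timestamp objects. In $E_0$, the complete \getTS{} call inside $\tau_0$ finishes before the complete \getTS{} call inside $\tau_1$ even begins, so the first happens-before the second; in $E_1$ the reverse holds. Letting $t_0,t_1$ be the timestamps these two calls return, in $E_0$ any subsequent \compare{$t_0,t_1$} must return true, while in $E_1$ any subsequent \compare{$t_0,t_1$} must return false. Run such a \compare{} call solo (by some process that has all needed timestamps, e.g.\ one of the \getTS{}-callers, or recall \compare{} takes the timestamps as explicit inputs) starting from $D_0$ and from $D_1$: by indistinguishability this solo execution produces the same return value, contradicting the requirement that it be true in one case and false in the other. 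Therefore the assumption was false, and for at least one $i\in\{0,1\}$ every $U_i$-only execution from $C_i$ containing a complete \getTS{} must write outside ${R}$.

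I expect the main obstacle to be the bookkeeping in the indistinguishability argument: carefully tracking which processes' states and which registers agree after each of the three segments, and making precise the claim that $\tau_i$ ``sees the same thing'' in the two interleavings. The cleanest way is to apply the indistinguishability transfer principle stated in the Preliminaries — if $C_1,C_2$ agree on registers and on the states of a process set $S$, then any $S$-only schedule preserves indistinguishability — iteratively: once to push $\tau_1$ across $\pi_{B_0}\tau_0$ (with $S=B_1\cup U_1$, using that $\pi_{B_1}$ equalizes ${R}$), once symmetrically for $\tau_0$, and once for the final $\pi_{B_2}$ which equalizes ${R}$ and hence makes the two configurations agree on all registers and on every process's state. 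A minor subtlety worth a sentence is the possibility that $\tau_0$ or $\tau_1$ is empty or that its first \getTS{} completes immediately; non-deterministic solo-termination guarantees a solo extension that completes a \getTS{}, and truncating at the first completion keeps all writes inside ${R}$, so this case is handled uniformly.
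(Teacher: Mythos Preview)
Your proposal is correct and follows essentially the same approach as the paper: the paper does not give a formal proof of this lemma (it is cited from \cite{EFR2008a}), but the informal argument sketched in the paragraph preceding the lemma statement is exactly the contradiction-via-swapped-interleavings argument you formalize, including the role of the third block-write $\pi_{B_2}$ to erase the remaining trace inside $R$ and the appeal to \compare{} on the two returned timestamps. Your added care about truncating $\tau_i$ after the first complete \getTS{} and about applying the indistinguishability transfer principle layer by layer is appropriate bookkeeping that the paper's sketch leaves implicit.
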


Our second lower bound relies on a stronger lemma that is proved by inductively applying Lemma \ref{lemma:EFR2008a-main}.

%

\section{A Space Lower Bound for Long-Lived Timestamps}
\label{section:long-lived-lwb}

We assume that a timestamp object is used in an algorithm where each process calls \getTS{} infinitely many times.
Actually, the number of \getTS{} calls can be bounded (by a function growing exponentially in $n$), 
but for convenience we pass on computing this bound.
Ellen et al.\ used their lemma in order to inductively construct executions at the end of which $k$ registers are covered by $\Omega(\sqrt n-k)$ processes, where $k$ is bounded by $O(\sqrt n)$.
The lemma is used in the inductive step to show that in some execution following a block-write, many of the non-covering processes can be forced to write outside the set of covered registers.
By the pigeon hole principle, one additional (previously not covered) register can then be covered with many processes.
With this idea, however, the number of processes covering one register is reduced by one in each inductive step, and thus it is not hard to see that the technique cannot lead to a lower bound beyond $\Omega(\sqrt n)$.

In our proof, 
rather than requiring that many processes cover the same register, we limit the number of processes covering the same register to three.
In particular, we define a $(3,k)$-configuration to be one where $k$ processes are covering registers, but no register is covered by more than three of them.
Using an argument reminiscent of that used by Burns and Lynch~\cite{BurnsL93}, 
we show that if there is an execution that leads to some $(3,k)$-configuration, we can find a (much longer) execution, 
during which at least two $(3,k)$-configurations $C_1$ and $C_2$ are encountered that are similar in the sense that in both configurations each register is covered by the same number of processes.
In addition, the execution $(C_1;\sigma)$ that leads from $C_1$ to $C_2$ starts with three block-writes to the registers that are covered by three processes, each.
We then apply Lemma~\ref{lemma:EFR2008a-main} to see that we can insert a $p$-only schedule for some unused process $p$ 
into the schedule $\sigma$ after one of the block-writes to get the new schedule $\sigma'$, 
such that at the end of the execution $(C_1;\sigma')$ process $p$ is poised to write outside of the registers that are 3-covered in $C_1$.
Since the other two block-writes overwrite $p$'s trace in $(C_1;\sigma')$,
no process other than $p$ can distinguish between $\sigma'(C_1)$ and $\sigma(C_1)=C_2$.
It follows that in $\sigma'(C_1)$ process $p$ covers a register that is covered by at most 2 other processes.
Hence, we have obtained a $(3,k+1)$-configuration.
We can do this for $k\leq n/2$, so in the end we obtain a $(3,\lfloor{n/2}\rfloor)$-configuration.
Clearly, this means that the number of registers is at least $\lfloor{n/6}\rfloor$.

The signature of a configuration $C$, denoted \sig{C}, 
is a tuple $(c_1,c_2,\ldots , c_m)$ where every $c_i$ is the number of processes  covering the $i$-th register in $C$.
The set of registers whose corresponding entry in $\sig{C}$ is equal to $3$ is  denoted $\tre{3}{C}$.
(In terms of signatures, a configuration $C$ is a $(3,k)$-configuration if $\sig{C}=(c_1,c_2,\ldots , c_m)$ satisfies $\sum_{i=1}^{m} c_i  =  k$ and $c_i \leq 3$ for every $1 \leq i \leq m$.) 
Notice that in any $(3,k)$-configuration there are at least $\lceil k/3 \rceil $ registers covered.
Configuration $C$ is \emph{quiescent} if in $C$ no process has started but not finished executing a \getTS{} or \compare{} call.

\begin{lemma}\label{lemma:signature}\samepage 
Let $P$ be an arbitrary set of processes. 
Suppose for every reachable quiescent configuration $D$ there exists a $P$-only schedule $\sigma$ such that $\sigma(D)$ is a $(3,k)$-configuration.
Then for any  quiescent configuration $D$, there are two $(3,k)$-configurations $C_0$ and $C_1$,  and $P$-only schedules $\gamma_0$, $\gamma_1$, and $\eta$ such that:

\begin{description}
\item[(a)] $\gamma_0(D)= C_0$,
\item[(b)] $\gamma_1(C_0)= C_1$,
\item[(c)] $\sig{C_0}= \sig{C_1}$, and
\item[(d)] $\gamma_1=\pi_{B_0}\pi_{B_1}\pi_{B_2}\eta$, where $B_0,B_1$ and $B_2$ are disjoint sets of processes each covering $\tre{3}{C_0}$.
 \end{description}
\end{lemma}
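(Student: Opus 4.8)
The plan is to build a single very long $P$-only execution, and extract the two configurations $C_0$ and $C_1$ from it by a pigeonhole argument on signatures, following the Burns--Lynch style. Since the signature $\sig{C}$ of any configuration is a tuple of $m$ nonnegative integers summing to $k$ (when $C$ is a $(3,k)$-configuration), there are only finitely many possible signatures — at most $(k+1)^m$ of them. So the key is to produce an execution that passes through more than this many $(3,k)$-configurations, and moreover whose structure between consecutive such configurations has the right form (d). Two configurations visited with the same signature will then be our candidates for $C_0$ and $C_1$.

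First I would describe the iterative construction. Start from the given quiescent configuration $D=D_0$. By hypothesis there is a $P$-only schedule taking $D_0$ to a $(3,k)$-configuration $E_1$. Now from $E_1$ I want to reach the next $(3,k)$-configuration via a schedule of the prescribed shape $\pi_{B_0}\pi_{B_1}\pi_{B_2}\eta$. The issue is that $E_1$ itself may have only one set of processes covering $\tre{3}{E_1}$, not three. To fix this, before taking the three block-writes I let the covering processes at the relevant registers complete their writes and then re-enter \getTS{} calls; running them far enough, and invoking the hypothesis again on the resulting quiescent configuration, I can arrange a quiescent configuration from which a $P$-only schedule reaches a $(3,k)$-configuration in which the set $\tre{3}{\cdot}$ is covered three times over — i.e.\ I split the processes that will cover the 3-covered registers into three disjoint batches $B_0, B_1, B_2$ by stalling two of the batches just before their writes. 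Concretely: from a $(3,k)$-configuration, advance it so that for each register in $\tre{3}{\cdot}$ we have three distinct processes poised to write it (possible because the hypothesis lets us re-cover after block-writes, and quiescent configurations are reachable after letting pending calls finish). Call this configuration $C_0$. Then $\gamma_1 := \pi_{B_0}\pi_{B_1}\pi_{B_2}\eta$, where $\eta$ is the continuation (again supplied by the hypothesis, after reaching a quiescent configuration following the three block-writes) that drives the system to the next $(3,k)$-configuration $C_1 := \gamma_1(C_0)$. Iterating this gives an arbitrarily long sequence $C_0^{(1)}, C_0^{(2)}, \dots$ of $(3,k)$-configurations, each reached from the previous by a schedule of the form in (d).

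Second, apply pigeonhole: among the first $(k+1)^m + 1$ configurations produced, two of them, say $C_0^{(a)}$ and $C_0^{(b)}$ with $a<b$, have equal signatures. Rename $C_0 := C_0^{(a)}$, take $C_1 := C_0^{(b)}$, let $\gamma_0$ be the composite $P$-only schedule from $D$ to $C_0$, let $\gamma_1$ be the composite $P$-only schedule from $C_0$ to $C_1$, and note that $\gamma_1$ begins with $\pi_{B_0}\pi_{B_1}\pi_{B_2}$ where $B_0,B_1,B_2$ are the three batches poised at the registers of $\tre{3}{C_0}$ — giving (a), (b), (c) and (d), with $\eta$ the remainder of $\gamma_1$ after the three block-writes.

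The main obstacle I expect is the bookkeeping in the iterative step: ensuring that after a block-write (which destroys the covering structure) one can always return to a quiescent configuration and re-apply the hypothesis to recover a $(3,k)$-configuration, and in particular one whose 3-covered register set is covered by three disjoint process batches. This needs care because the hypothesis only guarantees $\emph{some}$ $(3,k)$-configuration, with no control on $\tre{3}{\cdot}$; I would handle it by first reaching a $(3,k)$-configuration, then letting pending \getTS{} calls finish to get quiescence, then re-invoking the hypothesis, and repeating/stalling enough processes just before their writes to manufacture the triple covering of whichever registers end up being 3-covered. Everything else — the finite signature count, the pigeonhole, the indistinguishability bookkeeping — is routine.
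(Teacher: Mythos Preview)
Your overall strategy --- iterate to produce an infinite sequence of $(3,k)$-configurations reached by $P$-only schedules of the form $\pi_{B_0}\pi_{B_1}\pi_{B_2}\eta$, then pigeonhole on the finitely many possible signatures --- is exactly the paper's approach.

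However, the ``main obstacle'' you identify is not an obstacle, and the workaround you propose for it is both unnecessary and vague. You write that a $(3,k)$-configuration $E_1$ ``may have only one set of processes covering $\tre{3}{E_1}$, not three,'' and then sketch a procedure involving letting processes finish and re-enter \getTS{} calls and ``stalling two of the batches'' to manufacture three disjoint covering sets. But recall the definitions: $\tre{3}{C}$ is precisely the set of registers whose signature entry equals $3$, i.e.\ each register in $\tre{3}{C}$ is covered by \emph{exactly three} processes. So for any $(3,k)$-configuration $E_i$, you get three pairwise disjoint sets $B_{0,i},B_{1,i},B_{2,i}$ each covering $\tre{3}{E_i}$ for free, by assigning one of the three coverers of each register to each set. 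No extra construction is needed.

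With that observation the iteration becomes clean: from $E_i$, perform the three block-writes $\pi_{B_{0,i}}\pi_{B_{1,i}}\pi_{B_{2,i}}$, then run a $P$-only schedule $\alpha_i$ in which every process in $P$ with a pending operation finishes it (reaching a quiescent configuration), then apply the hypothesis to reach the next $(3,k)$-configuration $E_{i+1}$. Pigeonhole over signatures gives $j<k$ with $\sig{E_j}=\sig{E_k}$; set $C_0=E_j$, $C_1=E_k$, and $\gamma_1$ automatically has the required prefix $\pi_{B_{0,j}}\pi_{B_{1,j}}\pi_{B_{2,j}}$. This is the paper's proof.
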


\begin{proof}
We inductively  define an infinite sequence of schedules $\lambda_0,\delta_0,\lambda_1,\delta_1,\ldots,\lambda_{i},\delta_{i},\ldots$
and reachable $(3,k)$-configurations $E_0, E_1,E_2,\dots$, where
$E_{i+1}=\lambda_i\delta_i(E_{i})$, as follows. 
$E_0$ is the $(3,k)$-configuration $\sigma(D)$ guaranteed by the hypothesis of the lemma.
Let $B_{0,i},B_{1,i}$ and $B_{2,i}$ be disjoint sets of processes each covering $\tre{3}{E_{i}}$. 
Execution $(E_{i};\pi_{B_{0,i}}\pi_{B_{1,i}}\pi_{B_{2,i}})$ consists of three consecutive block-writes to $\tre{3}{E_{i}}$ 
by the processes in $B_{0,i}$, $B_{1,i}$, and $B_{2,i}$, respectively. 
Schedule $\lambda_{i}$ is the concatenation of the sequence of permutations $\pi_{B_{0,i}}\pi_{B_{1,i}}\pi_{B_{2,i}}$ 
and some $P$-only schedule $\alpha_{i}$
in which every process in $P$ with a pending operation, 
finishes that pending operation.  
Thus, configuration $\lambda_{i}(E_{i}) = \pi_{B_{0,i}}\pi_{B_{1,i}}\pi_{B_{2,i}}\alpha_{i}(E_{i})$ is  quiescent.
So by the hypothesis 
there exists a schedule $\delta_{i}$ such that $E_{i+1} = \lambda_{i}\delta_{i}(E_{i})$ is again a $(3,k)$-configuration. 

Since the set of signatures is finite, there are two indices $j<k$, such that $\sig{E_j}=\sig{E_k}$. 
Fix two such indices $j$ and $k$. 
Let  $\gamma_0=\sigma\lambda_0\delta_0\lambda_1\delta_1\lambda_2\delta_2\ldots\lambda_{j-1}\delta_{j-1}$ and 
$\gamma_1=\lambda_{j}\delta$ where
$\delta=\delta_{j}\lambda_{j+1}\delta_{j+1}\ldots\lambda_{k-1}\delta_{k-1}$.
Furthermore, let $C_0 = \gamma_0(D)$ and $C_1 = \gamma_1(C_0)$.
By definition, the configurations $C_0$ and $C_1$ satisfy (a) and (b).
Moreover, by construction $C_0=E_j$ and $C_1=E_k$ and since $\sig{E_j}=\sig{E_k}$, (c) is satisfied.
Finally, let $\eta= \alpha_j\delta$. Then, $\gamma_1=\pi_{B_{0,j}}\pi_{B_{1,j}}\pi_{B_{2,j}}\eta$, 
where $B_{0,j},B_{1,j},B_{2,j}$ are disjoint sets of processes each covering $\tre{3}{E_j}=\tre{3}{C_0}$.
This proves (d). 
\end{proof}

Let $\mathcal{P}_k$ denote the set $\{p_1,\dots,p_k\}$ and $P_0$ denote the emptyset of processes.

\begin{lemma}\label{lemma:long-lived-induction}
  For every $0\leq k \leq \lfloor n/2 \rfloor$ and for every reachable quiescent configuration $D$, 
  there exists a $\mathcal{P}_{2k}$-only schedule $\sigma_k$ such that $\sigma_k(D)$ is a $(3,k)$-configuration. 
\end{lemma}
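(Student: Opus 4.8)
The plan is to prove this by induction on $k$. The base case $k=0$ is trivial: for any reachable quiescent configuration $D$, take $\sigma_0$ to be the empty schedule (a $\mathcal{P}_0$-only schedule), and $D = \sigma_0(D)$ is a $(3,0)$-configuration, since a configuration with no covered registers vacuously satisfies $\sum c_i = 0$ and $c_i \le 3$.

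For the inductive step, assume the statement holds for some $k < \lfloor n/2\rfloor$; I want to prove it for $k+1$. Fix a reachable quiescent configuration $D$. The inductive hypothesis says that for \emph{every} reachable quiescent configuration $D'$ there is a $\mathcal{P}_{2k}$-only schedule taking $D'$ to a $(3,k)$-configuration. This is exactly the hypothesis of Lemma~\ref{lemma:signature} with $P = \mathcal{P}_{2k}$, so I can invoke it to obtain two $(3,k)$-configurations $C_0, C_1$ and $\mathcal{P}_{2k}$-only schedules $\gamma_0, \gamma_1, \eta$ with $\gamma_0(D) = C_0$, $\gamma_1(C_0) = C_1$, $\sig{C_0} = \sig{C_1}$, and $\gamma_1 = \pi_{B_0}\pi_{B_1}\pi_{B_2}\eta$ where $B_0, B_1, B_2$ are disjoint sets of processes each covering $\tre{3}{C_0}$.

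Now I apply Lemma~\ref{lemma:EFR2008a-main} at configuration $C_0$ with the set $R = \tre{3}{C_0}$, the three covering sets $B_0, B_1, B_2$, and with $U_0, U_1$ each a singleton containing a fresh process not yet used. Since $|\mathcal{P}_{2k}| = 2k < 2\lfloor n/2\rfloor \le n$, and the $B_i$'s are subsets of $\mathcal{P}_{2k}$ (they are chosen by the $P$-only schedules of Lemma~\ref{lemma:signature}, hence consist of processes in $\mathcal{P}_{2k}$), there are at least two processes outside $\mathcal{P}_{2k}$ available, say $p_{2k+1}$ and $p_{2k+2}$; let $U_0 = \{p_{2k+1}\}$, $U_1 = \{p_{2k+2}\}$. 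The lemma gives an index $i \in \{0,1\}$ such that every $U_i$-only execution from $\pi_{B_i}(C_0)$ containing a complete \getTS{} call writes outside $R = \tre{3}{C_0}$. Let $p$ be the unique process in $U_i$; run $p$ solo from $\pi_{B_i}(C_0)$ until it is poised to perform its first write outside $\tre{3}{C_0}$ — such a step exists because a solo run of $p$ must complete a \getTS{} call (by non-deterministic solo-termination, fixing the non-deterministic choices as in the model), and by the lemma that completed call forces a write outside $R$, so there is a first such write. Call this $p$-only schedule $\rho$, and insert it into $\gamma_1$ right after $\pi_{B_i}$: define $\sigma_{k+1} = \gamma_0\,\pi_{B_0}\cdots\pi_{B_{i}}\,\rho\,\pi_{B_{i+1}}\cdots\pi_{B_2}\,\eta$ starting from $D$ (here I reorder so that the remaining block-writes and $\eta$ follow). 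Because $\rho$ writes only inside $\tre{3}{C_0}$ except for its final poised step (which is not executed), and because the subsequent block-writes by the remaining $B_j$'s overwrite every register in $\tre{3}{C_0}$, the execution $(C_0;\,\pi_{B_0}\cdots\rho\cdots\pi_{B_2}\eta)$ is indistinguishable from $(C_0;\gamma_1)$ to every process other than $p$; hence the resulting configuration has the same signature as $C_1$, which equals $\sig{C_0}$, on all registers, while $p$ additionally covers some register $r\notin\tre{3}{C_0}$. Since $r\notin\tre{3}{C_0}$, in $C_0$ (equivalently $C_1$) it was covered by at most $2$ processes, so after adding $p$ it is covered by at most $3$; all other entries are unchanged and still $\le 3$, and the total count rose by exactly one to $k+1$. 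Thus $\sigma_{k+1}(D)$ is a $(3,k+1)$-configuration, and $\sigma_{k+1}$ uses only processes in $\mathcal{P}_{2k}\cup\{p\}\subseteq\mathcal{P}_{2k+2}=\mathcal{P}_{2(k+1)}$, completing the induction.

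The main obstacle is the bookkeeping in the indistinguishability argument: I must verify carefully that inserting $\rho$ after the $i$-th block-write, and letting the two other block-writes plus $\eta$ follow, genuinely preserves the signature on all registers and genuinely leaves $p$ poised outside $\tre{3}{C_0}$. The delicate points are (i) that $\rho$ touches no register outside $\tre{3}{C_0}$ before its final (unexecuted) step — this needs the "first write outside $R$" choice and the conclusion of Lemma~\ref{lemma:EFR2008a-main}; (ii) that the register $r$ that $p$ ends up covering is not in $\tre{3}{C_0}$, so that its covering count was $\le 2$ before — again immediate from where we stopped $\rho$; and (iii) that no process other than $p$ can tell the difference, which follows because the surviving two block-writes restore the contents of every register in $\tre{3}{C_0}$ to what they would be in $C_1$ and $p$'s state change is invisible to everyone else. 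One should also note that $\eta$ being a $\mathcal{P}_{2k}$-only schedule does not involve $p$, so $p$ remains poised throughout the tail of the execution; and that the final configuration need not be quiescent, but the statement of the lemma only asks for a $(3,k+1)$-configuration, not a quiescent one, so this is fine.
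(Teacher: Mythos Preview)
Your proposal is correct and follows essentially the same approach as the paper: induct on $k$, invoke Lemma~\ref{lemma:signature} with $P=\mathcal{P}_{2k}$ to obtain $C_0,C_1$ with equal signatures and the decomposition $\gamma_1=\pi_{B_0}\pi_{B_1}\pi_{B_2}\eta$, then use Lemma~\ref{lemma:EFR2008a-main} with two fresh singleton sets to force one of $p_{2k+1},p_{2k+2}$ to be poised outside $\tre{3}{C_0}$, and splice that truncated solo run into $\gamma_1$. The only cosmetic difference is that the paper reorders the first two block-writes (writing $\pi_{B_i}\lambda\pi_{B_{1-i}}\pi_{B_2}\eta$) whereas you keep the order $\pi_{B_0}\cdots\pi_{B_i}\rho\pi_{B_{i+1}}\cdots\pi_{B_2}\eta$; both work, but note that in your version when $i=1$ the solo run $\rho$ was defined from $\pi_{B_1}(C_0)$ yet is executed from $\pi_{B_0}\pi_{B_1}(C_0)$---these are indistinguishable to $p$ so $\rho$ behaves identically, which you should say explicitly. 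Also, your remark about ``the surviving two block-writes'' is only accurate when $i=0$; for $i=1$ only $\pi_{B_2}$ remains, but one block-write already suffices to obliterate $\rho$'s trace in $\tre{3}{C_0}$.
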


\begin{proof}
The proof is by induction on $k$.
For $k=0$ the claim is immediate by choosing $\sigma_0$ to be the empty schedule. 

Let $k \geq 1$, and let $D$ be an arbitrary reachable quiescent configuration. 
By the induction hypothesis, for every reachable quiescent configuration $C$, there
exists a $\mathcal{P}_{2k-2}$-only schedule $\sigma_{k-1}$, such that $\sigma_{k-1}(C)$ 
is a $(3,k-1)$-configuration.  
Hence, by Lemma~\ref{lemma:signature} with $P=\mathcal{P}_{2k-2}$ 
there are two reachable configurations $C_0$ and $C_1$, and $\mathcal{P}_{2k-2}$-only schedules $\gamma_0, \gamma_1$, and $\eta$,
such that $\gamma_0(D)=C_0$, $\gamma_1(C_0)=C_1$, $\sig{C_0}= \sig{C_1}$, and $\gamma_1=\pi_{B_0}\pi_{B_1}\pi_{B_2}\eta$, where  $B_0,B_1$ and $B_2$ are disjoint sets of processes, each covering $\tre{3}{C_0}$.

Consider the two processes $p_{2k-1}$ and $p_{2k}$. 
For $i\in\{0,1\}$, let $\alpha_i$ be a $\{p_{2k-i}\}$-only schedule 
such that in execution $(\pi_{B_i}(C_0); \alpha_i)$,  $p_{2k-i}$ performs a complete \getTS{} instance.
According to Lemma~\ref{lemma:EFR2008a-main}, there exists $i\in \{0,1\}$, 
such that $p_{2k-i}$ writes to some register not in $\tre{3}{C_0}$ 
during the execution $(\pi_{B_i}(C_0);\alpha_i)$.
(Note that whether $i=0$ or $i=1$ depends on $C_0$.) 
Let $r$ be the first register not in $\tre{3}{C_0}$  to which $p_{2k-i}$ writes to in $(\pi_{B_i}(C_0);\alpha_i)$. 
Since $\sig{C_0}= \sig{C_{1}}$, we have $r\notin \tre{3}{C_{1}}$, 
and thus $r$ is covered by at most two processes in $C_0$ as well as in $C_1$.

Let $\lambda$ be the shortest prefix of $\alpha_i$  such that $p_{2k-i}$ is about to write to $r$ in $\pi_{B_i}\lambda(C_0)$. 
Since $p_{2k-i}$ does not participate in schedule $\pi_{B_{1-i}}\pi_{B_2}\eta$, it is also covering $r$ in the configuration $\pi_{B_i}\lambda\pi_{B_{1-i}}\pi_{B_2}\eta(C_0)$.
Configurations $\pi_{B_i}\pi_{B_{1-i}}\pi_{B_2}(C_0)$ and $\pi_{B_{1-i}}\pi_{B_i}\pi_{B_2}(C_0)$ are indistinguishable to all processes; 
therefore, $\pi_{B_i}\pi_{B_{1-i}}\pi_{B_2}\eta(C_0)=C_1$. 
Moreover, since $C_1=\pi_{B_0}\pi_{B_1}\pi_{B_2}\eta(C_0)$ is indistinguishable from $\pi_{B_i}\lambda\pi_{B_{1-i}}\pi_{B_2}\eta(C_0)$ to every process except  $p_{2k-i}$, all processes other than $p_{2k-i}$ cover the same registers in $C_1$ as in $\pi_{B_i}\lambda\pi_{B_{1-i}}\pi_{B_2}\eta(C_0)$. 
Since $p_{2k-i}$ covers $r$ in this configuration, and $r$ is covered by at most 2 other processes, $\pi_{B_i}\lambda\pi_{B_{1-i}}\pi_{B_j}\eta(C_0)$ is a $(3,k)$-configuration.
\end{proof}

Lemma~\ref{lemma:long-lived-induction} shows that in  any  long-lived unbounded timestamp implementation that satisfies 
non-deterministic solo-termination
there exists a reachable $(3,\lfloor n/2 \rfloor)$-configuration.
Clearly, at least $\lfloor n/6 \rfloor > n/6-1$ registers are covered in this configuration.
This proves Theorem~\ref{theorem:long-lived}.

\section{A Space Lower Bound for One-shot Timestamps}
\label{section:one-shot-lwb}

It seems natural to imagine that $n$ registers would be required to construct a timestamp system 
for $n$ processes.  
But this is not the case for some restricted versions of the problem.  
For example, if the timestamps are not required to come from a nowhere dense set, 
then, as shown by Ellen, Fatourou and Ruppert \cite{EFR2008a}, 
$n-1$ registers suffice. 
We show that another instance is when each process is restricted to at most one call to the \getTS{} method. 
In this case $\Theta(\sqrt{n})$ registers are necessary and sufficient. 
This section contains the space lower bound.  
Section~\ref{section:one-shot-ub} contains the algorithm that shows that this lower bound 
is asymptotically tight.

Our lower bound proof relies on Lemma \ref{lemma:revisedCore}, 
the proof of which uses Lemma \ref{lemma:EFR2008a-main} inductively.
Given four disjoint sets of processes $B_1, B_2, B_3, U$ 
such that processes in $B_1, B_2, B_3$ cover a set of registers $R$, 
then, according to Lemma \ref{lemma:EFR2008a-main},
for any partition
of $U$ into $V_1$ and $V_2$, 
either all the processes in $V_1$ or all the processes in $V_2$ 
can be made to cover some register outside of $R$.
By choosing $V_1$ and $V_2$ to have sizes differing by at most one, 
Lemma \ref{lemma:EFR2008a-main} can be used to ensure that essentially half 
of the processes in $V_1 \cup V_2$ must write outside of $R$.
 
We strengthen this idea by using Lemma \ref{lemma:EFR2008a-main} inductively 
to construct an execution such that 
all but one of the processes in $U$ that have not initiated any operation 
can be made to cover some register outside of the set of registers $R$.
Let \procs{\sigma} denote the set of the processes taking steps in schedule $\sigma$. 
A process is \emph{idle} in configuration $C$ 
if it is in its initial state in $C$;
the set of all such processes is denoted \idle{C}.

\begin{lemma}
\label{lemma:revisedCore}

Let $C$ be a reachable configuration of a one-shot timestamp implementation from registers 
that satisfies non-deterministic solo-termination. 
Let $B_0$, $B_1$, $B_2$, $U$ be disjoint sets of processes where in $C$ 
each of $B_0$, $B_1$ and $B_2$ cover a set $R$ of registers 
and $U \subseteq \idle{C}$, with $|U| \geq 2$.
Then there is a schedule $ \beta\sigma\beta'\sigma'$ satisfying:

\begin{itemize}
  \item[(a)] 
       $\{\beta, \beta' \} = \{ \pi_{B_0}, \pi_{B_1} \}$;
  \item[(b)] 
       In configuration $ \beta\sigma\beta'\sigma'(C)$ all processes in 
       \procs{\sigma} and \procs{\sigma'} cover a register outside of $R$;
  \item[(c)] 
        $\procs{\sigma} \cup  \procs{\sigma'}  \subsetneq  U $.
  \item[(d)] 
        $\left| \procs{\sigma} \right| + \left| \procs{\sigma'} \right| = \left| U \right| - 1$;
  \item[(e)] 
        $ \left| \procs{\sigma} \right| \geq  \lfloor \left| U \right| / 2 \rfloor  \geq 
            \left| \procs{\sigma'} \right| $;
  \item[(f)] 
        $\sigma$ and $\sigma'$ are concatenations of solo schedules by distinct processes in $U$. 
\end{itemize}
\end{lemma}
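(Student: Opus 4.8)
The plan is to prove Lemma~\ref{lemma:revisedCore} by induction on $|U|$, using Lemma~\ref{lemma:EFR2008a-main} in the inductive step to peel off roughly half the idle processes at a time. The base case is $|U| = 2$: write $U = \{q_0, q_1\}$, and for $i \in \{0,1\}$ let $\tau_i$ be a solo schedule by $q_i$ that contains a complete \getTS{} call starting from $\pi_{B_i}(C)$ (such a schedule exists by non-deterministic solo-termination, since $q_i$ is idle in $C$ and a block-write does not change whether $q_i$ is idle). Lemma~\ref{lemma:EFR2008a-main}, applied with $B_0, B_1, B_2$, $U_0 = \{q_0\}$, $U_1 = \{q_1\}$, yields an $i$ such that the execution $(\pi_{B_i}(C); \tau_i)$ writes outside $R$; let $\sigma$ be the shortest prefix of $\tau_i$ after which $q_i$ is poised to write outside $R$, set $\beta = \pi_{B_i}$, $\beta' = \pi_{B_{1-i}}$, and $\sigma' = $ empty schedule. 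Then $\procs{\sigma} = \{q_i\}$, $\procs{\sigma'} = \emptyset$, and one checks (a)--(f) directly: $q_i$ covers a register outside $R$ in $\beta\sigma(C)$; since $q_{1-i}\notin\procs{\sigma}$ it still covers that register after the further block-write $\beta'$; $|\procs{\sigma}| + |\procs{\sigma'}| = 1 = |U| - 1$; $1 \geq \lfloor 2/2 \rfloor = 1 \geq 0$; and $\sigma$ is a (single) solo schedule.

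For the inductive step, suppose $|U| \geq 3$ and the lemma holds for all smaller idle sets. Partition $U$ into $V_0, V_1$ with $|V_0| = \lceil |U|/2 \rceil$ and $|V_1| = \lfloor |U|/2 \rfloor$. For $i \in \{0,1\}$, let $\rho_i$ be a $V_i$-only schedule starting from $\pi_{B_i}(C)$ in which every process of $V_i$ runs a complete \getTS{} call one after another (concatenation of solo schedules; this exists by solo-termination applied repeatedly, since each process of $V_i$ is idle when its turn comes). Apply Lemma~\ref{lemma:EFR2008a-main} with $U_0 = V_0$, $U_1 = V_1$: there is $i$ such that every $V_i$-only execution from $\pi_{B_i}(C)$ containing a complete \getTS{} writes outside $R$. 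In particular $(\pi_{B_i}(C); \rho_i)$ writes outside $R$; let $\sigma$ be the shortest prefix of $\rho_i$ after which some process $q \in V_i$ is poised to write outside $R$ — note $\sigma$ is a concatenation of some complete solo schedules followed by a proper prefix of $q$'s solo schedule, and $W := \procs{\sigma} \subseteq V_i$ is a set of processes each poised to write outside $R$ in $\beta\sigma(C)$ (here $\beta := \pi_{B_i}$).

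Now I need to recurse to handle the \emph{other} half $V_{1-i}$ — but crucially, the recursion must be set up so that its block-writes land on $\tre{3}{}$-style covered registers and its new activity does not disturb the processes in $W$. The clean way is this: let $C' := \beta\sigma \pi_{B_{1-i}}(C)$ — wait, we want the recursive call's two leftover block-writes $\{\pi_{B_0},\pi_{B_1}\}$ available, so instead I reserve $B_2$ plus two fresh covering sets. Concretely, after $\beta\sigma$, the sets $B_{1-i}$, $B_2$ still cover $R$ but only two such sets remain; to apply the lemma recursively I need \emph{three} covering sets for the recursion. The fix, matching the paper's phrasing of~(a), is to invoke Lemma~\ref{lemma:EFR2008a-main} differently on the recursion or to observe that $W$'s processes are poised outside $R$ and the remaining block-writes to $R$ leave them untouched; then apply the induction hypothesis to the configuration $\beta\sigma(C)$ restricted to the still-idle set $V_{1-i}$, with covering sets $B_{1-i}, B_2$ and one more. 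I would therefore strengthen the statement slightly — carry an \emph{extra} reserve covering set $B_3$ through the induction — so that the recursive application has $B_{1-i}, B_2, B_3$ covering $R$ and idle set $V_{1-i}$ (with $|V_{1-i}| \geq 1$; if $|V_{1-i}| = 1$ the recursion's base case with the singleton is vacuous-ish and $\sigma' $ is empty). The recursion returns $\beta'\sigma'$ with $\{\beta'\} \subseteq \{\pi_{B_{1-i}}, \pi_{B_2}\}$ and $\procs{\sigma'} \subsetneq V_{1-i}$, all poised outside $R$ in $\beta\sigma\beta'\sigma'(C)$. Since $\procs{\sigma}\subseteq V_i$ and the recursion only runs processes of $V_{1-i}$ and block-writes to $R$, the processes of $W$ are undisturbed, so (b) holds for both $\sigma$ and $\sigma'$. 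Conditions (c)--(f) follow by arithmetic: $|W| \geq \lceil |U|/2\rceil \geq \lfloor|U|/2\rfloor \geq |\procs{\sigma'}|$ after noting $|W|+|\procs{\sigma'}| \leq |V_i| + (|V_{1-i}|-1) = |U|-1$, with equality forced by tracking that the recursion gives exactly $|V_{1-i}|-1$ and $\sigma$ stops at the first external write so $|W| = |V_i|$ when $V_i$ is the ``$\geq$'' half... I would reconcile the bookkeeping so that (d) and (e) hold with equality exactly as stated.

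**Main obstacle.** The delicate point is the interface between the two halves: making sure the recursive call's block-writes and new \getTS{} activity do not overwrite or otherwise invalidate the ``poised outside $R$'' status of the processes in $\procs{\sigma}$, and that the recursive call has three covering sets available. The natural resolution is to prove a slightly stronger statement carrying a spare covering set (so effectively four covering sets $B_0,B_1,B_2,B_3$ cover $R$ in the hypothesis), which is why the lemma as stated already lists $B_0,B_1,B_2$ and mentions only $\{\beta,\beta'\}=\{\pi_{B_0},\pi_{B_1}\}$ is used — leaving $B_2$ as the reserve consumed by the recursion. Getting the exact equalities in (d) and (e) then just requires being careful that $\sigma$ is cut at the \emph{first} step that writes outside $R$, so that $\procs{\sigma}$ contains only processes that have either completed their \getTS{} or are the single process poised externally, and that this count matches $|V_i|$.
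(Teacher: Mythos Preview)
Your inductive step has a genuine gap. You assert that after taking the shortest prefix $\sigma$ of $\rho_i$ at which some process $q$ is poised to write outside $R$, the set $W=\procs{\sigma}$ consists of processes ``each poised to write outside $R$ in $\beta\sigma(C)$.'' This is false. Lemma~\ref{lemma:EFR2008a-main} guarantees that \emph{every} $V_i$-only execution from $\pi_{B_i}(C)$ containing a complete \getTS{} writes outside $R$; in particular the very first process listed in $\rho_i$ already writes outside $R$ during its own solo \getTS{}. Hence the shortest such prefix $\sigma$ lies entirely inside that first process's schedule, so $|W|=1$, not $|V_i|$. And even if $\sigma$ did extend past several complete solo calls, those processes would have \emph{finished} their one-shot \getTS{} and would not be covering anything, so (b) would fail for them. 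Either way you obtain only one process covering outside $R$ per level of the recursion, far short of the $|U|-1$ demanded by (d).

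There is also a structural mismatch: the induction hypothesis applied to $V_{1-i}$ returns a schedule $\hat\beta\hat\sigma\hat\beta'\hat\sigma'$ with \emph{two} further block-writes, so the concatenation is $\beta\sigma\hat\beta\hat\sigma\hat\beta'\hat\sigma'$, not $\beta\sigma\beta'\sigma'$; absorbing $\hat\beta'$ into $\sigma'$ would violate (f). Your proposed fix of carrying a fourth covering set $B_3$ changes the hypothesis and would weaken the eventual lower bound, since the application in Section~\ref{section:one-shot-lwb} has exactly three disjoint covering sets available.

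The paper's proof is organised quite differently and avoids all of this. It does \emph{not} halve $U$; instead it adds the processes $p_0,\dots,p_m$ one at a time, maintaining for each $k$ two schedules $\delta_0^k,\delta_1^k$ (one following $\pi_{B_0}$, the other $\pi_{B_1}$) whose participants partition $\{p_0,\dots,p_k\}$. In each $\delta_i^k$ every process except the last is already truncated and poised outside $R$, while the last process runs a complete \getTS{}. Because the frozen processes never execute their pending outside-$R$ write, Lemma~\ref{lemma:EFR2008a-main} --- applied each time from the \emph{same} configuration $C$ with the \emph{same} $B_0,B_1,B_2$ --- forces the outside-$R$ write to come from the last process on one of the two sides; that process is truncated and $p_{k+1}$ is appended to that side. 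After all of $U$ is placed, one final application of Lemma~\ref{lemma:EFR2008a-main} truncates one side's last process, the other side's last process is simply dropped, and relabelling the longer side as $\sigma$ gives (e). No extra covering set is needed and the resulting schedule has exactly the form $\beta\sigma\beta'\sigma'$.
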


\begin{proof}
Let $U = \{ p_0, \dots, p_{m} \}$, where $m \geq 1$ (because $|U| \geq 2)$. 
For each $1 \leq k \leq m$, 
we first inductively construct schedules $\delta_0^k$ and $\delta_1^k$ such that
\begin{itemize}
\item 
      \procs{\delta_0^k} and \procs{\delta_1^k} form a partition of $ \{p_0, \ldots p_k \}$;
\item 
      for $i\in \{0,1 \}$, in execution $(C; \pi_{B_i} \delta_i^k)$:
\begin{itemize}
\item
      each process in \procs{\delta_i^k} initiates exactly one instance of \getTS{};
\item
       exactly one \getTS{} method completes, and this \getTS{} is by the last process in $\delta_i^k$; 
\item  
      no process except possibly the last that occurs in $\delta_i^k$ writes outside of $R$; 
\end{itemize}
\item  
      for $i\in \{0,1 \}$, in configuration $\pi_{B_i} \delta_i^k(C)$
      every process in \procs{\delta_i^k} except possibly the last that occurs in $\delta_i^k$ 
      cover a register outside of $R$. 
\end{itemize}

For $i \in \{0,1\}$, 
let  $\delta_i^1$ be a $p_i$-only schedule in which process $p_i$ 
performs a complete \getTS{} instance in the execution $(C; \pi_{B_i} \delta_i^1)$.
Such a schedule $\delta_i^1$ exists because $p_0$ and $p_1$ are in \idle{C}.
This immediately satisfies the base case, $k=1$.

For $i \in \{0,1 \}$, 
suppose that $\delta_i^k$ are constructed as required, 
and let $q_i$ denote the last process in $\delta_i^k$.
Since execution $(C; \pi_{B_i} \delta_i^k)$  contains a complete \getTS{} by $q_i$,
and no process in $\delta_i^k$ before $q_i$ writes outside of $R$ in $(C; \pi_{B_i} \delta_i^k)$,
Lemma \ref{lemma:EFR2008a-main} implies that either 
$q_0$  in execution $(\pi_{B_0}(C);\delta_0^k)$ or
$q_1$  in execution $(\pi_{B_1}(C);\delta_1^k)$ 
must write outside of $R$. 
Choose such a $j \in \{0,1\}$ such that process $q_j$ does write outside of $R$ in $(\pi_{B_j}(C);\delta_j^k)$. 
First truncate the schedule $\delta_j^k$ , to, say, $\alpha_j^k $, 
by deleting a  suffix of the solo schedule of $q_j$ so that, instead of completing its \getTS{} method,
$q_j$ is paused at the earliest point such that at the end of the execution $(\pi_{B_j}(C);\alpha_j^k)$,
$q_j$ covers a register outside of $R$.
Now append to $\alpha_j^k$, a $p_{k+1}$-only schedule $\sigma_{k+1}$ so that 
the execution $(\pi_{B_j}(C);\alpha_j^k \sigma_{k+1})$ contains a complete \getTS{} method by $p_{k+1}$.
Define $\delta_j^{k+1}$ to be $\alpha_j^k \sigma_{k+1}$ and
$\delta_{1-j}^{k+1}$ to be $\delta_{1-j}^k$.
The claimed construction now holds for $k+1$.

Therefore, we can construct two schedules, $\delta_{i}^{m}$ for $i \in \{0,1\}$, 
that together contain all the processes of $U$ and where each is a concatenation of 
distinct solo-executions.  
Furthermore, each of the executions $(C; \pi_{B_i} \delta_i^m)$ contains exactly
one complete \getTS{} by the last process in the schedule $\delta_i^m$, 
and no other process writes outside of $R$. 
Therefore, applying Lemma \ref{lemma:EFR2008a-main} one more time, 
for a $j\in \{0,1 \}$, 
in the execution $(C; \pi_{B_j} \delta_j^m)$, the last process in $\delta_j^m$ 
must write outside of $R$.
Let $\sigma_j $ be the schedule $\delta_j^m$ truncated to the first point such that at the end of 
execution $(C; \pi_{B_j} \sigma_j)$ this last process covers a register outside of $R$.
Let $\sigma_{1-j}$ be the schedule $\delta_{1-j}^m$ truncated to remove the entire schedule 
of its last process.

Now relabel the members of $\{\pi_{B_0} \sigma_0, \pi_{B_1} \sigma_1 \}$ 
to have distinct names in $\{\beta \sigma, \beta' \sigma' \}$ in such a way that
the two schedules $\sigma_0$ and $\sigma_1 $ are renamed with distinct names in  $\{\sigma, \sigma' \}$ 
and satisfy $\left|\procs{\sigma} \right| \geq  \left|\procs{\sigma'} \right|$.    
By construction, 
$\procs{\sigma_0} $ and $ \procs{\sigma_1}$ do not intersect;
each is a subset of $U$; and together they contain all but 1 of the members of $U$.
Also, by construction, each of  $\sigma_0 $ and $\sigma_1 $ are concatenations of solo executions.
When combined with the relabeling, this establishes (a), (c), (d), (e) and (f).

Since \procs{\sigma_0} and \procs{\sigma_1} are disjoint sets,
and since no process writes outside of $R$ 
in the execution $(C; \pi_{B_i} \sigma_i)$  for $i \in \{0,1\}$,
and since each block write obliterates all writes to $R$,
configurations $\pi_{B_i}(C)$ and $\pi_{B_{1-i}} \sigma_{1-i} \pi_{B_i} (C) $ are
indistinguishable to $\procs{\sigma_i}$.
So each process in \procs{\sigma_i} covers the same register in  
$\pi_{B_i}\sigma_i(C)$ as it does in $\pi_{B_{1-i}}\sigma_{1-i}\pi_{B_i}\sigma_i(C)$
and as it does in $\pi_{B_i}\sigma_i\pi_{B_{1-i}}\sigma_{1-i}(C)$.
Consequently, in both 
$ \pi_{B_0}\sigma_0\pi_{B_1}\sigma_1(C)$ and 
$ \pi_{B_1}\sigma_1\pi_{B_0}\sigma_0(C)$ 
each of the $m-1$ processes in $\procs{\sigma_0} \cup \procs{\sigma_1}$ covers a register not in $R$.
This, combined with the relabeling, establishes (b).
\end{proof}

Lemma \ref{lemma:revisedCore} is the principle tool for our space lower bound for one-shot timestamps.
To describe the structure of the proof we use the following definitions.
Let $m= \lfloor \sqrt{2n}\rfloor $.
Assume that the set of all registers, denoted \allRegs, has size at most $m$ since otherwise we are done.
Define the \emph{ordered-signature of a configuration $C$}, denoted \ordsig{C}, 
to be the $m$-tuple $(s_{1},s_{2}, \ldots, s_{m})$ where $s_{i} \geq s_{i+1}$, 
and there is a permutation $\alpha$ of $\{1, \ldots , m \}$
such that for $1 \leq i \leq m$,
$s_i$ processes are covering the $\alpha_i$-th register.
(The ordered-signature of a configuration is just its signature 
with the entries of the $m$-tuple reordered so that they are non-increasing.
If only $k<m $ registers exist 
then $s_{k+1} = s_{k+2} = \dots = s_m =0$.) 
A configuration $C$ with $\ordsig{C} = (s_{1}, \ldots, s_{m})$,
is \emph{$\ell$-constrained} if $s_c \leq \ell -c $ for every $ 1 \leq c \leq \ell$.
A configuration $C$ is \emph{\full{j}{k}} if there is a set $R$ of registers
such that $|R| = j$ and in $C$ each register in $R$ is covered by at least $k$ processes.
If $C$ is \full{j}{k}, \fullRegs{j}{k}{C} denotes a set of such registers,
otherwise \fullRegs{j}{k}{C} is undefined.

If $C$ is \full{j}{k} where $k\geq 3$,  
and there are  $u \geq 2$ processes that are idle in $C$,
then Lemma \ref{lemma:revisedCore} can be applied with 
$B_0$, $B_1$, $B_2$ any 3 disjoint sets each covering \fullRegs{j}{k}{C},
so that for any $1 \leq v \leq u-1$, $v$ processes can be made to cover registers outside of 
\fullRegs{j}{k}{C} using at most 2 block writes to \fullRegs{j}{k}{C}.

We use this idea repeatedly to construct an execution that visits a sequence of 
configurations, say $C_1, ..., C_{\text{last}}$
such that the set of registers covered in $C_{i+1}$
is a superset of the set covered in $C_i$
until eventually a configuration $C_{\text{last}}$ is reached in which at least 
$m -\log{n} $ registers are covered.

Intuition for our construction is aided by a geometric representation of configurations.  
Configuration $C$ with $\ordsig{C} = (s_{1},s_{2}, \ldots, s_{m})$ 
is represented on a grid of cells where, in each column $c$, $1\leq c \leq m$, 
the lowest $s_c$ cells are shaded.
Thus each register corresponds to a column in the grid, 
but this correspondence can change for different configurations.
With this interpretation, 
each shaded cell in column $c$ represents a process covering the register corresponding to $c$. 
If the configuration is $l$-constrained, 
the shading in each column remains below the stepped diagonal that starts at height $l-1$ in the grid.
The configuration is \full{j}{k} if in column $j$ (and hence in all columns 1 through $j$)
the height of the shaded cells is at least $k$.

An overview of the construction is as follows.
We first achieve an $m$-constrained \full{j}{m-j} configuration for some $j \geq 1$
as shown in Figure~\ref{figure:gridOne}.
\begin{figure}[h]
	\centering
		\includegraphics[width=3.4in]{./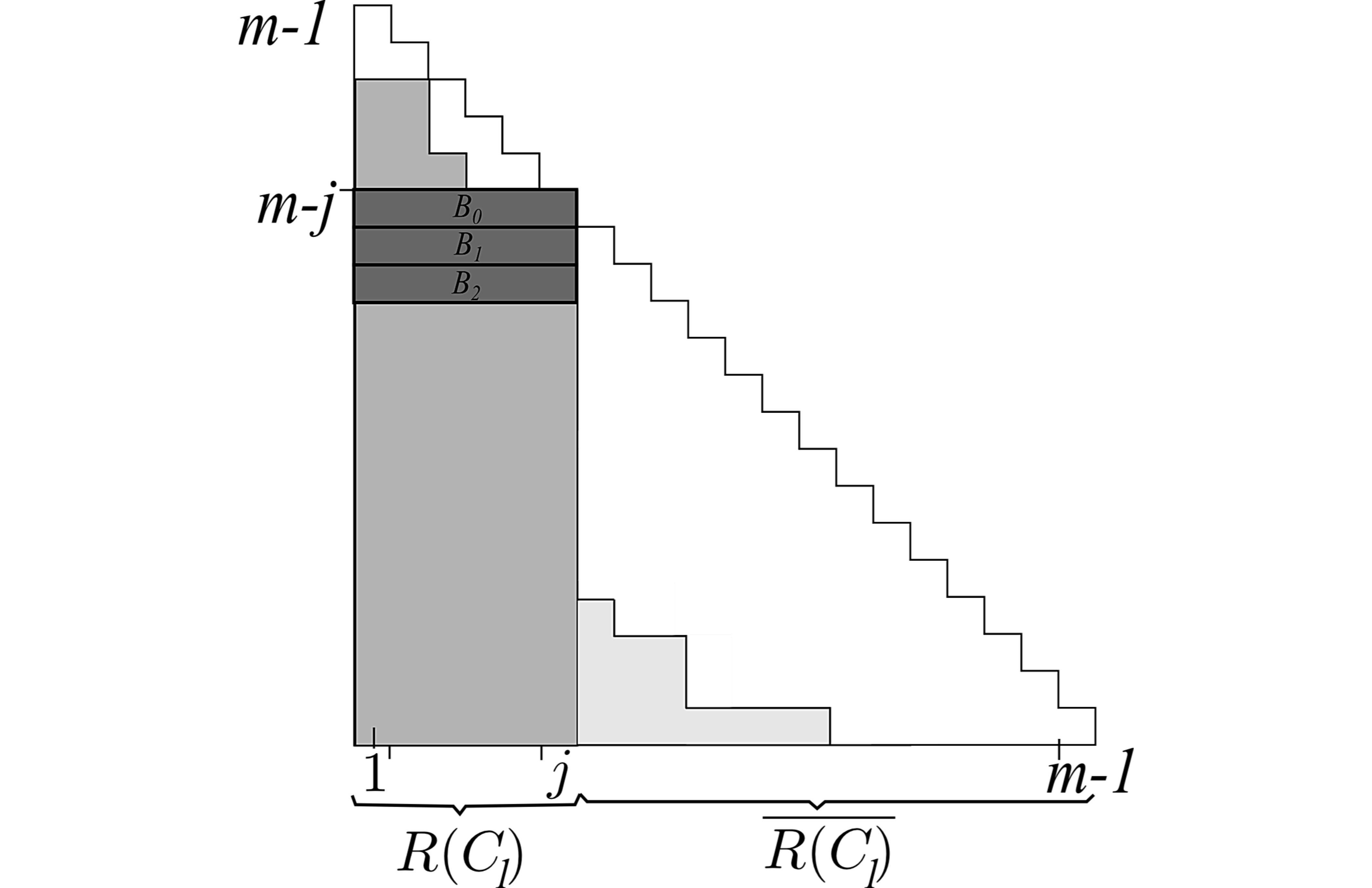}
	\label{figure:step1}
	\caption{Configuration $C_1$ must have a column $j$ that reaches to the diagonal. 
        Hence there are $j$ registers each covered with $m-j$ processes.}
\label{figure:gridOne}
\end{figure}

Given some $\ell$-constrained \full{j}{\ell-j} configuration, 
(such as shown in Figure~\ref{figure:gridOne} with $m=\ell$)
and provided $\ell-j$ is at least 3,  
we can apply Lemma \ref{lemma:revisedCore} using 3 disjoint sets of processes each occupying cells in columns 1 through $j$
for the sets $B_0$, $B_1$ and $B_2$. 
Then, one at a time, idle processes can be made to occupy cells in columns $j+1$ through $m$.
We will maintain the invariant that the number of idle processes is always greater than the number of unshaded cells
that are under the stepped diagonal and in columns $j+1$ through $m$. 
Because of this invariant, we can be sure to reach a configuration $C'$ where, for the first time,
(when the columns $j+1$ through $m$ are rearranged in order of non-increasing number of occupants) 
some column $j' \geq j+1 $ gets $\ell-j'$ occupants. 
During this execution 
the block writes reduced the height of the shaded cells in columns $1$ through $j$ by one or two. 
If only one block write happened during this execution, or if $j' \geq j+2$,
$C'$ is again an $\ell$-constrained \full{j'}{\ell-j'} configuration
(Case 1 of Figure~\ref{figure:gridTwo}).

\begin{figure}[h]
\begin{center}
$\begin{array}{c@{\hspace{0.2in}}c}
        \epsfxsize=2.0in
\epsffile{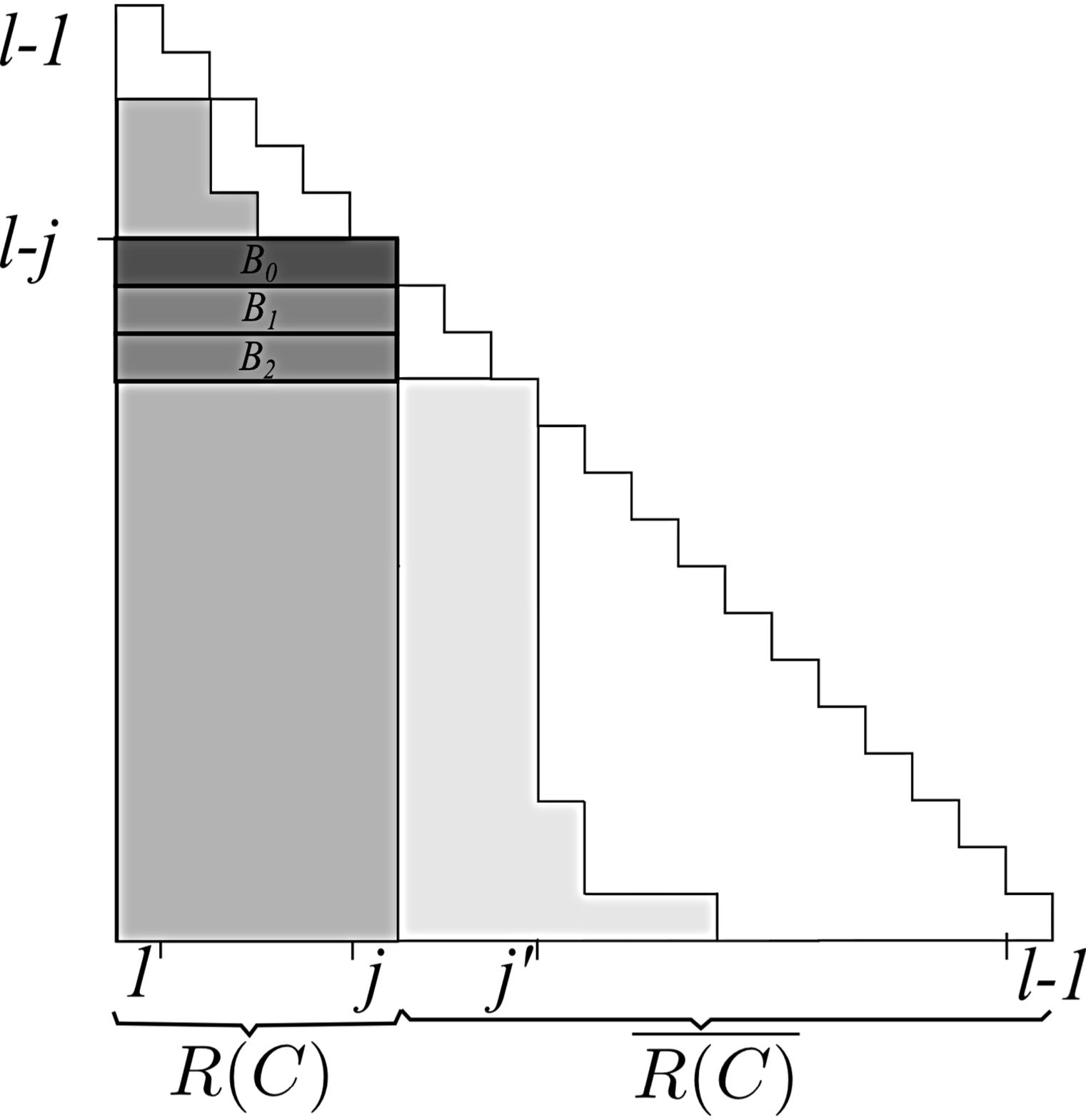} &
        \epsfxsize=2.0in
        \epsffile{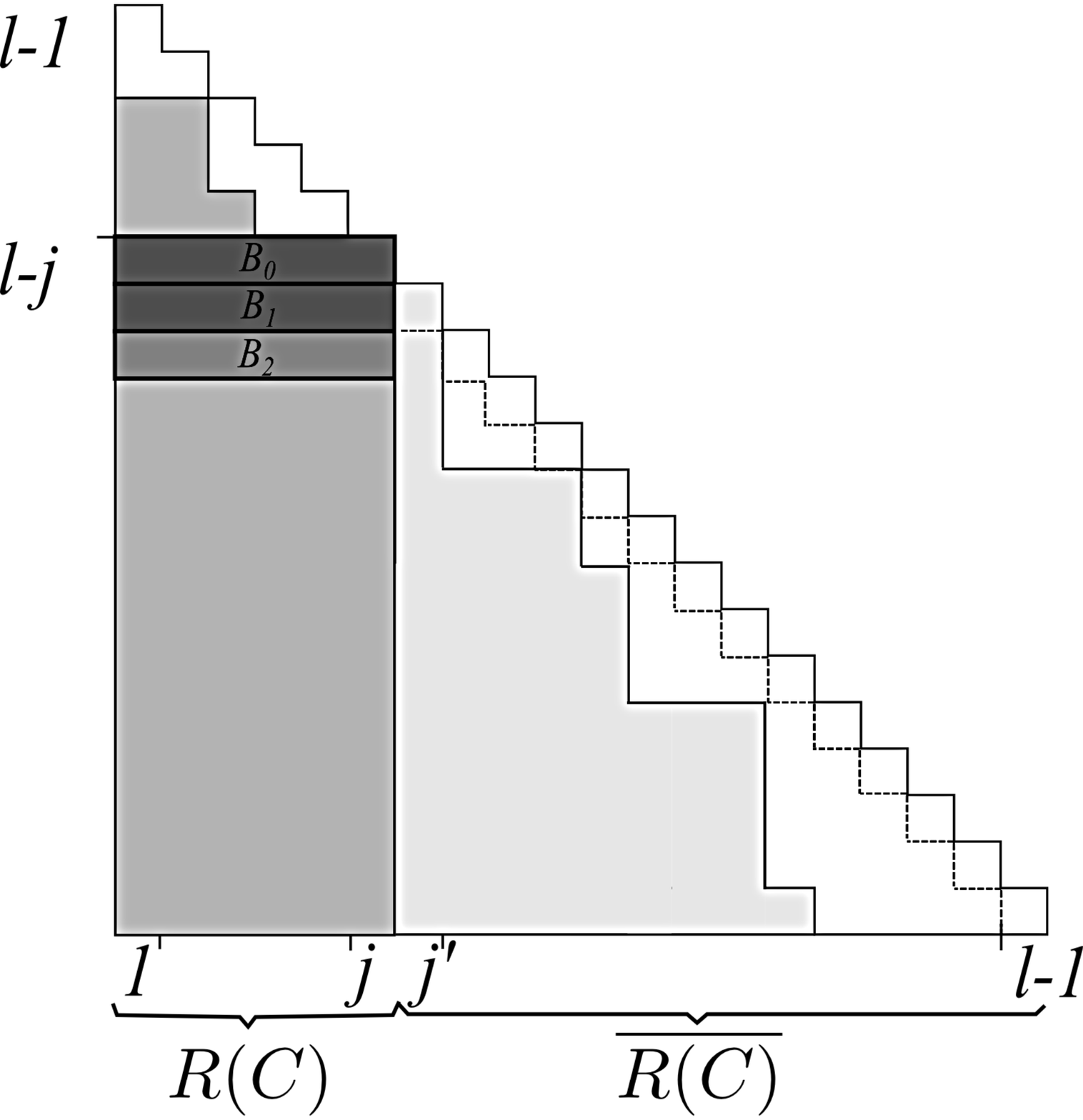} \\ [0.0cm] \mbox{Case 1} & \mbox{Case 2} 
\end{array}$ 
\end{center} 
\caption{After the block-write, processes are run until some new column $j'$ reaches the diagonal and thus has height $\ell-j'$. 
Case 1: columns 1 through j still have height at least $\ell-j'$. 
Case 2: the diagonal is reached at column $j+1$ after two block writes.   
This can only happen if at least half of the unshaded space in columns $j+1$ through $m$ became shaded.}
 \label{figure:gridTwo} 
\end{figure}

The only other case is when both block writes were used to achieve $C'$ and $j' = j+1$
(Case 2 of Figure~\ref{figure:gridTwo}).
Then $C'$ is an $(\ell-1)$-constrained \full{j'}{\ell-1-j'}  configuration.
In this case, however, at least half of the idle processes have moved to occupy cells in columns $j+1$ through $m$.
So this reduction by one in the stepped boundary of the grid can only happen $\log n$ times. 
Thus, each repetition of this construction creates a $(m-s)$-constrained \full{k}{m-k-s} configuration
where $s \in O(\log n)$.
The construction can be repeated until either there are fewer than 2 idle processes or $m-k-s <3$.
In both cases at least $m-s = \sqrt{2n} - O(\log n)$ registers are covered.

The rest of this section contains the details of this construction, 
which provides the proof of Theorem \ref{theorem:one-timeLbd}.
We assume that $n \geq 3$ since otherwise the theorem is trivially correct.
For configuration $C$, 
and a set of registers $R \subseteq \allRegs$, 
\poised{C}{R} denotes the processes that are covering some register in $R$. 
For any set of registers $R$, $\overline{R}$ denotes the set $\allRegs \setminus R$.

The construction is inductive, starting with the initial configuration $C_0$.
Initialize $j_0=0$, $\ell_0 =m $ and  $R_0 = \emptyset$.

In $C_0$, no register is covered. 
Set $B_0 =B_1=B_2 =\emptyset$, let $U$ be the set of all $n$ processes
and apply Lemma \ref{lemma:revisedCore}. 
Because $\pi_{B_i}$ is the empty schedule for $i \in \{0,1,2\}$,
the schedule produced is $ \sigma\sigma'$ and 
$n-1$ processes cover some register in configuration $ \sigma\sigma'(C_0)$.
Let $\ordsig{\sigma\sigma'(C_0)} = (s_{1},s_{2}, \ldots, s_{m})$.
If $s_{m} \geq 1$, we are done since then $m$ registers are covered.
Therefore, assume $s_{m} = 0$,  and suppose $s_c \leq m-c-1$ for each $c \leq m -1 $.
Then, $\sum_{c=1}^{m}  s_c  \leq \sum_{c=1}^{m-1}  (m-c-1)  + s_m = (m-1)(m-2)/2 + 0  < n-1$,
which is impossible.
Hence, there is at least one $j \leq m-1$ satisfying $s_j \geq m-j$
and $ \sigma\sigma'(C_0)$ is therefore \full{j}{m-j}.
Let $\gamma_1$ be the shortest prefix of $\sigma\sigma'$ so that
there is such a value, which we label $j_1$, satisfying $\gamma_1(C_0)$ is a \full{j_1}{m-j_1} configuration. 
Configuration $\gamma_1(C_0)$ must also be $m$-constrained, because otherwise,
there is some index $i$ such that $i$ registers are covered by at least $m-i+1$ processes in configuration $\gamma_1(C_0)$. 
But then there is a proper prefix, $\alpha$, of $\gamma_1$ such that $\alpha(C_0)$ is a \full{i}{m-i} configuration, for some $i$.
Define $C_1 = \gamma_1(C_0)$,  $\ell_1 = m$ and  $R_1 = \fullRegs{j_1}{\ell_1-j_1}{C_1}$.

In execution $(C_0; \gamma_1)$, each process $p$ in \procs{\gamma_1} leaves the set \idle{C_0} and
performs a $p$-only execution until it is paused when it covers a register.
Therefore, $| \poised{C_1}{\allRegs} | +| \idle{C_1} | = n$. 
At most  $\sum_{c=1}^{j_1} (m-c -1)  + 1 $ processes cover registers in $R_1$.
The remainder of at least $n- (\sum_{c=1}^{j_1} (m-c -1)  +1) >  \sum_{c=j_1+1 }^{m} (m-c)   $ 
processes are either still idle or are covering registers in $\overline{R_1}$. 
So for $i=1$, the following \emph{construction invariant} holds: 
\begin{enumerate}
\item[(a)]
$C_i = \gamma_{i}(C_{i-1})$ 
\item[(b)]
$R_{i-1} \subsetneq R_i$ 
\item[(c)]
$|\poised{C_i}{\overline{R_i}}| + |\idle{C_i}| -1 \geq  \sum_{c=j_i+1 }^{m} (m-c) $
\item[(d)]
$j_i \geq  j_{i-1} + 1$ and  $\ell_i \in \{\ell_{i-1}, \ell_{i-1}-1 \}$ and $\ell_i \leq m$  
\item[(e)]
$C_i$ is a \full{j_i}{\ell_i-j_i} configuration with $R_i = \fullRegs{j_i}{\ell_i-j_i}{C_i}$.
\end{enumerate}

Now suppose that a sequence of tuples $(\gamma_1, C_1, j_1, \ell_1, R_1), \ldots , (\gamma_k, C_{k}, j_{k}, \ell_{k}, R_k) $ 
has been built so that the construction invariant holds for each.
If $\ell_k-j_k \geq 3 $ and $|\idle{C_{k}}| \geq 2$ 
then let $B_0, B_1, B_2$ to be disjoint sets of processes, such that each covers $R_k$  and each has size $|R_k|$, 
and let $U = \idle{C_k}$.
According to Lemma \ref{lemma:revisedCore}, there is a  schedule $ \beta\sigma\beta'\sigma'$ 
satisfying: 
\begin{itemize}
\item
$\beta $ and $\beta'$ are block writes by $B_0$ and $B_1$,
\item
$\sigma$ and $\sigma'$ are concatenations of solo schedules by distinct processes in \idle{C_k},
\item
$|\procs{\sigma}| \geq \lfloor |\idle{C_k}| / 2 \rfloor  $
\item
$|\procs{\sigma}| + |\procs{\sigma'}| =  |\idle{C_k}| -1  $, and
\item
in configuration $ \beta\sigma\beta'\sigma'(C_k)$ all processes in 
       \procs{\sigma} and \procs{\sigma'} cover a register in $\overline{R_k}$.
\end{itemize}

So, combining with (c) of the construction invariant, 
$|\poised{\beta\sigma\beta'\sigma'(C_k)}{\overline{R_k}}|  \geq  \sum_{c=j_k+1 }^{m} (m-c)$. 
Hence, there must be a non-empty set of registers $Q' \subseteq \overline{R_k}$ such 
that each is covered by at least $\ell_k -j_k - |Q'| $ processes. 
Let $\gamma_{k+1}$ be the shortest prefix of $\beta\sigma\beta'\sigma'$ such that
there is such a $Q'$, which we call $Q$, in $\gamma_{k+1}(C_k)$ and let $\nu_k = |Q| $, where $\nu_k\geq 1$. 
Define $C_{k+1} = \gamma_{k+1}(C_k)$, $R_{k+1} = Q \union R_k$, and $j_{k+1} = \nu_k + j_k$.
Then $|R_{k+1}| = \nu_k + j_k= j_{k+1}$.
During execution $(C_k;\gamma_{k+1})$, processes that leave $\idle{C_k}$ pause when they cover a register 
in $\overline{R_k}$.
So 
$|\poised{C_k}{\overline{R_k}}| + |\idle{C_k}| = |\poised{\gamma_{k+1}(C_k)}{\overline{R_k}}| + |\idle{\gamma_{k+1}(C_k)}| $.
Therefore, by (c), 
$|\poised{C_{k+1}}{\overline{R_k}}| + |\idle{C_{k+1}}| -1 \geq  \sum_{c=j_k+1 }^{m} (m-c) $.
Furthermore, since $\gamma_{k+1}$ is chosen to be as short as possible,
$$|\poised{C_{k+1}}{Q}| \leq \sum_{c=j_k+1 }^{j_k+\nu_k} (\ell_k-c-1) +1  < \sum_{c=j_k+1 }^{j_k+\nu_k} (m-c) $$
Therefore, 
$$|\poised{C_{k+1}}{\overline{R_k} \setminus Q}| + |\idle{C_{k+1}}| -1 >  \sum_{c=j_k+1 }^{m} (m-c) - \sum_{c=j_k+1 }^{j_k+\nu_k} (m-c)
= \sum_{c=j_{k+1} +1  }^{m} (m-c) $$
Thus (a), (b), and (c) of the construction invariant hold for $k+1$.
For parts (d) and (e) there are two cases.

\emph{Case 1: $\gamma_{k+1}$ is a prefix of $\beta\sigma$ or $\nu_k \geq 2$}.
If $\gamma_{k+1}$ is a prefix of $\beta\sigma$ then there is only one block write to $R_k$.
So in $C_{k+1}$,  
each of the $j_k$ registers in $R_k$ remains covered by at least $\ell_k-j_k -1$ processes 
and each of the $\nu_k$ registers in $Q$ is covered by at least $\ell_k-j_k - \nu_k \leq \ell_k-j_k -1$ processes.
If  $\nu_k \geq 2$, then in $C_{k+1}$,  
each of the $j_k$ registers in $R_k$ remains covered by at least $\ell_k-j_k -2$ processes 
and each of the $\nu_k$ registers  in $Q$ is covered by at least $\ell_k-j_k - \nu_k \leq \ell_k-j_k -2$ processes.
So in either situation, each of the $j_{k+1} = j_k +\nu_k$ registers in $R_{k+1}$ is covered by at least 
$\ell_k-j_k - \nu_k= \ell_k - j_{k+1}$ processes.
Therefore, setting $\ell_{k+1} = \ell_k$ we have that
$C_{k+1}$ is a \full{j_{k+1}}{\ell_{k+1}-j_{k+1}} configuration and the construction invariant holds.

\emph{Case 2: $\nu_k=1$ and $\gamma_{k+1}$ is not a prefix of $\beta\sigma$}.
In this case there are two block writes to $R_k$. 
So in $\gamma_{k+1}(C_k)$, 
each register in $R_k$ remains covered by only $\ell_k-j_k-2$ processes,
which is one fewer than the number of processes covering the single register in $Q$. 
Since $j_{k+1} = j_k +1$, we can set $\ell_{k+1} = \ell_k -1$ to ensure that 
$C_{k+1}$ is a \full{j_{k+1}}{\ell_{k+1}-j_{k+1}} configuration.
So, again the construction invariant holds.

This  construction ends in a configuration $C_{\mathrm{last}}$ where 
either $\ell_{\mathrm{last}}-j_{\mathrm{last}}  \leq 2 $ or $|\idle{C_{\mathrm{last}}}|=1$,
since in either case Lemma~\ref{lemma:revisedCore} can no longer be applied.
Clearly, $C_{\mathrm{last}} = \gamma_1, \gamma_2, \ldots , \gamma_{\mathrm{last}} (C_0)$ 
so $C_{\mathrm{last}}$ is reachable. 
Since, in $C_{\mathrm{last} -1}$, every register in $R_{\mathrm{last} -1}$ was covered by at least 3 processes,
every process in $R_{\mathrm{last}}$ is covered by at least one process.
So it only remains to bound $| R_{\mathrm{last}}| =j_{\mathrm{last}}$ from below. 

First we show that 
$|\idle{C_{\mathrm{last}}}|\leq 1$ is not possible.
Intuitively,
this is because during execution $(C_0; \gamma_1, \gamma_2, \ldots , \gamma_{\mathrm{last}})$
processes pause in such a way that each of the constructed configurations $C_1, \ldots , C_{\mathrm{last}}$
is $m$-constrained, which does not allow enough room to use $n-1$ processes.

To make this precise,  
let $\gamma$ denote $\gamma_1 \gamma_2 \ldots  \gamma_{\mathrm{last}}$, 
and say that process $p$ \emph{is associated with register $\reg$} 
if $\reg$ is the last register that $p$ covers during execution $(C_0; \gamma)$.
During the execution $(C_0; \gamma)$, 
processes no longer become associated with a register $\reg$ after $\reg $ becomes a member of ${R_i}$ for some $i$.
Let $f(\reg)$ be the smallest step number, $i$, such that $\reg \in R_i$ (and $f(\reg) = \mathrm{last} $ otherwise). 
Also, for each register $\reg$, let $g(\reg) = | \{ p ~|~  p \text{ is associated with } \reg \} |$. 
We must have $g(\reg) = \poised{C_{f{(\reg)}}}{\{\reg\}}$.
If $|\idle{C_{\mathrm{last}}}|\leq 1$, then each of $n-1$ processes is associated with a register.
So $ n-1  \leq  \sum_{\allRegs} g(\reg) = \sum_{\allRegs} \poised{C_{f{(\reg)}}}{\{\reg\}}$.
But by construction, each $C_i$ is $\ell_i$-constrained and therefore $m$-constrained.
Thus $\sum_{\allRegs} \poised{C_{f{(\reg)}}}{\{\reg\}}    \leq  \sum_{c=1}^{m}  (m -c) $.
But then $n-1 \leq  \sum_{c=1}^{m}  (m -c) = m(m-1)/2 $, which can hold only if $m  > \sqrt{2n} $
(since $n \geq 3$).

%

We can therefore conclude that the construction must have ended because
$\ell_{\mathrm{last}}-j_{\mathrm{last}}  \leq 2 $. 
So, now we show that if $\ell_{\mathrm{last}}-j_{\mathrm{last}}  \leq 2 $  then $j_{\mathrm{last}}$ is at least $m- \log n -2$. 
Let $\delta$ be the number of times that Case 2 occurred in the creation of
$(\gamma_1, C_1, j_1, \ell_1, R_1), \ldots ,
 (\gamma_{\mathrm{last}}, C_{\mathrm{last}}, j_{\mathrm{last}}, \ell_{\mathrm{last}}, R_{\mathrm{last}}) $.
Because $\ell_0 =m$ and $\ell_i$ decreases only for this case and only by one each time, $\ell_{\mathrm{last}} = m -\delta$.
Consider a step $i$  where Case 2 occurs, with $\gamma_i = \beta\sigma\beta'\sigma'$.  
By Lemma \ref{lemma:revisedCore},
$|\procs{\sigma}| \geq \lfloor |\idle{C_k}| / 2 \rfloor  $ so $|\procs{\sigma \sigma'}| \geq \lceil |\idle{C_k}| / 2 \rceil$.
Since $\idle{C_0} =n $ and  $\idle{C_i} < \idle{C_{i+1}}$ it follows that Case 2 can occur at most $\log n$ times.
Consequently, $\delta \leq \log n$ implying $\ell_{\mathrm{last}} \geq  m - \log n$.
Hence, 
$j_{\mathrm{last}} \geq \ell_{\mathrm{last}} -2 \geq  m - \log n - 2$.

This completes the proof of Theorem \ref{theorem:one-timeLbd}.

\section{A Simple One-Shot Timestamps Implementation Using \lowercase{\boldmath$\lceil n/2 \rceil$} Registers}

Algorithms~\ref{algo:simpleOneShotCompare} and \ref{algo:simpleOneShotGetTS} 
implement one-shot timestamps for $n$ processes using $\lceil n/2 \rceil$ registers 
and thus beat the space used by any register implementation of long-lived timestamps.
It is of interest only because of its simplicity; 
in Section \ref{section:one-shot-ub}, we improve on this space complexity with a more complicated algorithm, 
which shows that the space lower bound of Section \ref{section:one-shot-lwb} is asymptotically tight.  

The  \simpgetTS{} method by process $p$ reads each of the registers in sequence, updates the value of 
the register that $p$ shares by adding one to what $p$ read, and returns as $p$'s timestamp the
sum of all the values read. 
The \simpcompare{$t_1,t_2$} method returns the truth value of $t_1 < t_2$.

\begin{algorithm}[H]
\label{algo:simpleOneShotCompare}
  \caption{simple-compare($t_1,t_2$)}
  \Return{$t_1<t_2$}\;
\end{algorithm}

\begin{algorithm}[H]
\label{algo:simpleOneShotGetTS}
  \SetKwData{Sum}{sum}
  \caption{simple-getTS()}
  \tcp{$R[1\dots \lceil n/2\rceil ]$ is a shared array of multi-reader/2-writer registers 
    each with a value in $\{0,1,2 \}$ and initialized to $0$. Register $R[i]$ is written by processes $2i$ and $2i+1$.}
  \tcp{\Sum is a local variable}\;
  
  $\Sum:=0$ \;
  \For{ $i=1\dots\lceil n/2\rceil$}{
     \If{$i = \lceil p/2 \rceil $}{
       $R[i]:=R[i] +1$\;
     } 
     $\Sum:=\Sum + R[i]$\;
   }      
   \Return{$\Sum$}\;
\end{algorithm}

\begin{lemma}\label{lemma:one-shot-upper-bound} 
Algorithms~\ref{algo:simpleOneShotCompare} and \ref{algo:simpleOneShotGetTS} 
constitute a waitfree implementation of one-shot timestamps for an asynchronous system of $n$ processes. 
\end{lemma}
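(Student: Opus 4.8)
The plan is to show two things: that \texttt{simple-compare} together with \texttt{simple-getTS} satisfies the timestamp specification, and that every \texttt{simple-getTS} call terminates in a bounded number of steps (wait-freedom is then immediate since the method contains only a single bounded loop with no waiting). Wait-freedom is trivial: the loop runs exactly $\lceil n/2\rceil$ iterations, each doing $O(1)$ register operations, so I would dispense with it in one sentence.

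For correctness, the key observation is that each register $R[i]$ is monotone nondecreasing over the execution: it starts at $0$, and the only write to it is an increment by $1$ performed by at most one of the two processes $2i, 2i+1$ that share it (since this is a one-shot object, each process executes \texttt{simple-getTS} at most once, so each register is incremented at most twice, but more importantly each register's value only ever goes up). Consequently, if a process $p$ reads register $R[i]$ at some point and a process $q$ reads $R[i]$ at a strictly later point, then the value $q$ reads is at least the value $p$ reads. The timestamp returned by a call is $\sum_{i=1}^{\lceil n/2\rceil} (\text{value }p\text{ read from }R[i])$, with the twist that in the iteration $i=\lceil p/2\rceil$ the process first increments $R[i]$ and then reads it, so $p$'s own contribution from its own register is at least $1$ and strictly larger than what it would have read before its own increment.

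Now suppose $g_1$ by process $p$ returns $t_1$ and $g_2$ by process $q$ returns $t_2$, and $g_1 \rightarrow g_2$, i.e. $g_1$ completes before $g_2$ begins. I want $t_1 < t_2$; then \texttt{compare}$(t_1,t_2)$ returns true and \texttt{compare}$(t_2,t_1)$ returns false as required. For each register index $i$, compare the value $v_i$ that $p$ read in $g_1$ with the value $w_i$ that $q$ read in $g_2$. Since $g_1$ finished entirely before $g_2$ started, every read of $R[i]$ by $q$ happens after every read (and write) of $R[i]$ by $p$, so by monotonicity $w_i \geq v_i$ for all $i$. Hence $t_2 = \sum_i w_i \geq \sum_i v_i = t_1$. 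It remains to get strict inequality: consider $i^\star = \lceil q/2\rceil$, the index of $q$'s own register. In $g_2$, $q$ increments $R[i^\star]$ before reading it, so $w_{i^\star}$ is at least $1$ plus the value of $R[i^\star]$ just before $q$'s increment; that prior value is in turn $\geq v_{i^\star}$ because $p$ read $R[i^\star]$ before $g_2$ began, hence before $q$'s increment. Therefore $w_{i^\star} \geq v_{i^\star}+1$, which combined with $w_i \geq v_i$ for the other indices gives $t_2 \geq t_1 + 1 > t_1$.

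The main obstacle — and it is a minor one — is being careful about the ordering of events on $q$'s own register $R[i^\star]$ within $g_2$, since there $q$ both writes and reads, and one must argue that $p$'s read of $R[i^\star]$ in $g_1$ precedes $q$'s increment (which follows from $g_1 \rightarrow g_2$ and the fact that the increment occurs during $g_2$), so that monotonicity can be invoked against the pre-increment value. I would also note explicitly that the argument only uses that each $R[i]$ is nondecreasing, which holds precisely because this is the one-shot setting; no claim of correctness is made when a process gets a timestamp twice, consistent with the object being one-shot. Assembling these observations yields that every execution respects the specification of one-shot timestamps, completing the proof.
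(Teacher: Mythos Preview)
Your proposal is correct and follows essentially the same approach as the paper's proof: wait-freedom is immediate from the bounded loop, each shared register is monotone nondecreasing (the paper phrases this as ``writes 2 only if it observed 1, and any observed 1 was written by the partner''), hence $t_2\geq t_1$, and strict inequality comes from $q$'s own increment of $R[\lceil q/2\rceil]$, which $p$ could not have seen. Your treatment of the strict-inequality step is in fact slightly more explicit than the paper's.
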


\begin{proof} 
Clearly both methods \simpcompare and \simpgetTS are waitfree.
Let $p$ and $q$ be two processors that perform a \simpgetTS method call 
and let $t_p$ and $t_q$ be their corresponding timestamps. 
Assume that $p$.\simpgetTS{} happens before $q$.\simpgetTS{}.  
Each process writes either 1 or 2 to its register and only writes 2 if it observed that its register 
already held 1. 
Because it is one-shot, any such observed 1, 
must have been written by the observing process' partner,
and thus the value in each register never decreases.
Consequently, the value of {\sf sum} also never decreases so $t_p \leq t_q$. 
Since $p.\simpgetTS{}$ happens before $q.\simpgetTS$, $q$'s {\sf sum} will also account for the additional 
$1$ that $q$ writes to its own register and that is not observed by $p$.
Therefore $t_p < t_q$. 
\end{proof}

\section{An Asymptotically Tight Space Upper Bound for One-Shot Timestamps}
\label{section:one-shot-ub}

We now present a waitfree algorithm for any timestamp system that invokes at most $M$ \getTS\ method calls,
which uses $\lceil 2 \sqrt{M} \rceil$ registers. 
In particular, the algorithm uses $ \lceil 2 \sqrt{n} \rceil$ registers for an $n$-process one-shot timestamp system,  
thus establishing Theorem~\ref{theorem:one-timeUbd} and showing that the space lower bound 
of Section~\ref{section:one-shot-lwb} is asymptotically tight.

Timestamps are ordered pairs $(rnd, turn)\in\mathbb{N}\times(\mathbb {N}\cup\{0\})$.  
The compare method simply compares timestamps lexicographically without accessing shared memory
(see Algorithm~\ref{algo:compare}).

\begin{algorithm}[H]
\label{algo:compare}
\caption{compare($(rnd_1,turn_1),(rnd_2,turn_2)$)}
\nl             \Return{$(rnd_{1} < rnd_{2}) \vee \bigl((rnd_{1} = rnd_{2}) 
\wedge (turn_{1} < turn_{2})\bigr)$}
\end{algorithm}

\subsection{The getTS algorithm}

Algorithm~\ref{algo:getTS} provides the \getTS method. 
It uses the parameter $m$, the number of shared registers, 
which is a function $m = f(M)$, where $M$ is the maximum number of \getTS method calls.
We will prove that $f(M)=\lceil 2\sqrt{M}\rceil$ suffices. 
Each process numbers its own \getTS{} method calls sequentially. 
The $k$-th time that $p$ invokes \getTS, it does so using ID $= p.k$.
We refer to these IDs as \emph{getTS-ids}.
When specialized to one-shot timestamps, 
ID can be just the invoking process' identifier. 


\begin{algorithm}[H]
\label{algo:getTS}
\caption{getTS(ID)}
\tcc{For the $k$-th invocation by process $p$, ID $= p.k$. }

\textbf{Shared:\\}        
        \Indp\Indp  
               $R[1\ldots m]$: array of multi-writer multi-reader registers, 
initialized to $\bot$\;
        \Indm\Indm
\textbf{Local:\\}
        \Indp\Indp  
               $r[1\ldots m]$ initialized to $\bot$\;
               $j$ initialized to $1$\;
               $myrnd$\;
        \Indm\Indm
\nl     \While{$R[j] \neq \bot$ }
        {       
\nl     $r[j] = R[j]$

\nl     $j = j + 1$
        }
\nl $myrnd = j - 1$     

\nl \For{$j = 1 \ldots myrnd - 1$}
{
\nl   \eIf{$R[myrnd + 1] == \bot$}{
\nl              \uIf{$r[myrnd].seq[j] == last(R[j].seq)$}{
\nl                             $R[j] =  \langle (\ID), myrnd \rangle$\;
\nl                             \Return{$(myrnd, j)$}
                                }
\nl                     \ElseIf{$R[j].rnd < myrnd$}                                {                
\nl                                     $R[j] = \langle (\ID), myrnd \rangle$\;                                                                             }
                        }{       
\nl                     \Return{$(myrnd + 1, 0)$}
                                }
}                           
\nl $r[1 \ldots m] =\scan{R[1],\dots,R[m]}$

\nl \If{$r[myrnd + 1] == \bot$}
{
\nl $R[myrnd + 1] = \langle (last(r[1].seq), \ldots , last(r[myrnd].seq), 
\ID),myrnd + 1 \rangle$ 
}
\nl                     \Return{$(myrnd + 1, 0)$}

\end{algorithm}

The shared data structure used in the \getTS{} method is an array of $m$ multi-writer multi-reader atomic registers.
The content of each register is either $\bot$ (the initial value) 
or an ordered pair $\langle seq, rnd \rangle$ where, $seq$ is a sequence of getTS-ids, and $rnd$ is a positive integer.
The algorithm maintains the invariant that for some integer $k \geq 0$ the first $k$ registers are non-$\bot$ and all other registers are $\bot$. 
Moreover, the sequence $R[j].seq$ for $j \leq k$ has length either $1$ or $j$.
The $i$-th element of $seq$ is denoted $seq[i]$, and $last(R[j].seq)$ is the last element of the sequence $R[j].seq$. 

The algorithm uses the well-known obstruction-free \scan method 
due to Afek, Attiya, Dolev, Gafni, Merritt and Shavit \cite{DBLP:journals/jacm/AfekADGMS93},
which returns a \emph{successful-double-collect}. 
A \emph{collect} reads each $R[1], \ldots, R[m]$ successively and returns the resulting \emph{view}.
A successful-double-collect($R[1], \ldots, R[m]$) repeatedly collects until two contiguous views are identical.
The \scan can be linearized at any point between its last two collects.  
Although this \scan is not wait-free in general, the use of it by Algorithm~\ref{algo:getTS} is. 
This is because, in any execution, 
each \getTS{} performs at most $m-1$ writes, so each \scan operation will be successful after a finite number of collects.
Since \scan is linearizable, we treat it as atomic for the remainder of this section. 

The idea of the algorithm is as follows.
An execution proceeds in phases. 
During phase $k$, $R[1]$ through $R[k-1]$ are non-$\bot$;
$R[k+1]$ to $R[m]$ are $\bot$; 
$R[k]$ is either written or some \getTS{} is poised to write to it for the first time. 
Every write to $R[k]$ during phase $k$ is a pair $\langle seq, rnd \rangle$,
which stores a sequence 
of $k$ getTS-ids in $seq$.
We say that register $R[i]$ is \emph{valid} if the phase is $k$ and 
the last entry stored in $R[i].seq$ equals the $i$-th entry 
stored in $R[k].seq$.

Roughly speaking, phase $k-1$ ends when some \getTS{q} method discovers that all registers $R[1]$ through  $R[k-1]$ are invalid. 
Then \getTS{q} performs a \scan, which returns the view $(r_1,\dots,r_{k-1},\bot,\dots,\bot)$. 
The $k$-th phase starts precisely at this scan. 
Then \getTS{q} prepares to write the sequence $(\ell_1,\dots,\ell_{k-1})$ to $R[k].seq$, where $\ell_i=last(r_i.seq)$ for $1\leq i\leq k-1$.
First imagine, for simplicity, that \getTS{q}'s scan and subsequent write to $R[k]$ occur in one atomic operation.
In that case, at the beginning of the $k$-th phase, every register $R[i]$, $1 \leq i \leq k-1$, is valid. 
Because \getTS{q} started phase $k$, it returns the timestamp $(k,0)$.

For the rest of phase $k$, any other \getTS{p} method that began in phase $k$ examines the registers $R[1]$ through $R[k-1]$ 
in this order looking for the first register that is valid.
If it finds one, say $R[i]$, it writes $\langle p,k \rangle$ to $R[i]$, 
thus \emph{invalidating} $R[i]$ by making $last(R[i].seq)$ differ from the $i$-th entry stored in $R[k].seq$,
and returns the timestamp $(k,i)$.
If it fails to find one, it will perform a scan and prepare to start phase $k+1$.
Observe that this algorithm works correctly if all \getTS{} calls are sequential: 
the \getTS{} that starts phase $k$ returns $(k,0)$ and the $j$-th \getTS{} call after that,
for $1\leq j \leq k-1$, invalidates $R[j]$ and returns $(k,j)$.

There are several complications and subtleties that arise due to concurrent \getTS{} executions. 
Suppose a \getTS{} that began in phase $k$ sleeps before it writes its invalidation to a register $R[i]$.
If it wakes up during some later phase $k'$, its write could invalidate $R[i]$ for phase $k'$ 
making timestamp $(k',i)$ unusable, and so increase the space requirements. 
Such damage is confined to at most one such wasted timestamp per \getTS{} method as follows.
Each \getTS{p} begins by setting its local variable, $myrnd_p$, to the largest value such that $R[myrnd_p]$ is non-$\bot$.  
Before each of its writes, \getTS{p} checks that $R[myrnd_p +1]$ is still non-$\bot$.  
If it is not, the phase must have advanced, and \getTS{p} can safely terminate with timestamp $(myrnd_p+1, 0)$.

A more serious potential problem due to concurrency occurs when
our simplifying assumption above (that the scan and subsequent write occur in one atomic operation) does not hold.
Suppose at the end of phase $k-1$, both \getTS{p} and \getTS{q} are poised to execute a \scan and then write the result into $R[k]$.
If, after \getTS{p}'s scan and before \getTS{q}'s scan, some ``old'' writes happen to some registers say $R[1],\dots,R[j]$,
the results of their {\scan}s will differ.
After both {\scan}s, \getTS{q}'s view matches all register values, but \getTS{p}'s view matches 
only the contents of $R[j+1]$,\dots, $R[k-1]$.
Now let both \getTS{p} and \getTS{q} proceed until both are poised to write the result computed from their view 
to $R[k]$, and suppose \getTS{p} writes first.
At this point, registers $R[1],\dots,R[j]$ are already invalid because of \getTS{p}'s out-of-date view. 
Another \getTS{a} starting at this point will invalidate $R[j+1]$ and return timestamp $(k,j+1)$.
If after that, \getTS{q} writes to $R[k]$, the first $j$ registers could become valid, 
and \getTS{b} beginning after \getTS{a} completes would invalidate $R[1]$ and return timestamp $(k,1)$, 
which is incorrect because it is less than \getTS{a}'s timestamp.
This problem is eliminated by ensuring that when \getTS{a} determines that a register $R[i]$ is invalid,
it will remain invalid for the duration of the phase. 
One way to achieve this is to have \getTS{a} overwrite the invalid register $R[i]$ with $\langle a, myrnd_a \rangle$ 
before it moves on to investigates the validity of $R[i+1]$.
This simple repair to correctness, however, can increase space complexity.  
Instead, the overwriting by \getTS{a} is done only when necessary. 
Specifically, \getTS{a} determines that a register $R[i]$ is invalid by reading a pair $\langle seq_i,rnd_i \rangle $ 
from $R[i]$ and finding that $last(seq_i)$ is not equal to its view of the $i$-th value in $R[k].seq$.
If $rnd_i =k$ then this invalidation cannot be due to an old write from an earlier phase, so no overwriting is needed. 
In the algorithm, \getTS{a} overwrites register $R[i]$ with $\langle a,k \rangle $ only when it read $rnd_i < k$.  

As we shall see, these additional techniques are enough to convert the idea of a timestamp object that is 
correct under sequential accesses to an algorithm for concurrent timestamps that is 
correct (Lemma \ref{lemma:correctnessOfOneTimeAlg}) and 
space efficient (Lemma \ref{lemma:spaceComplexityOfOneTimeAlg})
and  waitfree (Lemma \ref{lemma:waitfreedomOfOneTimeAlg}). 
These three lemmas, when specialized to the one-shot case,  
constitute the proof of Theorem \ref{theorem:one-timeUbd}.


\subsection{Algorithms~\ref{algo:compare} and \ref{algo:getTS} Correctly Implement Timestamps }
\label{subsec:oneTimeAlgCorrectness}

We isolate some properties of Algorithm~\ref{algo:getTS} that will serve to simplify 
both the correctness and complexity arguments. 
In the following, the local variable $x$ in the code of Algorithm~\ref{algo:getTS} is 
denoted by $x_{id}$ when it is used in the method call of \getTS{id}.

\begin{claim}
\label{claim:simpleProperties}
In any execution 
\begin{enumerate}
\item [(a)]
once the content of a shared register becomes non-$\bot$ it remains non-$\bot$; and
\item [(b)]
For any $j$,  $1 \leq j \leq m$, 
the value of last($R[j].seq$) that is written by each write to $R[j]$ is distinct. 
\end{enumerate}
In any configuration of an execution
\begin{enumerate}
\item [(c)] 
if any \getTS{id} has returned $(rnd, turn)$ then $R[rnd] \neq \bot$; and 
\item [(d)]
if $R[k] \neq \bot$ then $\forall k' \leq k$, $R[k'] \neq \bot$.
\end{enumerate}
\end{claim}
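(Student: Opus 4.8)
The plan is to prove each of the four parts (a)--(d) essentially by induction on the length of the execution, tracking which configuration transitions can change the relevant quantities, and in each case identifying the finite set of lines of Algorithm~\ref{algo:getTS} capable of causing such a change.

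For part (a), I would observe that the only writes to a shared register $R[j]$ occur at lines~8, 11, and~15, and in every case the value written is a pair $\langle seq, rnd\rangle$ with $rnd \geq 1$ (since $myrnd \geq 1$ always, as $j$ starts at $1$ in the scanning loop and is incremented at least... well, actually $myrnd = j-1$ could be $0$, but then the \texttt{for} loop at line~5 is empty and line~15 writes to $R[1]$ with $rnd = 1$). Thus no write ever sets a register back to $\bot$, so once non-$\bot$ it stays non-$\bot$. This is the easy base observation and needs only a case check over the three write lines.

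For part (b), the key point is that every write to $R[j]$ at lines~8, 11, or~15 has the form $\langle \dots, \ID\rangle$ or a sequence ending in $\ID$, where $\ID$ is the getTS-id of the writing method call. Since each getTS-id $p.k$ is used by exactly one \getTS{} invocation, and each \getTS{id} invocation executes the body of the \texttt{for} loop at line~5 at most once per index and reaches line~15 at most once, each getTS-id writes to $R[j]$ at most once. Actually I need to be slightly careful: a single \getTS{id} could write to $R[j]$ at line~11 in iteration $j$ of its loop and never revisit $R[j]$, so each id writes $last(R[j].seq) = \ID$ at most once; hence all such written last-values are distinct. Part~(b) therefore reduces to: (i) every write to $R[j]$ makes $last(R[j].seq)$ equal to the writer's getTS-id, and (ii) getTS-ids are globally unique and each is "owned" by one invocation that writes to $R[j]$ at most once. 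The mild subtlety is confirming from the code that the $seq$ written at line~15 genuinely ends in $\ID$, which the pseudocode shows it does.

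Part (c) I would prove by noting that the only places a \getTS{id} returns a pair $(rnd, turn)$ are lines~9 (returns $(myrnd, j)$), 12 (returns $(myrnd+1, 0)$), and 16 (returns $(myrnd+1, 0)$). In the line~9 case, $rnd = myrnd$ and the method only reached line~9 after line~4 set $myrnd = j-1$ where $R[1], \dots, R[myrnd]$ were all read non-$\bot$ in the \texttt{while} loop; by part~(a) they remain non-$\bot$ forever, so $R[rnd] = R[myrnd] \neq \bot$. In the line~12 and line~16 cases, $rnd = myrnd+1$, and the method reached that line only after testing $R[myrnd+1] \neq \bot$ (the \texttt{else} branch of line~6, or the \texttt{if} at line~13 failing) --- wait, line~16 is reached unconditionally after line~13, so I should check: either line~14 wrote $R[myrnd+1]$ making it non-$\bot$, or the test $r[myrnd+1] == \bot$ at line~13 failed, meaning the scan saw $R[myrnd+1]$ non-$\bot$; either way $R[myrnd+1]$ was non-$\bot$ at some earlier point, so by~(a) it is non-$\bot$ now. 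The main obstacle here, and the only place requiring real care, is the line~16 path: I must verify that whenever line~16 executes, $R[myrnd+1]$ has in fact already been made non-$\bot$ --- this follows because line~14 is the immediately preceding step and its guard plus line~15's write (or a concurrent write) ensures it. Part~(d) is then immediate from part~(a) combined with the algorithm's structural invariant that a register $R[k]$ is only written for the first time by a method whose \texttt{while} loop at line~1 read $R[1], \dots, R[k-1]$ as non-$\bot$; so at the moment $R[k]$ first becomes non-$\bot$, all $R[k']$ with $k' < k$ are already non-$\bot$, and by~(a) they stay that way. I expect part~(d)'s justification of "the \texttt{while} loop forces $R[1..k-1]$ non-$\bot$ before any write to $R[k]$" to require a short induction tying the value of $myrnd$ to the prefix of non-$\bot$ registers, which is the one genuinely inductive ingredient in the whole claim.
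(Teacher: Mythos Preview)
Your proposal is correct and follows essentially the same approach as the paper: for each of (a)--(d) you do a case analysis over the three write sites (lines~8, 11, 15) and the three return sites (lines~9, 12, 16), combined with the observation that the while-loop certifies $R[1],\dots,R[myrnd]$ non-$\bot$ before any write. Two minor remarks: your line numbers in the discussion of part~(c) are slightly off (the test is at line~14 and the write at line~15, not 13 and 14), and the ``short induction'' you anticipate for part~(d) is not actually needed---the paper handles it by the direct observation that any write to $R[k]$ is either at lines~8/11 with $k\leq myrnd-1$ or at line~15 with $k=myrnd+1$, and in both cases the while-loop has already witnessed $R[1],\dots,R[k-1]$ non-$\bot$.
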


\begin{proof}
\begin{enumerate}
\item [(a)]
No \getTS{} method call ever writes $\bot$ to any shared register.
\item [(b)]
The only writes to a shared register occur at lines~8, 11 and 15. 
In any single instance of \getTS, say \getTS{id}, 
in each iteration $j$ of the for-loop (line~5), for $1 \leq j \leq  myrnd_{id} -1$,
at most one write occurs, either at line~8 or 11 but not both, and any such write is to $R[j]$.  
If \getTS{id} writes at line~15, it writes to $R[myrnd_{id}+1]$.
So, in any single execution of \getTS{id}, each register is written at most once.
Every write by \getTS{id} to a register $R[j]$ sets $last(R[j].seq) $ to $id$, 
which is distinct for each \getTS method call. 
\item [(c)]
\getTS{id} returned at line~9, 12 or 16.
We show that in all cases the register $R[rnd]$ was written before \getTS{id} returned. 
Then the claim follows by (a).
If \getTS{id} returned  in line~9 then $rnd=myrnd_{id}$,
and $R[myrnd_{id}] \neq \bot$ when the while-loop of \getTS{id} completes. 
If \getTS{id} returned in line~12, then $rnd = myrnd_{id} + 1$. 
Before returning, however, \getTS{id} evaluated the if-statement in line~6 to be false,
implying $R[myrnd_{id} + 1] \neq \bot$. 
If \getTS{id} returned in line~16, then $rnd = myrnd_{id} + 1$ 
and either \getTS{id} evaluated the if-statement in line~14 to be false, 
or \getTS{id} wrote to $R[myrnd_{id} + 1]$ in line~15.
In either case, $R[myrnd_{id} + 1]\neq\bot$ before \getTS{id} returned. 
\item [(d)]
Consider any write to a register $R[k]$ and suppose it occurs in the execution of \getTS{id}.
The while-loop of \getTS{id} confirms that all registers 
$R[1] $ through $R[myrnd_{id}]$ were previously non-$\bot$, before any write by \getTS{id}.
Writes only occur in lines~8, 11 and 15 of \getTS. 
Every write by \getTS{id} in lines~8 and 11 is to some register $R[j]$ where $j < myrnd_{id}$. 
A write in line~15 by \getTS{id} is to $R[myrnd_{id} + 1]$. 
So in all cases, when the write to $R[k]$ occurred, 
registers $R[1]$ through $R[k-1]$ were previously non-$\bot$. 
The claim follows by (a).
\end{enumerate}
\end{proof}

\begin{definition}
A \getTS{} method \emph{fails in iteration $j$ in line~6} if, in its $j$-th iteration of the for-loop (line~5),
the if-condition in line~6 returns false;
it \emph{fails in iteration $j$ in line~7} if, in its $j$-th iteration of the for-loop,
the if-condition in line~7 returns false; 
and it \emph{fails in iteration $j$}, if either it fails in iteration $j$ in line~6 or it 
fails in iteration $j$ in line~7.
\end{definition}

\begin{claim}\label{claim:mainCorrectness}
If $myrnd_p \geq  myrnd_q$ for two method calls \getTS{p} and \getTS{q},
and \getTS{p} writes to $R[j]$ before the $j$-th iteration of the for-Loop of \getTS{q} begins,
then \getTS{q} fails in iteration $j$. 
\end{claim}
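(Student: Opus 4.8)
The plan is to show that in \getTS{q}'s $j$-th iteration of the for-loop of line~5 --- which exists, so that $1\le j\le myrnd_q-1$, since the hypothesis speaks of it --- at least one of the if-conditions of lines~6 and~7 does not hold. First I would dispatch the easy alternative: if \getTS{q}'s line-6 read in iteration~$j$ returns a non-$\bot$ value, then \getTS{q} fails in iteration~$j$ in line~6 and we are done; so I may assume $R[myrnd_q+1]=\bot$ at that read, hence, by Claim~\ref{claim:simpleProperties}(a), at every earlier moment as well. Then I would locate \getTS{p}'s write to $R[j]$: since $j\le myrnd_q-1\le myrnd_p-1$ it cannot be the line-15 write (which targets $R[myrnd_p+1]$), so it is a write $w_p$ at line~8 or line~11, performed during \getTS{p}'s own $j$-th iteration. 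In particular \getTS{p}'s while-loop, which precedes $w_p$, read $R[myrnd_p]\neq\bot$, so by Claim~\ref{claim:simpleProperties}(a), $R[myrnd_p]\neq\bot$ at every instant from $w_p$ onward.

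The heart of the proof concerns the value $v:=r_q[myrnd_q]$ that \getTS{q} holds for $R[myrnd_q]$ after its while-loop, hence before iteration~$j$. I claim $v$ was produced by a line-15 write. Were $v$ instead written at line~8 or~11 by some call, that call's while-loop would have observed $R[1],\dots,R[myrnd']$ non-$\bot$ for some $myrnd'\ge myrnd_q+1$, making $R[myrnd_q+1]\neq\bot$ from then on (Claim~\ref{claim:simpleProperties}(a),(d)), contradicting $R[myrnd_q+1]=\bot$ when \getTS{q} enters iteration~$j$. So $v$ comes from a line-15 write by some call \getTS{c} with $myrnd_c=myrnd_q-1$; thus $v.seq$ has length $myrnd_q>j$, so $a:=v.seq[j]$ is well-defined, and by the form of that line-15 write $a=last(r_c[j].seq)$, where $r_c$ is the view returned by \getTS{c}'s \scan at line~13. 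Treating that \scan as atomic (it is linearizable), let $\tau$ be its linearization point, so that $r_c[i]$ is the value of $R[i]$ at~$\tau$. Since $j\le myrnd_c$ and \getTS{c}'s while-loop read $R[myrnd_c]\neq\bot$, Claim~\ref{claim:simpleProperties}(a),(d) give $R[j]\neq\bot$ at~$\tau$; hence $r_c[j]$ is a non-$\bot$ value that was written to $R[j]$ by a write $w_a$ after which $last(R[j].seq)=a$, and by Claim~\ref{claim:simpleProperties}(b) $w_a$ is the \emph{unique} write to $R[j]$ with that effect; being the last write to $R[j]$ before $\tau$, $w_a$ precedes~$\tau$. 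Finally, since \getTS{c} reaches line~15, its line-14 test $r_c[myrnd_c+1]==\bot$ held, i.e.\ $R[myrnd_q]=\bot$ at~$\tau$, so Claim~\ref{claim:simpleProperties}(d) together with $myrnd_p\ge myrnd_q$ gives $R[myrnd_p]=\bot$ at~$\tau$.

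To finish, I would show that the line-7 test in \getTS{q}'s iteration~$j$ does not hold, i.e.\ that $last(R[j].seq)\neq a$ at the moment $\rho$ of the corresponding read of $R[j]$ (which lies inside iteration~$j$, hence after $w_p$); this makes \getTS{q} fail in iteration~$j$ in line~7. Suppose instead $last(R[j].seq)=a$ at~$\rho$. Then the last write to $R[j]$ before $\rho$ has that effect, so by uniqueness it is $w_a$; but $w_p$ is also a write to $R[j]$ before $\rho$, so $w_p$ does not strictly follow $w_a$, and since $w_a$ precedes~$\tau$ so does $w_p$; therefore \getTS{p}'s while-loop precedes~$\tau$, which forces $R[myrnd_p]\neq\bot$ at~$\tau$ and contradicts the previous paragraph.

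I expect the middle paragraph to be the main obstacle: establishing that \getTS{q}'s stored copy of $R[myrnd_q]$ originates from a line-15 write (which is where the passing line-6 test is really used), and then combining the \scan's linearization point with Claim~\ref{claim:simpleProperties}(d) to expose the clash between ``$R[myrnd_p]=\bot$ at~$\tau$'' and ``$R[myrnd_p]$ was already non-$\bot$ before $w_p$, and $w_p$ precedes~$\tau$''. The rest is bookkeeping about the relative order of the relevant reads and writes, for which parts (a)--(d) of Claim~\ref{claim:simpleProperties} are exactly the needed tools; no further property of the algorithm is required.
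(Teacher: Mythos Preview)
Your proof is correct and follows essentially the same approach as the paper's: both hinge on the observation that the value \getTS{q} holds in $r_q[myrnd_q]$ was produced from a \scan\ taken while $R[myrnd_q]=\bot$, hence before \getTS{p}'s while-loop (which saw $R[myrnd_p]\neq\bot$) and therefore before $w_p$, so Claim~\ref{claim:simpleProperties}(b) forces the line-7 comparison to fail. The only organizational difference is that the paper case-splits on $myrnd_p>myrnd_q$ (where line~6 fails directly) versus $myrnd_p=myrnd_q$ (where line~7 fails), whereas you handle both uniformly by assuming line~6 passes and deriving the line-7 failure by contradiction; you also spell out explicitly why $r_q[myrnd_q]$ must originate from a line-15 write, which the paper asserts more tersely.
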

\begin{proof}
$R[myrnd_p] \neq \bot$ when \getTS{p} executed line~1 of its while-loop for $j= myrnd_p$, 
and thus by Claim \ref{claim:simpleProperties}(a) remains non-$\bot$ after the while-loop completes. 

First, suppose $myrnd_{p} > myrnd_{q}$. 
By Claim \ref{claim:simpleProperties}(a) and (d), 
$R[myrnd_q+1] \neq \bot$ when \getTS{q} executes its $j$-th iteration of the for-loop.
So  the if-condition in line~6 of \getTS{q} returns false, and \getTS{q} fails at iteration $j$.

Now, suppose  $myrnd_{p} = myrnd_{q}$. 
\getTS{p} wrote to $R[j]$ after executing its while-loop and therefore after $R[myrnd_q]$ became non-$\bot$. 
The content of $r[{myrnd_{q}}]_q$, which $q$ read from $R[myrnd_q]$, 
came from the value of a scan taken during the execution of some \getTS when $R[myrnd_q] =\bot$.
Hence when \getTS{q} executes line~7 of the $j$-th iteration of the for-loop,
it compares the value of $r[{myrnd_{q}}]_q.seq[j]$, 
which is the value of $last(R[j].seq)$ that $R[j]$ had before $R[j]$ was written by \getTS{p}, 
to a value of $last(R[j].seq)$ that $R[j]$ had after this write.
So by Claim \ref{claim:simpleProperties}(b), 
this comparison must return false,  and \getTS{q} fails at iteration $j$.
\end{proof}

\begin{lemma}
\label{lemma:correctnessOfOneTimeAlg}
Provided $m = f(M) $ is large enough,
Algorithms~\ref{algo:compare} and \ref{algo:getTS} implement a timestamp object for any asynchronous shared memory
system of processes that invokes \getTS{} a total of at most $M$ times. 
\end{lemma}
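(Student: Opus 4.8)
The plan is to show that the \compare{} method of Algorithm~\ref{algo:compare} respects the happens-before order on \getTS{} calls, i.e., if \getTS{p} returns $t_p$, \getTS{q} returns $t_q$, and \getTS{p} $\rightarrow$ \getTS{q}, then $t_p$ precedes $t_q$ in the lexicographic order on $\mathbb{N}\times(\mathbb{N}\cup\{0\})$. I would first establish, using Claim~\ref{claim:simpleProperties}, the basic "phase monotonicity" fact: the largest index $k$ with $R[k]\neq\bot$ only increases over time, so if \getTS{p} completes before \getTS{q} starts, then $myrnd_q \geq myrnd_p$, and in fact $myrnd_q \geq rnd_p$ (since by Claim~\ref{claim:simpleProperties}(c), $R[rnd_p]\neq\bot$ when \getTS{p} returns, hence $q$'s while-loop reads at least that far). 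This immediately handles the case $rnd_q > rnd_p$: then $t_p = (rnd_p, \cdot) < (rnd_q,\cdot) = t_q$ lexicographically.

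The heart of the argument is the case $rnd_q = rnd_p =: k$. Here I would argue that $turn_q > turn_p$. First dispose of the sub-case $turn_p = 0$: then $t_p = (k,0)$ is the minimum possible timestamp with first coordinate $k$, and I would need $turn_q \geq 1$ — that is, \getTS{q} cannot also return $(k,0)$. Since \getTS{p} returned $(k,0)$, it did so at line~12 or line~16; in the line~16 case it may have written the "phase-$k$ seed" into $R[k]$ (the sequence of $last$-values), and by Claim~\ref{claim:simpleProperties}(b) this write fixed $last(R[k].seq)$ to $p$'s getTS-id forever; in the line~12 case someone else had already seeded $R[k]$. Either way, by the time $q$ starts, $R[k]\neq\bot$ with a final seed in place, so $myrnd_q > k-1$ forces $q$'s while-loop past index $k$, giving $myrnd_q \geq k$; I must then rule out $q$ returning with first coordinate $k$ via line~12/16 — which would require $R[k+1]=\bot$ at the relevant point — versus $q$ returning $(k,j)$ for some $j\geq 1$ at line~9. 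If $myrnd_q = k$ this needs a separate look; if $myrnd_q > k$ then $q$ returns with first coordinate $\geq k+1 > k$. For $turn_p = j \geq 1$: \getTS{p} returned $(k,j)$ at line~9, meaning in iteration $j$ it found $r[myrnd_p].seq[j] = last(R[j].seq)$ and then wrote $\langle (p), myrnd_p\rangle$ to $R[j]$, and $myrnd_p = k$. Now I apply Claim~\ref{claim:mainCorrectness} with the roles $p$ (the writer) and $q$: since $myrnd_q \geq myrnd_p$ and \getTS{p} wrote to $R[j]$ before \getTS{q} even begins (hence before $q$'s $j$-th iteration), \getTS{q} fails in iteration $j$. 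Iterating this reasoning over all $j' \leq j$ (every process that returned $(k,j')$ wrote to $R[j']$, and by transitivity of happens-before these writes all precede $q$'s loop), \getTS{q} fails in iterations $1,\dots,j$, so if $q$ returns at line~9 it returns $(k,j')$ with $j' > j$, and if $q$ falls through the for-loop it returns first coordinate $\geq k+1$. Either way $t_q > t_p$.

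The final piece is to confirm $f(M) = \lceil 2\sqrt M\rceil$ is "large enough" in the sense demanded by the lemma's hypothesis — but this is exactly the content of the space-complexity lemma (Lemma~\ref{lemma:spaceComplexityOfOneTimeAlg}), which bounds the highest phase ever reached, so here I would just invoke that lemma to guarantee no \getTS{} ever runs off the end of the array $R[1\ldots m]$; combined with wait-freedom (Lemma~\ref{lemma:waitfreedomOfOneTimeAlg}) every \getTS{} returns a well-defined timestamp. I expect the main obstacle to be the bookkeeping in the $rnd_p = rnd_q$, $turn_p \geq 1$ case: carefully chaining Claim~\ref{claim:mainCorrectness} across several processes and the three possible return lines of $q$, while correctly handling the "old write from an earlier phase" subtlety that the $rnd_i < k$ overwrite at line~11 was designed to neutralize — one must verify that once $q$'s view is taken, the registers $R[1],\dots,R[j]$ that $p$ invalidated really do stay invalid through $q$'s for-loop, which is where the line~11 overwriting (versus the line~8 return) matters.
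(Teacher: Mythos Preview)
Your overall structure matches the paper's: establish $myrnd_q \geq rnd_p$ via Claim~\ref{claim:simpleProperties}, dispose of the cases where $q$ returns at line~12/16 or where $myrnd_q > rnd_p$, and then focus on the case where both $p$ and $q$ return at line~9 with $myrnd_p = myrnd_q = k$. The application of Claim~\ref{claim:mainCorrectness} at $j' = turn_p$ (using $p$'s own line~8 write) is exactly right.

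The gap is in how you handle $j' < turn_p$. Your plan is to find, for each such $j'$, ``the process that returned $(k,j')$'' and chain happens-before through it. This does not work: nothing guarantees that any process returned $(k,j')$, and even if one did, you have no happens-before relation between that process and $q$ --- the only ordering you are given is $p \rightarrow q$. The paper's argument stays entirely inside $p$'s own execution. Since $p$ returned at line~9 in iteration $turn_p$, it must have failed at line~7 (not line~6, else it would have returned at line~12) in every iteration $j' < turn_p$. At each such $j'$, $p$ then executed line~10 and read some value from $R[j'].rnd$. If that value was $\geq myrnd_p$, then the writer $p'$ of that value has $myrnd_{p'} \geq myrnd_p = myrnd_q$ (a line~15 write to $R[j']$ would give $rnd = j' < myrnd_p$, so the write was at line~8 or~11 with $rnd = myrnd_{p'}$). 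If the value was $< myrnd_p$, then $p$ itself wrote $\langle (\mathrm{ID}_p), myrnd_p\rangle$ to $R[j']$ at line~11. Either way you get a write to $R[j']$ by some process with $myrnd \geq myrnd_q$, and that write occurs during $p$'s execution, hence before $q$ begins. Now Claim~\ref{claim:mainCorrectness} forces $q$ to fail at iteration $j'$.

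You clearly sense this mechanism in your closing paragraph (``which is where the line~11 overwriting \dots matters''), but your stated plan --- chaining through other processes' return values --- is not salvageable. Replace it with the trace-through-$p$'s-for-loop argument above and the proof goes through. Also, note that the hypothesis of Claim~\ref{claim:mainCorrectness} is $myrnd_{\text{writer}} \geq myrnd_q$, not the reverse; you wrote it backwards, though in this case equality holds so no harm is done.
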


\begin{proof}
Let  \getTS{p} and \getTS{q} return timestamps $(rnd_{p}, turn_{p} )$ 
and $( rnd_{q}, turn_{q} ) $ respectively. 
We need to show that if \getTS{p} happens before \getTS{q}, then 
compare($(rnd_p,turn_p),(rnd_q,turn_q)$) returns true.
That is, we need to show that $(rnd_{p} < rnd_{q}) \vee \bigl((rnd_{p} = rnd_{q})\wedge (turn_{p} < turn_{q})\bigr)$.

By Claim \ref{claim:simpleProperties}\,(c), after \getTS{p} completes, $R[rnd_{p}] \neq \bot$. 
Therefore by Claim \ref{claim:simpleProperties}\,(a) and (d), 
$R[1],\dots,R[rnd_p]\neq\bot$ throughout the method call \getTS{q}.
Thus, at line~4 after the while-loop,  \getTS{q} sets $myrnd_{q} \geq rnd_{p}$. 
If \getTS{q} returns at line~12 or 16, 
$rnd_{q} = myrnd_{q} + 1 \geq rnd_{p} + 1$ implying $(rnd_{p} < rnd_{q})$ so 
compare($(rnd_p,turn_p),(rnd_q,turn_q)$ returns true as required.
If  \getTS{q} returns at line~9, 
and $myrnd_{q} > rnd_{p}$,  
then again compare($(rnd_p,turn_p),(rnd_q,turn_q)$ returns true.
Therefore, suppose that \getTS{q} returns at line~9 and  $rnd_q = myrnd_q = rnd_p$.
In this case, $turn_{q}$ is non-zero. 
If $p$ returns at line~12 or 16, $turn_{p}$ is zero. 
So again compare($(rnd_p,turn_p),(rnd_q,turn_q)$ returns true.

The only remaining case is that both \getTS{p} and \getTS{q} return at line~9 and $rnd_q= myrnd_q = rnd_p=myrnd_p$.
In this case, we show that  $turn_q > turn_p$ by proving that \getTS{q} fails at every iteration $1$ through $turn_p$. 
By Lemma \ref{claim:mainCorrectness}, it suffices to show that for every $j$, $1 \leq j \leq turn_p$,
there is some \getTS{$p'$}, with  $myrnd_{p'} \geq myrnd_{q}$ that writes to $R[j] $ before \getTS{q} begins iteration $j$.
For \getTS{p}, the if-condition in line~7 must have returned false 
for all iterations $1$ through $turn_p - 1$, and then returned true in iterations $turn_p$. 
For $j < turn_{p}$, when \getTS{p} fails at iteration $j$, it reads $R[j]$ (line~10).
If this read shows $R[j].rnd \geq myrnd_p$ 
there must be a \getTS{$p'$}, with $myrnd_p' \geq myrnd_p$ that wrote this.
If the read shows $R[j].rnd < myrnd_p$, 
then \getTS{p} itself writes to $R[j]$ changing $R[j].rnd$ to $myrnd_p$ (line~11).
For $j = turn_{p}$ process $p$ itself writes into register $R[j]$. 
In all cases, the write happened before \getTS{q}.
\end{proof}

\subsection{Space Complexity of Algorithm~\ref{algo:getTS} }
\label{subsec:oneTimeAlgComplexity}

Fix an arbitrary execution $E$ that contains at most $M$ \getTS{} invocations.
In this subsection we prove no register beyond $R\big[\left\lceil 2\sqrt{M}\right\rceil\big]$
is accessed in $E$.

The proof proceeds as follows.
We partition $E$ into \emph{phases}. 
Phase $0$ starts at the beginning of $E$.
Phase $\varphi \geq 1$ starts at the point of the first \scan (line~13) by some \getTS{p}, for which $myrnd_{p} = \varphi - 1$. 
We say that phase $\varphi$ \emph{completes during $E$}, if phase $\varphi + 1$ starts during $E$.
Call the first write to $R[j]$ during any phase an \emph{invalidation write}.
First, Claim \ref{claim:registersNotWrittenPerPhase} shows that 
only registers $R[1]$ through $R[\varphi]$ can be written during phase $\varphi$.
Next, Claim \ref{claim:invalidationsPerPhase} 
establishes that if phase $\varphi$ completes then it contains exactly $\varphi$ invalidation writes. 
Finally, we define a charging mechanism that charges each invalidation write in $E$ to 
some write in $E$ in such a way that there are at most $2$ charges to all the writes of any one \getTS{} invocation.
This gives us Claim \ref{claim:totalInvalidations}, 
which states that there are a total of at most $2M$ invalidation writes.

Therefore, the total number of phases, $\Phi$, satisfies: $\sum^{\Phi}_{\varphi = 1} \varphi \leq 2 M$.
Hence, $\Phi< 2\cdot \sqrt{M}$. 
The algorithm uses a final sentinel register that is read but never written, and  always contains the initial value $\bot$.
So at most $\lceil 2\cdot \sqrt{M} \rceil$ registers are accessed in $E$.
Therefore, once Claims \ref{claim:registersNotWrittenPerPhase}, 
\ref{claim:invalidationsPerPhase} and \ref{claim:totalInvalidations} 
are proved (below) we have the following:
\begin{lemma}
\label{lemma:spaceComplexityOfOneTimeAlg}
Algorithm \ref{algo:getTS} uses at most $\lceil 2 \sqrt{M}\rceil$ registers for $M$ \getTS{} operations. 
\end{lemma}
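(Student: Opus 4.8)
The target statement is Lemma~\ref{lemma:spaceComplexityOfOneTimeAlg}, asserting that Algorithm~\ref{algo:getTS} uses at most $\lceil 2\sqrt{M}\rceil$ registers for $M$ \getTS{} invocations. The excerpt already lays out the entire skeleton: partition the execution $E$ into phases, bound the writable registers per phase, count invalidation writes per phase, charge invalidation writes to writes (at most two per \getTS{}), and sum up. So my plan is essentially to prove the three cited claims in order and then assemble the arithmetic, which the excerpt has already spelled out: $\sum_{\varphi=1}^{\Phi}\varphi \le 2M$ forces $\Phi < 2\sqrt{M}$, add the sentinel register, done.

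\textbf{Step 1 (Claim~\ref{claim:registersNotWrittenPerPhase}): only $R[1],\dots,R[\varphi]$ are written during phase $\varphi$.} I would argue that any write during phase $\varphi$ is performed by a \getTS{p} with $myrnd_p \le \varphi$: a write at line~8, 11, or 15 goes to $R[j]$ with $j \le myrnd_p$ or to $R[myrnd_p+1]$, so it suffices to show $myrnd_p \le \varphi - 1$ would already be too restrictive and instead that $myrnd_p+1 \le \varphi$, i.e.\ $myrnd_p \le \varphi-1$ for line-15 writes, and $myrnd_p \le \varphi$ for lines 8 and 11. The key is that when \getTS{p} sets $myrnd_p = j-1$ at line~4, registers $R[1],\dots,R[myrnd_p]$ were non-$\bot$, while $R[myrnd_p+1]$ was $\bot$ at the moment of its while-loop read; combined with Claim~\ref{claim:simpleProperties}(a),(d) this means phase $myrnd_p+1$ had not yet started when \getTS{p}'s while loop read $R[myrnd_p+1]=\bot$, hence all of \getTS{p}'s writes occur in a phase $\ge$ something but the phases only advance via line-13 scans. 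I would need to pin down precisely that a line-15 write to $R[myrnd_p+1]$ happens no earlier than phase $myrnd_p+1$ (because that write is preceded by the line-13 scan that \emph{starts} phase $myrnd_p+1$ by definition) and no later either if it happens during $E$. This is the most delicate of the three claims because it requires carefully matching the definition of ``phase $\varphi$ starts at the first line-13 scan by a \getTS{} with $myrnd = \varphi-1$'' against where each write physically sits.

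\textbf{Step 2 (Claim~\ref{claim:invalidationsPerPhase}): if phase $\varphi$ completes, it contains exactly $\varphi$ invalidation writes.} An invalidation write is the first write to $R[j]$ during a phase. By Step~1 the only candidates are $j \in \{1,\dots,\varphi\}$, giving at most $\varphi$. For the lower bound, I would use the fact that phase $\varphi+1$ starts because some \getTS{q} with $myrnd_q = \varphi$ reached line~13 after failing at every iteration $1,\dots,\varphi-1$ of its for-loop; failing in iteration $j$ in line~7 means $R[j].seq$'s last entry differs from $r[myrnd_q].seq[j]$, and since $r[myrnd_q]$ was read when $R[myrnd_q]$ was written with some phase-$\varphi$ sequence, this mismatch means $R[j]$ was written during phase $\varphi$ after that point — so $R[j]$ received a write in phase $\varphi$ for every $j \le \varphi - 1$, and also $R[\varphi]$ was written during phase $\varphi$ (it is non-$\bot$ by the time phase $\varphi+1$'s starting scan returns, since that scan returns a $k$-tuple view with $k = \varphi$... here I'd need $R[\varphi] \ne \bot$ which follows because \getTS{q} had $myrnd_q = \varphi$). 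That gives $\ge \varphi$ invalidation writes, hence exactly $\varphi$.

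\textbf{Step 3 (Claim~\ref{claim:totalInvalidations}): at most $2M$ invalidation writes total.} I would define the charging: each invalidation write $w$ to $R[j]$ in phase $\varphi$ is the first write to $R[j]$ in that phase and is performed by some \getTS{id}; charge $w$ to one of \getTS{id}'s own writes. The subtlety, and the reason ``$2$'' rather than ``$1$'', is exactly the concurrency scenario described in the prose: a \getTS{} that slept across a phase boundary can perform an invalidation write in a phase it did not belong to, but the ``$R[myrnd_p+1] \ne \bot$'' guard at line~6 together with the structure means each \getTS{} contributes at most one ``stale'' invalidation plus at most one ``legitimate'' one. I would make this precise by charging an invalidation write done by \getTS{id} at line~8/11/15 during phase $\varphi$: if $myrnd_{id} + 1 = \varphi$ call it legitimate, otherwise stale; show each \getTS{} makes at most one write with $myrnd_{id}+1=\varphi$ that is an invalidation (it writes at most once at line~15, and its line-8/11 writes go to $R[j]$, $j<myrnd_{id}$, which are in phase $\varphi-1$ not $\varphi$...), and at most one stale invalidation write. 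Then $\sum_\varphi (\#\text{invalidation writes in }\varphi) \le 2 \cdot (\#\getTS\text{ calls}) = 2M$.

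\textbf{Assembly.} By Claims~\ref{claim:invalidationsPerPhase} and \ref{claim:totalInvalidations}, if $\Phi$ is the number of phases that complete during $E$ then $\sum_{\varphi=1}^{\Phi}\varphi \le 2M$, so $\Phi(\Phi+1)/2 \le 2M$, whence $\Phi < 2\sqrt{M}$. By Claim~\ref{claim:registersNotWrittenPerPhase} no register past $R[\Phi+1]$ is ever written, and since the algorithm also reads one sentinel register $R[\Phi+2]$ that stays $\bot$, at most $\Phi + 2 \le \lceil 2\sqrt M\rceil$ registers are accessed — matching the bound claimed. The main obstacle I anticipate is Step~1: making the phase/write alignment airtight requires careful bookkeeping of when line-13 scans occur relative to line-15 writes, and getting the off-by-one between $myrnd_p$, $myrnd_p+1$, and $\varphi$ exactly right; Steps~2 and 3 are largely consequences of Claim~\ref{claim:simpleProperties}(a),(b),(d) and Claim~\ref{claim:mainCorrectness} plus a careful but routine charging argument.
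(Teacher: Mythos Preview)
Your Steps~1 and~2 follow the paper's approach and are essentially correct (the paper packages Step~1 via an auxiliary Claim~\ref{claim:phaseRound} relating $myrnd_p$ to the phase number, and Step~2 via Claim~\ref{claim:failingImpliesPriorWrites}, but your outline captures the right ideas).

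The gap is in Step~3. Your charging scheme asserts that each \getTS{id} performs at most one ``legitimate'' invalidation write (one with $myrnd_{id}+1=\varphi$), and you justify this by saying ``its line-8/11 writes go to $R[j]$, $j<myrnd_{id}$, which are in phase $\varphi-1$ not $\varphi$.'' That last clause is false: line-11 writes by \getTS{p} can occur during phase $myrnd_p+1$, specifically during its \emph{invisible} part (after some \getTS{a} with $myrnd_a=myrnd_p$ has executed its line-13 \scan, but before anyone has written $R[myrnd_p+1]$). In that window, \getTS{p} still reads $R[myrnd_p+1]=\bot$ at line~6 and can execute several iterations of its for-loop, each potentially writing at line~11. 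If each such $R[j]$ has not yet been written in phase $myrnd_p+1$, every one of those writes is an invalidation write in phase $myrnd_p+1$, and all are ``legitimate'' under your definition. So a single \getTS{} can contribute many more than two invalidation writes, and your direct bound of $2M$ fails.

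The paper handles exactly this case with a more subtle charging. It partitions the invalidation writes into four sets $I_1,\dots,I_4$; the troublesome ones---line-11 invalidation writes during the invisible part of phase $myrnd_p+1$ that are \emph{not} \getTS{p}'s first invalidation write---form $I_4$. Each $w\in I_4$ is charged not to \getTS{p} at all, but to the write $f(w)$ by some \emph{other} \getTS{b} that placed the stale value $R[j].rnd<myrnd_p$ that \getTS{p} read at line~10. The paper then argues that $f(w)$ occurred during phase $myrnd_p$ after \getTS{a} had already failed iteration $j$ there, hence $f(w)$ is not itself an invalidation write and is \getTS{b}'s last write (since \getTS{b} will next see $R[myrnd_b+1]\neq\bot$ and return). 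Injectivity holds because distinct $w\in I_4$ in the same phase target distinct registers. This cross-charging is the missing idea your proposal needs.
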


\subsection*{Technical claims}

The proof of Lemma \ref{lemma:spaceComplexityOfOneTimeAlg},
via Claims \ref{claim:registersNotWrittenPerPhase}, \ref{claim:invalidationsPerPhase} and \ref{claim:totalInvalidations},
is the most challenging part of our arguments concerning Algorithm \ref{algo:getTS}. 
First, we encapsulate the relationship between 
the value of the variable $myrnd_p$ of a \getTS{p} method call 
and the phase number $\varphi$ during which \getTS{p} writes to certain registers.

\begin{claim}\label{claim:phaseRound}\samepage
\ \begin{itemize}
	\item [(a)] If \getTS{p} writes to register $R[i]$ when $R[i+1]=\bot$, then that write occurs in line~15.
	\item[(b)] \getTS{p} executes line~15 during some phase $\varphi \geq myrnd_{p} + 1$.
	\item[(c)] \getTS{p} executes line~4 during some phase $\varphi' \geq myrnd_{p}$.
\end{itemize}
\end{claim}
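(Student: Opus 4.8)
The plan is to prove the three parts in order, since (a) is used in (b), and the argument for (c) is parallel to that for (b). For part~(a), I would argue by elimination of the other two write-sites. A write to $R[i]$ by \getTS{p} can only occur at lines~8, 11, or~15. Writes at lines~8 and~11 happen inside the for-loop of line~5 with loop index $j$, so they write to $R[j]$ with $j \leq myrnd_p - 1$; hence at such a write $R[j+1]$ is one of $R[1],\dots,R[myrnd_p]$, all of which were confirmed non-$\bot$ by \getTS{p}'s while-loop and, by Claim~\ref{claim:simpleProperties}(a), stay non-$\bot$. So if the write to $R[i]$ happens while $R[i+1]=\bot$, it cannot be at line~8 or~11; it must be at line~15.

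For part~(b), suppose \getTS{p} executes line~15; it writes to $R[myrnd_p+1]$, and to reach line~15 it must have passed the test at line~14, so $R[myrnd_p+1]=\bot$ immediately before this write. Let $\varphi$ be the phase in which this line~15 execution occurs. Since the write is to $R[myrnd_p+1]$ while $R[myrnd_p+2]=\bot$ (which follows from $R[myrnd_p+1]=\bot$ and Claim~\ref{claim:simpleProperties}(d)), part~(a) tells us this is consistent, and more to the point I want to show $\varphi \geq myrnd_p+1$. The key observation is that phase $\varphi'$ begins with a \scan by some \getTS{} whose $myrnd$ equals $\varphi'-1$, and at that moment registers $R[1],\dots,R[\varphi'-1]$ are non-$\bot$ while $R[\varphi']=\bot$; combined with Claim~\ref{claim:simpleProperties}(a,d) this means that throughout phase $\varphi'$ at most registers $R[1],\dots,R[\varphi']$ are non-$\bot$ at the start of the phase, and a register $R[k]$ can become non-$\bot$ during phase $\varphi'$ only if $k \leq \varphi'$ --- this is essentially Claim~\ref{claim:registersNotWrittenPerPhase}, which I may invoke. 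Since \getTS{p}'s while-loop established $R[1],\dots,R[myrnd_p]$ non-$\bot$ before line~4, and these stay non-$\bot$, the phase $\varphi$ containing the later line~15 must already have $R[myrnd_p]$ non-$\bot$, forcing $\varphi \geq myrnd_p$; and since line~15 writes (for the first time in that phase, being an invalidation write to $R[myrnd_p+1]$) we get that $R[myrnd_p+1]$ is among the registers writable in phase $\varphi$, i.e.\ $myrnd_p+1 \leq \varphi$, which is the claim.

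For part~(c), \getTS{p} sets $myrnd_p = j-1$ at line~4 exactly when its while-loop has just read $R[1],\dots,R[myrnd_p]$ as non-$\bot$ and found $R[myrnd_p+1]=\bot$ at the guard. Let $\varphi'$ be the phase during which line~4 is executed. Since $R[myrnd_p]$ was observed non-$\bot$ at some point at or before this moment and stays non-$\bot$, and since during phase $\varphi'$ (by Claim~\ref{claim:registersNotWrittenPerPhase} and the phase-start invariant above) only registers $R[1],\dots,R[\varphi']$ are ever non-$\bot$, we conclude $myrnd_p \leq \varphi'$. The main obstacle I anticipate is pinning down precisely the phase-boundary invariant --- that at the start of phase $\varphi$ exactly $R[1],\dots,R[\varphi-1]$ are non-$\bot$ and $R[\varphi]=\bot$ --- with enough care that the monotonicity arguments (a register, once non-$\bot$, cannot have been non-$\bot$ "before its phase") are airtight; everything else is bookkeeping on which line writes which index. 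If Claim~\ref{claim:registersNotWrittenPerPhase} is already available, (b) and (c) reduce to short consequences of it together with Claim~\ref{claim:simpleProperties}; the delicate point is only to confirm that Claim~\ref{claim:phaseRound} is not secretly needed inside the proof of Claim~\ref{claim:registersNotWrittenPerPhase}, i.e.\ that there is no circularity, and I would present part~(a) first precisely so that it can be used freely in establishing the phase structure.
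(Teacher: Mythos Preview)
Your argument for part~(a) is correct and matches the paper's proof exactly.

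The circularity you worry about in parts~(b) and~(c) is real: the paper's proof of Claim~\ref{claim:registersNotWrittenPerPhase} explicitly invokes Claim~\ref{claim:phaseRound}(b) and~(c), so you cannot appeal to it here. Your proposed route therefore does not close.

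The paper avoids this by arguing directly from the \emph{definition} of phase rather than from any derived invariant about which registers are writable. For~(b): to reach line~15, \getTS{p} has already executed its \scan in line~13, and that scan is a scan by a method with $myrnd = myrnd_p$; by definition, phase $myrnd_p+1$ starts at the \emph{first} such scan, so either \getTS{p}'s scan is that first one or phase $myrnd_p+1$ has already begun. Either way line~15 lies in some phase $\varphi \geq myrnd_p+1$. No appeal to which registers can be written is needed.

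For~(c) the paper uses part~(a) to trace backwards. Before line~4, \getTS{p}'s while-loop saw $R[myrnd_p]\neq\bot$. The write that first made $R[myrnd_p]$ non-$\bot$ occurred while $R[myrnd_p+1]=\bot$ (by Claim~\ref{claim:simpleProperties}(d)), so by part~(a) it was a line~15 write by some \getTS{r} with $myrnd_r+1 = myrnd_p$. That \getTS{r} had already executed its scan at line~13 before writing, and that scan (with $myrnd_r = myrnd_p-1$) starts or lies after the start of phase $myrnd_p$. Hence phase $myrnd_p$ began before \getTS{p}'s line~4. This is the step your plan is missing: instead of reasoning about what is writable in a phase, reason about which scan must already have happened.
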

\begin{proof}
(a) If $w$ is a  write by \getTS{p} to $R[i]$ in line~8 or 11, then $i\leq myrnd_p-1$.
When \getTS{p} previously read $R[myrnd_p]$ in line~2, its value was non-$\bot$, 
so, by Claim~\ref{claim:simpleProperties}(a) and (d) $R[i+1]$ is non-$\bot$ when $w$ occurred.
Hence, any write to  $R[i]$ when $R[i+1]=\bot$ could not have occurred at line line~8 or 11, 
and thus could only occur at line~15.

(b) When \getTS{p} executes line~15, it has already executed its \scan in line~13.
By definition of phase, 
if phase $myrnd_p+1$ had not already begun before this \scan occurred, 
then it began with this scan. 

(c) Before \getTS{p} executes line~4, its while-loop terminated because $R[myrnd_{p}] \neq \bot$ and $R[myrnd_p +1] =\bot$. 
By (a), $R[myrnd_p]$ must have previously changed from $\bot$ to non-$\bot$, when some \getTS{r} executed line~15.
When \getTS{r} executes this write, it wrote to $R[myrnd_r +1]$, so $myrnd_r +1 = myrnd_p$.
Before this write, \getTS{r} executed a \scan at line 13, which either started phase $myrnd_r +1$, or 
phase $myrnd_r+1$ had already started. 
Thus, $myrnd_r+1 =myrnd_p$ started before \getTS{p} executed line~4.
\end{proof}

\begin{claim}\label{claim:failingImpliesPriorWrites}
If during phase $myrnd_{p}$, \getTS{p} fails iteration $i$ at line~7, 
then during phase $myrnd_{p}$ and before the failure, there was a write to $R[i]$ and a write to $R[myrnd_{p}]$. 
\end{claim}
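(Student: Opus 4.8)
The plan is to unwind the definition of ``fails iteration $i$ at line~7'' and trace back the register reads that \getTS{p} performed, showing that the relevant contents of $R[i]$ and $R[myrnd_p]$ must have been established by writes that occur during phase $myrnd_p$ itself. First I would recall what happens when \getTS{p} fails iteration $i$ at line~7: it means that in the $i$-th iteration of the for-loop, the if-condition in line~6 returned true (so the iteration reached line~7) and the if-condition in line~7 returned false, i.e.\ $r[myrnd_p].seq[i] \neq last(R[j].seq)$ with $j=i$ at that moment. In particular, the value of $R[i]$ that \getTS{p} reads when evaluating line~7 (either via the earlier collect phase, or more precisely the value $R[i]$ holds at the instant line~7 is evaluated) differs from the $i$-th component of the sequence $r[myrnd_p]_p.seq$ that \getTS{p} obtained from $R[myrnd_p]$.

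The key observation is where that sequence $r[myrnd_p]_p.seq$ came from. As already argued in the proof of Claim~\ref{claim:mainCorrectness}, the content of $R[myrnd_p]$ that \getTS{p} read came from a write at line~15 by some \getTS call $r$ with $myrnd_r + 1 = myrnd_p$, and by Claim~\ref{claim:phaseRound}(b) that write occurred during some phase $\varphi \geq myrnd_r + 1 = myrnd_p$; since, by Claim~\ref{claim:registersNotWrittenPerPhase}, $R[myrnd_p]$ can only be written during phase $myrnd_p$ (its index equals the phase number), that write to $R[myrnd_p]$ occurs exactly during phase $myrnd_p$. This gives the required write to $R[myrnd_p]$ during phase $myrnd_p$. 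Moreover, the $i$-th component of the sequence written to $R[myrnd_p]$ equals $last(r_i.seq)$ where $r_i$ is the value of $R[i]$ in the {\scan} that \getTS{r} took at line~13 — i.e.\ it records a value of $last(R[i].seq)$ that $R[i]$ held at (or just before) the start of phase $myrnd_p$.

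Now compare this with the value of $last(R[i].seq)$ that \getTS{p} sees at line~7. Since the line-7 test fails, these two values are different; by Claim~\ref{claim:simpleProperties}(b), distinct writes to $R[i]$ produce distinct values of $last(R[i].seq)$, so the value \getTS{p} sees at line~7 must come from a write to $R[i]$ that is strictly later than the one recorded in the phase-starting scan, hence a write to $R[i]$ that occurs during phase $myrnd_p$ (again using Claim~\ref{claim:registersNotWrittenPerPhase} to rule out $R[i]$ being written in any phase other than $myrnd_p$, since $i \le myrnd_p - 1 < myrnd_p$ and, more to the point, the phase-starting scan by \getTS{r} saw $R[i]$ with some value, and the only later writes to $R[i]$ are phase-$myrnd_p$ writes). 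Finally, I would check the ordering: the write to $R[myrnd_p]$ (the line-15 write by \getTS{r}) precedes \getTS{p}'s read of $R[myrnd_p]$, which precedes \getTS{p}'s line-7 evaluation in iteration $i$; and the later write to $R[i]$ precedes that same line-7 evaluation. Hence both writes occur during phase $myrnd_p$ and before the failure, as claimed.

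The main obstacle I anticipate is pinning down precisely which write to $R[i]$ supplies the value tested at line~7 and arguing it is strictly after the phase boundary — one has to be careful that \getTS{p} might have written to $R[i]$ itself in an earlier iteration, or that the value read is stale from its own earlier collect; the clean way around this is to read $R[i]$'s value as of the moment line~7 is evaluated (the algorithm does re-read, since the line-7 comparison uses $last(R[j].seq)$ which is read fresh), combine the distinctness of per-write values from Claim~\ref{claim:simpleProperties}(b) with the fact that the scanned sequence in $R[myrnd_p]$ is exactly the phase-opening snapshot, and invoke Claim~\ref{claim:registersNotWrittenPerPhase} to confine all relevant writes to phase $myrnd_p$.
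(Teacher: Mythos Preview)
Your approach is essentially the paper's: identify the line-15 write that produced the value \getTS{p} read from $R[myrnd_p]$, use Claim~\ref{claim:phaseRound}(b) to place that write in phase $\geq myrnd_p$, and then compare the $i$-th component of the scanned sequence with the value of $last(R[i].seq)$ at the moment of the line-7 read to force an intervening write to $R[i]$. The skeleton is right.

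There is one genuine slip. You invoke Claim~\ref{claim:registersNotWrittenPerPhase} to say that ``$R[myrnd_p]$ can only be written during phase $myrnd_p$ (its index equals the phase number)''. That claim says no such thing: it says that during phase $\varphi$ only registers $R[1],\ldots,R[\varphi]$ can be written, i.e.\ $R[myrnd_p]$ can be written in \emph{any} phase $\geq myrnd_p$. So this appeal does not give you the upper bound on the phase. The upper bound actually comes for free from the hypothesis: the failure at line~7 occurs during phase $myrnd_p$, and both the write to $R[myrnd_p]$ (it precedes \getTS{p}'s read of $R[myrnd_p]$) and the write to $R[i]$ (it precedes the line-7 read) occur before that failure, hence in phase $\leq myrnd_p$. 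You do say this in your final paragraph, so the argument can be repaired by simply deleting the references to Claim~\ref{claim:registersNotWrittenPerPhase}; that claim is a forward reference anyway and is not needed here. The paper's proof does exactly this: it gets $\geq myrnd_p$ from Claim~\ref{claim:phaseRound}(b) (for $R[myrnd_p]$) and from the definition of phase applied to \getTS{b}'s scan (for $R[i]$), and gets $\leq myrnd_p$ directly from the timing relative to the failure.

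One more small point: you should be explicit about \emph{why} the write that \getTS{p} read from $R[myrnd_p]$ was a line-15 write. It is not enough to cite Claim~\ref{claim:mainCorrectness}; the reason is that \getTS{p} subsequently read $R[myrnd_p+1]=\bot$, so by Claim~\ref{claim:simpleProperties}(a) the register $R[myrnd_p+1]$ was still $\bot$ when that write occurred, and then Claim~\ref{claim:phaseRound}(a) forces the write to be at line~15.
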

\begin{proof}
Let $v=(v_1,\dots,v_k)$ be the value of the sequence stored in $R[myrnd_{p}].seq$ when \getTS{p} reads that register in line~2. 
Let \getTS{b} be the method call that wrote $v$ to $R[myrnd_{p}].seq$. 
The  while-loop of \getTS{p} terminated 
when \getTS{p} read $R[myrnd_p +1] =\bot$ in the $(myrnd_p +1)$-th iteration of its while-loop 
after reading $R[myrnd_{p}] \neq \bot$ in its previous iteration.
By Claim \ref{claim:simpleProperties}\,(a), $R[myrnd_{p} + 1] = \bot$ when \getTS{b} wrote $v$ to $R[myrnd_{p}]$. 
Therefore, by Claim \ref{claim:phaseRound}\,(a), \getTS{b} wrote to $R[myrnd_{p}]$ in line~15 and thus $myrnd_{b} = myrnd_{p} - 1$. 
By Claim~\ref{claim:phaseRound}\,(b), \getTS{b}'s write to $R[myrnd_{p}]$ happens during phase $myrnd_{p}$ or a later phase. 
Because this write happens before \getTS{p} reads $R[i]$ in line~7 during phase $myrnd_{p}$, 
\getTS{b}'s write to $R[myrnd_{p}]$ occurs during phase $myrnd_{p}$ before \getTS{p} fails at iteration $i$. 

Before executing line~15, \getTS{b} executed a \scan in line~13, 
and obtained $v_i$ for the value of $last(R[i].seq)$.
By the assumption that \getTS{p} fails at iteration $i$ in line~7, 
$v_i\neq last(R[i].seq)$ when \getTS{p} reads $R[i]$ in line~7. 
Therefore, there must have been a write to $R[i]$ between \getTS{b}'s scan and \getTS{p}'s read at line 7. 
This write must have occurred in phase $myrnd_{p}$ because,  by the definition of phase, 
either phase $myrnd_p$ began before \getTS{b}'s scan or it began at this scan.
Thus, a write to $R[i]$ occurs in phase $myrnd_p$ before \getTS{p} fails iteration $i$.
\end{proof}

\subsection*{Counting invalidation writes per completed phase}

Our goal is to show that during phase $\varphi$ there is exactly one invalidation write to each register
$R[1]$ through $R[\varphi]$, and no other registers are written.

\begin{claim}
\label{claim:registersNotWrittenPerPhase}
Only registers $R[1], \ldots, R[\varphi]$ are written during phase $\varphi$.
\end{claim}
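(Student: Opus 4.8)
The plan is to show, by induction on the phase number $\varphi$, that no write to a register $R[k]$ with $k > \varphi$ can occur during phase $\varphi$. The key structural fact I will lean on is Claim~\ref{claim:phaseRound}, parts (a) and (b): any write to $R[i]$ at a moment when $R[i+1] = \bot$ must happen in line~15, and such a line-15 write by \getTS{p} occurs only in phase $myrnd_p + 1$ or later. Combined with Claim~\ref{claim:simpleProperties}(d) (the non-$\bot$ registers form a prefix), this pins down exactly when a register can first become non-$\bot$.

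First I would argue that $R[\varphi+1]$ is $\bot$ at the start of phase $\varphi$ and can only become non-$\bot$ by a line-15 write. By definition, phase $\varphi$ begins with a \scan by some \getTS{q} with $myrnd_q = \varphi - 1$. I would like to conclude that at that scan $R[\varphi] = \bot$ (so a fortiori $R[\varphi+1] = \bot$): indeed $myrnd_q = \varphi-1$ means $R[\varphi-1] \neq \bot$ and $R[\varphi] = \bot$ when \getTS{q} finished its while-loop, and by monotonicity (Claim~\ref{claim:simpleProperties}(a)) $R[\varphi]$ could only have become non-$\bot$ after that; but any write making $R[\varphi]$ non-$\bot$ is a line-15 write by some \getTS{r} with $myrnd_r = \varphi - 1$, and by Claim~\ref{claim:phaseRound}(b) such a write is in phase $myrnd_r + 1 = \varphi$ or later — hence not before phase $\varphi$ starts. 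So at the start of phase $\varphi$, $R[\varphi] = R[\varphi+1] = \dots = \bot$.

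Now consider any write $w$ to some register $R[k]$ during phase $\varphi$, and suppose for contradiction $k > \varphi$. Writes occur only at lines~8, 11, and 15. If $w$ is at line~8 or 11 it is performed by some \getTS{p} with $k \le myrnd_p - 1$, so $R[myrnd_p] \neq \bot$ and hence (Claim~\ref{claim:simpleProperties}(d)) $R[k+1] \neq \bot$ at the time of $w$; since $R[k] = \bot$ at the start of the phase and $k \ge \varphi + 1$, this forces some register $R[k']$ with $\varphi + 1 \le k' \le k+1$ to have become non-$\bot$ during phase $\varphi$ strictly before $w$, so it suffices to handle the case where $w$ is itself the \emph{first} write during phase $\varphi$ to any register of index $> \varphi$ — and then at the moment just before $w$ we still have $R[k+1] = \bot$ if $w$ is to the largest such index, contradiction; more carefully, I would take $w$ to be the first write in phase $\varphi$ to a register of index exceeding $\varphi$, so that $R[i] = \bot$ for all $i > \varphi$ just before $w$, forcing $R[k+1] = \bot$ at that moment, which rules out lines~8 and 11 and leaves only line~15. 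Finally, if $w$ is a line-15 write by \getTS{p}, it writes to $R[myrnd_p + 1]$, so $k = myrnd_p + 1$; by Claim~\ref{claim:phaseRound}(b), \getTS{p} executes line~15 in phase $\ge myrnd_p + 1 = k > \varphi$, so $w$ is not in phase $\varphi$ — contradiction. Hence no register beyond $R[\varphi]$ is written in phase $\varphi$.

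The main obstacle I anticipate is the bookkeeping around "the first write to a high-index register in this phase": one must be careful that a line-8/11 write to $R[k]$ really does require $R[k+1] \neq \bot$ (true because the writer's $myrnd_p \ge k+1$ and $R[1..myrnd_p]$ were all observed non-$\bot$ in its while-loop, via Claim~\ref{claim:simpleProperties}(a),(d)), and that minimality-of-$w$ argument correctly forbids any register above index $\varphi$ from having turned non-$\bot$ earlier in the phase. Everything else is a direct appeal to Claim~\ref{claim:phaseRound} and Claim~\ref{claim:simpleProperties}; no new machinery is needed.
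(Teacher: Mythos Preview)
Your argument is correct, but it is considerably more involved than the paper's. The paper dispatches the claim in three lines by invoking Claim~\ref{claim:phaseRound}(c) directly: if \getTS{q} writes to $R[j]$ at line~8 or~11 during phase $\varphi$, then $j < myrnd_q$ by the for-loop bounds, and since line~4 already executes in some phase $\geq myrnd_q$ (part~(c)) and the write comes later, $\varphi \geq myrnd_q > j$; the line-15 case is handled by part~(b) exactly as you do. You never use part~(c); instead you first establish the intermediate fact that $R[\varphi], R[\varphi+1], \ldots$ are all $\bot$ at the start of phase $\varphi$, and then run a minimal-counterexample argument to rule out line-8/11 writes to high-index registers. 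This works, and the intermediate fact is pleasant on its own, but the paper's route is much shorter because the relevant work was already packaged into part~(c). Incidentally, your announced ``induction on $\varphi$'' never actually uses an inductive hypothesis, so the argument is really direct; that is fine, just worth noting.
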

\begin{proof}
Until some \getTS{} has executed line~15, and thus phase $1$ has started, no register is written. 
Hence, the claim is trivially true for $\varphi = 0$. 
Now let $\varphi \geq 1$.
If \getTS{q} writes to $R[j]$ in lines~8 or 11, 
then by Claim~\ref{claim:phaseRound}\,(c), $\varphi\geq myrnd_q$, and by the semantics of the for-loop, $j<myrnd_q$.
If $q$ writes to $R[j]$ in line~15, then $j=myrnd_q+1$ and by Claim~\ref{claim:phaseRound}\,(b), $\varphi \geq myrnd_{q} + 1$.
Hence, in either case $j \leq \varphi$.
\end{proof}

\begin{claim}
\label{claim:registersWrittenPerPhase}
If phase $\varphi$ completes in $E$, 
then for each $1\leq j \leq \varphi$, there is at least one write to $R[j]$ during phase $\varphi$.
\end{claim}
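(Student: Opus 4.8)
The plan is to use the \getTS{} call that starts phase $\varphi+1$. Since phase $\varphi$ completes in $E$, phase $\varphi+1$ starts during $E$, so by definition there is a \getTS{q} with $myrnd_{q}=\varphi$ whose \scan in line~13 is the first point of phase $\varphi+1$. I would first locate where \getTS{q}'s for-loop iterations (lines~5--12) sit relative to the phase boundaries. On the one hand they precede line~13, hence occur no later than phase $\varphi$. On the other hand, \getTS{q}'s while-loop read $R[\varphi]\neq\bot$, and, using Claim~\ref{claim:simpleProperties}(d) to note that $R[\varphi+1]=\bot$ just before $R[\varphi]$ is first written, Claim~\ref{claim:phaseRound}(a) tells us this first write to $R[\varphi]$ is a line~15 write by some \getTS{b} with $myrnd_{b}=\varphi-1$, which by Claim~\ref{claim:phaseRound}(b) occurs during phase $\ge\varphi$. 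So phase $\varphi$ has already begun by the time \getTS{q}'s while-loop completes, and therefore all of \getTS{q}'s for-loop iterations take place during phase $\varphi$.

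Next I would read off which branches \getTS{q} is forced to take. Because \getTS{q} reaches line~13 without returning, in every iteration $j\in\{1,\dots,\varphi-1\}$ the test in line~6 must have been true (otherwise \getTS{q} returns in line~12) and the test in line~7 must have been false (otherwise \getTS{q} returns in line~9); that is, \getTS{q} fails iteration $j$ at line~7, during phase $myrnd_{q}=\varphi$. Applying Claim~\ref{claim:failingImpliesPriorWrites} to each such $j$ then yields, for every $j\in\{1,\dots,\varphi-1\}$, a write to $R[j]$ during phase $\varphi$ (and, incidentally, a write to $R[\varphi]$ during phase $\varphi$ whenever $\varphi\ge 2$).

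It remains to cover $R[\varphi]$ itself, which is the only extra work when $\varphi=1$ (this argument also re-proves the $R[\varphi]$ case uniformly). As above, the first write to $R[\varphi]$ is a line~15 write by a process with $myrnd=\varphi-1$, occurring during phase $\ge\varphi$ by Claim~\ref{claim:phaseRound}(b). Moreover this write is the earliest moment at which $R[\varphi]$ is non-$\bot$, and phase $\varphi+1$ cannot start before that moment: phase $\varphi+1$ begins with a \scan by a process whose while-loop has already read $R[\varphi]\neq\bot$. Hence this write to $R[\varphi]$ occurs during phase $\varphi$. Combining the two parts gives a write to $R[j]$ during phase $\varphi$ for every $1\le j\le\varphi$, as required.

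The main obstacle is the phase-boundary bookkeeping: making rigorous that \getTS{q}'s entire for-loop lies inside phase $\varphi$, and that the first write to $R[\varphi]$ precedes the start of phase $\varphi+1$. Both reduce to chaining parts (a) and (b) of Claim~\ref{claim:phaseRound} with the definition of a phase and the elementary fact that a \getTS{} cannot obtain $myrnd=\varphi$ before $R[\varphi]$ has ever been written; once this is set up, everything else is a direct appeal to Claim~\ref{claim:failingImpliesPriorWrites}.
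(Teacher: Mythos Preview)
Your proof is correct and follows essentially the same approach as the paper: identify the \getTS{} call that starts phase $\varphi+1$, observe it must fail at line~7 in every iteration $1\le j\le \varphi-1$, and invoke Claim~\ref{claim:failingImpliesPriorWrites}. You are more explicit than the paper in two respects---you spell out why the for-loop iterations lie inside phase $\varphi$ (the paper leaves this implicit via Claim~\ref{claim:phaseRound}(c)), and you give a separate argument for the write to $R[\varphi]$ that also covers the edge case $\varphi=1$---but the underlying strategy is identical.
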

\begin{proof}
By definition, phase $\varphi +1  \geq 1$ begins at the first \scan (line~13) by some \getTS{p}, 
for which $myrnd_{p} = \varphi$. 
Since \getTS{p} executes line~13, its  call does not return during the for-loop. 
Therefore, this scan can happen only if this \getTS{p} fails in iteration $j$ at line~7 for every $1 \leq j \leq \varphi-1$.
Thus, by Claim~\ref{claim:failingImpliesPriorWrites}, 
a write to register $R[\varphi]$ and a write to $R[j]$ for every $1 \leq j \leq \varphi-1$
happens in phase $\varphi$. 
\end{proof}

\begin{claim}
\label{claim:invalidationsPerPhase} 
There are exactly $\varphi$ invalidation writes in any completed phase $\varphi$. 
\end{claim}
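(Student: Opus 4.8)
The plan is to obtain this claim as an immediate consequence of Claims~\ref{claim:registersNotWrittenPerPhase} and \ref{claim:registersWrittenPerPhase}. The key observation is purely definitional: an invalidation write during phase $\varphi$ is by definition the \emph{first} write to a register $R[j]$ within that phase, so each register written during phase $\varphi$ contributes exactly one invalidation write, and consequently the number of invalidation writes in phase $\varphi$ equals the number of distinct registers that receive at least one write during phase $\varphi$.

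First I would invoke Claim~\ref{claim:registersNotWrittenPerPhase} to conclude that every register written during phase $\varphi$ is among $R[1],\dots,R[\varphi]$; hence at most $\varphi$ distinct registers are written during phase $\varphi$, giving at most $\varphi$ invalidation writes. Then, using the hypothesis that phase $\varphi$ completes in $E$, I would invoke Claim~\ref{claim:registersWrittenPerPhase} to conclude that for each $1\le j\le\varphi$ there is at least one write to $R[j]$ during phase $\varphi$; hence at least $\varphi$ distinct registers are written, giving at least $\varphi$ invalidation writes. Combining the two bounds yields exactly $\varphi$ invalidation writes in any completed phase $\varphi$.

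There is essentially no obstacle here: the technical work has already been done in the two preceding claims (which in turn rest on Claims~\ref{claim:simpleProperties}, \ref{claim:phaseRound} and \ref{claim:failingImpliesPriorWrites}). The only point requiring care is the bookkeeping that the count of invalidation writes coincides with the count of distinct written registers, which follows directly from the definition of ``invalidation write'' as the first write to a register within a phase; I would state this explicitly to make the counting unambiguous.
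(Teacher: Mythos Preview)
Your proposal is correct and matches the paper's proof essentially verbatim: the paper likewise combines Claims~\ref{claim:registersNotWrittenPerPhase} and \ref{claim:registersWrittenPerPhase} to conclude that exactly the registers $R[1],\dots,R[\varphi]$ are written during phase $\varphi$, and then notes that the first writes to these registers are by definition the invalidation writes.
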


\begin{proof}
By Claims \ref{claim:registersNotWrittenPerPhase} and \ref{claim:registersWrittenPerPhase},
exactly the registers $R[1]$ through $R[\varphi]$  are written during phase $\varphi$.
The set of first writes to each of these registers during phase $\varphi$ is, 
by definition, the set of invalidation writes in phase $\varphi$.
\end{proof}

\subsection*{Counting all invalidation writes} 

We rely on some definitions and factor out some sub-claims 
to facilitate the proof of Claim \ref{claim:totalInvalidations} below. 

\begin{claim}
\label{claim:notInvalidationWrite}  
If a write at line~11 by \getTS{p} happens during phase $myrnd_p$, then that write is not an invalidation write. 
\end{claim}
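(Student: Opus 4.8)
The plan is to reduce the statement to Claim~\ref{claim:failingImpliesPriorWrites}. The write at line~11 occurs inside iteration~$j$ of \getTS{p}'s for-loop (line~5), and reaching line~11 requires that the test at line~6 evaluated true while the test at line~7 evaluated false; that is, \getTS{p} \emph{fails iteration~$j$ at line~7}, and this failure happens strictly before the line~11 write, within the same iteration. So if I can show that this line~7 failure occurs \emph{during phase $myrnd_p$}, then Claim~\ref{claim:failingImpliesPriorWrites} (applied with $i=j$) yields a write to $R[j]$ during phase $myrnd_p$ that precedes the failure, hence precedes the line~11 write; since the line~11 write is then not the first write to $R[j]$ during phase $myrnd_p$, it is not an invalidation write, which is what we want.

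The remaining task is the phase bookkeeping. By Claim~\ref{claim:phaseRound}\,(c), \getTS{p} executes line~4 during some phase $\varphi' \ge myrnd_p$. Within \getTS{p}'s own execution, line~4 precedes the line~7 test of iteration~$j$, which precedes the line~11 write of iteration~$j$; and phase numbers are non-decreasing in real time. Since we are given that the line~11 write happens during phase exactly $myrnd_p$, the phase cannot have been larger than $myrnd_p$ at line~4, so $\varphi' = myrnd_p$, and the phase is constantly $myrnd_p$ over the entire stretch of \getTS{p}'s execution from line~4 through the line~11 write. In particular the line~7 failure of iteration~$j$ occurs during phase $myrnd_p$, which is exactly what the reduction above needs.

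I expect the only point requiring care is this middle step --- pinning down that the line~7 failure sits in phase $myrnd_p$ rather than in some earlier phase --- and for that one should be explicit both about the temporal order of lines~4, 7, and 11 inside \getTS{p}, and about the monotonicity of phase numbers (once phase $\varphi$ has begun the current phase is always $\ge\varphi$). Everything else is a direct reading of the control flow of the for-loop together with an invocation of an already-established claim.
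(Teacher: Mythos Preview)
Your proposal is correct and follows the same approach as the paper: reach line~11 only by failing iteration~$j$ at line~7, then invoke Claim~\ref{claim:failingImpliesPriorWrites} to get an earlier write to $R[j]$ in phase $myrnd_p$. Your added phase-bookkeeping step (using Claim~\ref{claim:phaseRound}\,(c) and monotonicity of phases to pin the line~7 failure to phase $myrnd_p$) makes explicit a point the paper leaves implicit, but the argument is otherwise identical.
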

\begin{proof}
Let $w$ be a write at line~11 by \getTS{p} to register $R[j]$ that occurs during phase $myrnd_p$. 
Then $w$ happens only if \getTS{P} fails  at iteration $j$ at line~7. 
So, according to Claim \ref{claim:failingImpliesPriorWrites}, there was a previous write to $R[j]$ during phase $myrnd_p$.
Hence, $w$ is not an invalidation write.
\end{proof}

There can be an interval between when the first \getTS{q} with $myrnd_q = \varphi-1$ does a scan at line~13 
thus starting phase $\varphi$,
and  when the first write to $R[\varphi]$ happens, 
which is the first point at which other \getTS{} method calls can discern that the current phase is (at least) $\varphi$.
To capture this, say that the phase $\varphi$ is 
\emph{invisible} from the beginning of phase $\varphi$ to the step before the first write to $R[\varphi]$ and 
\emph{visible} from the first write to $R[\varphi]$ to the end of phase $\varphi$.

\begin{claim}
\label{claim:lastWrite}
Any write by \getTS{p} at line~8 or at line~15 
or any write at line~11 that happens after the phase $myrnd_p +1$ becomes visible, 
is \getTS{p}'s last write. 
\end{claim}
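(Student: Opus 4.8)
Any write by \getTS{p} at line~8 or at line~15, or any write at line~11 that happens after phase $myrnd_p+1$ becomes visible, is \getTS{p}'s last write.

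\textbf{The plan.} I would handle the three kinds of write separately, using the structure of the for-loop together with Claim~\ref{claim:phaseRound} and the definition of ``visible.'' The cleanest case is a write at line~8: immediately after writing at line~8, \getTS{p} executes the \Return in line~9, so this is trivially its last write — in fact its last step. The case of a write at line~15 is almost as easy: line~15 is executed at most once, it lies outside the for-loop, and the only thing after it in the code is the \Return in line~16, so no further write can occur; here one only needs that control reaches line~15 at most once per invocation, which is clear from the code.

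The real work is the line~11 case. Suppose \getTS{p} writes at line~11 to $R[j]$ during its $j$-th for-loop iteration, and suppose this write happens after phase $\varphi := myrnd_p+1$ has become visible, i.e.\ after the first write to $R[\varphi]=R[myrnd_p+1]$ has occurred. I want to show \getTS{p} does no further write. After the line~11 write, \getTS{p} proceeds to iteration $j+1$ (if any) of the for-loop; in that and each later iteration it first re-evaluates the line~6 test $R[myrnd_p+1]==\bot$. By Claim~\ref{claim:simpleProperties}(a), once $R[myrnd_p+1]$ is non-$\bot$ it stays non-$\bot$; and since phase $\varphi$ is already visible, $R[\varphi]=R[myrnd_p+1]$ has already been written, hence is non-$\bot$, at the moment of the line~11 write and forever after. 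Therefore every subsequent evaluation of the line~6 test returns false, so in every remaining for-loop iteration \getTS{p} takes the \texttt{else} branch and executes \Return at line~12 — reaching it on the very next iteration, or, if $j=myrnd_p-1$ so the for-loop is already exhausted, falling through to line~13. In the latter sub-case \getTS{p} executes the \scan at line~13 and then tests line~14, $r[myrnd_p+1]==\bot$; but the scan is taken strictly after the (already completed) first write to $R[myrnd_p+1]$, so this scan reports $R[myrnd_p+1]\neq\bot$, the line~14 test fails, line~15 is skipped, and \getTS{p} returns at line~16. In no continuation does \getTS{p} write again, so the line~11 write was its last.

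\textbf{Expected main obstacle.} The only delicate point is pinning down, in the line~11 case, that ``phase $myrnd_p+1$ is visible'' really does force $R[myrnd_p+1]\neq\bot$ at and after the moment of the write, and that this in turn kills \emph{every} later opportunity to write — not just the next for-loop iteration but also the post-loop \scan/line~15 path. I would make this airtight by invoking Claim~\ref{claim:simpleProperties}(a) for persistence of non-$\bot$, the definition of ``visible'' (first write to $R[\varphi]$ has occurred) for the base fact, and the observation that the only writes in \getTS{} are at lines~8, 11, 15, each of which is guarded — directly or transitively through the line~6 or line~14 test — by $R[myrnd_p+1]$ being $\bot$. Everything else is a routine walk through the control flow.
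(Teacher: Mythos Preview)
Your proposal is correct and follows essentially the same approach as the paper's proof: lines~8 and~15 are immediately followed by a \Return, and for the line~11 case you argue---just as the paper does---that once $R[myrnd_p+1]$ is non-$\bot$, the next check at line~6 (or line~14 if the for-loop is exhausted) forces a return without further writes. Your write-up is more detailed about the control flow and explicitly invokes Claim~\ref{claim:simpleProperties}(a) and the scan semantics for the post-loop path, but the argument is the same.
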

\begin{proof}
Algorithm \ref{algo:getTS} returns after any write at line~8 or line~15, so such a write is the last write of the method call.
Now consider a line~11 write $w$ by \getTS{p}.
If $w$ happens anytime after phase $myrnd_p+1$ becomes visible, 
then after $w$, \getTS{p} will discern that the phase number has increased 
when it reads $R[myrnd_p+1]$ to be non-$\bot$ 
either at line~$6$ in the next iteration of the for-loop or, if the for-loop is complete, at line~14. 
In either case \getTS{p} returns without another write, so such a write is also the last write of the method call.
\end{proof}

\begin{claim}
\label{claim:totalInvalidations}  
There are at most $2M$ invalidation writes in execution $E$.
\end{claim}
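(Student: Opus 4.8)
The plan is to define a charging scheme that assigns every invalidation write in $E$ to some write performed by a \getTS{} method call, in such a way that each \getTS{} invocation is charged at most twice. Since there are at most $M$ invocations, this immediately gives at most $2M$ invalidation writes. The natural first attempt is to charge each invalidation write $w$ to $R[j]$ during phase $\varphi$ to the method call $p$ that performed $w$. The difficulty is that a single \getTS{p} could in principle produce many invalidation writes: it writes to $R[1],\dots,R[myrnd_p-1]$ in the for-loop (lines~8 and~11) and possibly once more at line~15, so naively $p$ could be charged $myrnd_p$ times. The point of the argument is that almost all of those writes are \emph{not} invalidation writes for the phase in which they occur.

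The key observations I would assemble are the following. By Claim~\ref{claim:phaseRound}(c), \getTS{p} enters the for-loop during some phase $\varphi' \geq myrnd_p$, and by Claim~\ref{claim:registersNotWrittenPerPhase} a write to $R[j]$ with $j \leq myrnd_p - 1$ during phase $\varphi' > myrnd_p$ cannot be an invalidation write of phase $\varphi'$ unless $\varphi' \le j$, which is impossible. So the only way a for-loop write by \getTS{p} is an invalidation write is if it happens during phase exactly $myrnd_p$ (it cannot happen in an earlier phase by Claim~\ref{claim:phaseRound}(c)). Next, Claim~\ref{claim:notInvalidationWrite} rules out line~11 writes during phase $myrnd_p$ as invalidation writes. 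So the only invalidation writes by \getTS{p} in the for-loop during phase $myrnd_p$ occur at line~8 --- but a line~8 write is immediately followed by a \Return, so there is at most one such write. That accounts for at most one invalidation-write charge per invocation from the for-loop body. Separately, \getTS{p} can perform one write at line~15, to $R[myrnd_p+1]$, which by Claim~\ref{claim:phaseRound}(b) happens during some phase $\varphi \ge myrnd_p+1$; this can be an invalidation write (of phase $myrnd_p+1$, when $\varphi = myrnd_p+1$), giving at most one more charge.

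Putting the two buckets together: each \getTS{} invocation contributes at most one invalidation write through a line~8 write during its own phase $myrnd_p$, and at most one through its line~15 write. Every invalidation write is, by definition, a first write to some $R[j]$ during its phase; I would argue via Claims~\ref{claim:registersNotWrittenPerPhase} and~\ref{claim:phaseRound} that such a write must be one of these two kinds (a line~8 write occurring in phase $myrnd_p$, or a line~15 write), so the charging is exhaustive. Hence every invocation is charged at most $2$ times, and the total number of invalidation writes is at most $2M$.

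The main obstacle I anticipate is the bookkeeping needed to show the charging is simultaneously \emph{exhaustive} (every invalidation write gets charged) and \emph{bounded} (no invocation is charged three times). The exhaustiveness direction requires carefully ruling out that an invalidation write could be a line~11 write in phase $myrnd_p$ (handled by Claim~\ref{claim:notInvalidationWrite}), a line~8 or line~11 write in a phase strictly larger than $myrnd_p$ (handled by combining Claims~\ref{claim:registersNotWrittenPerPhase} and~\ref{claim:phaseRound}(c) with the index bound $j < myrnd_p$), or a line~8/11 write in a phase smaller than $myrnd_p$ (impossible by Claim~\ref{claim:phaseRound}(c)). The boundedness direction additionally needs that a given invocation performs at most one line~8 write (immediate from the \Return following line~9) and at most one line~15 write (immediate from the code structure). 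Claim~\ref{claim:lastWrite} is useful here as a sanity check that once a phase-advancing write happens the method call stops writing, which prevents a single invocation's writes from being spread across too many phases.
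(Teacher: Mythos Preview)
Your charging scheme has a genuine gap: the step where you invoke Claim~\ref{claim:registersNotWrittenPerPhase} to rule out for-loop writes in phases $\varphi'>myrnd_p$ as invalidation writes is a misreading of that claim. Claim~\ref{claim:registersNotWrittenPerPhase} says that during phase $\varphi$ only registers $R[1],\dots,R[\varphi]$ are written; it does \emph{not} say that a write to $R[j]$ with $j<\varphi'$ cannot be the first write to $R[j]$ in phase $\varphi'$. In fact the opposite is the typical situation: invalidation writes in phase $\varphi'$ are precisely writes to $R[1],\dots,R[\varphi']$. So a line~11 write by \getTS{p} to $R[j]$ (with $j<myrnd_p$) that happens during phase $myrnd_p+1$ can perfectly well be the first write to $R[j]$ in that phase, hence an invalidation write. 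Claim~\ref{claim:notInvalidationWrite} rules out line~11 invalidation writes only during phase $myrnd_p$, not during phase $myrnd_p+1$.

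This matters for the charge bound, not just exhaustiveness: during the \emph{invisible} part of phase $myrnd_p+1$ (after some \getTS{a} with $myrnd_a=myrnd_p$ has scanned but before anyone writes $R[myrnd_p+1]$), \getTS{p} still sees $R[myrnd_p+1]=\bot$ at line~6 and can march through several iterations, each producing a line~11 write that may be an invalidation write for phase $myrnd_p+1$. Under your scheme all of these are charged to $p$, so a single invocation can be charged far more than twice. The paper's proof confronts exactly this case: it partitions the line~11 invalidation writes in phase $myrnd_p+1$ into those after the phase becomes visible (these are last writes by Claim~\ref{claim:lastWrite}), the first such write by $p$ during the invisible part (charged as $p$'s ``first invalidation write''), and any subsequent ones during the invisible part. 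For this last set the paper does \emph{not} charge $p$; instead it charges the stale write (by some other \getTS{b} with $myrnd_b<myrnd_p$) that left $R[j].rnd<myrnd_p$ and triggered $p$'s line~11 write, and argues that this stale write is $b$'s last write and is itself not an invalidation write. That redirection is the missing idea in your proposal.
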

\begin{proof} 
Define:
\begin{eqnarray*}
A &=&  \{ w ~|~ \text{$w$ is a first invalidation write by some \getTS{} method} \} \\ 
B &=&  \{ w ~|~ \text{$w$ is the last write by some \getTS{} method and $w$ is an invalidation write} \}  \\ 
C &=&  \{ w ~|~ \text{$w$ is the last write by some \getTS{} method and $w$ is not an invalidation write} \}
\end{eqnarray*}
Let $ W^*$ be the disjoint union of $A, B$ and $C$. 
Since each \getTS{} can have at most one write in $A$ and at most one write in either $B$ or $C$ but not both
it follows that $| W^*| \leq 2M$.
Let $W$ be the set of all writes, and let $I$ be the set of all invalidation writes during execution $E$.
We will define a function $f: I \rightarrow W $.
Then, it will suffice to show that $f$ is injective with range a subset of $W^*$.

Define:
\begin{eqnarray*}
I_1 &=& \{ w ~|~ w \text{ is an invalidation write at line~8 or 15} \}  \\
I_2 &=& \{ w ~|~ \exists \ \getTS{p} \text{ satisfying ($w$ is an invalidation write at line~11 by \getTS{p}) and } \\
    & &     \text{($w$ happens after the beginning of the visible part of phase $myrnd_p +1$) } \} \\
I_3 &=& \{ w ~|~ \exists \ \getTS{p} \text{ satisfying ($w$ is an invalidation write at line~11 by \getTS{p}) and } \\ 
    & &     \text{($w$ happens during the invisible part of phase $myrnd_p +1$ ) and } \\
    & &     \text{($w$ is \getTS{p}'s first invalidation write)} \} \\
I_4 &=& \{ w ~|~ \exists \ \getTS{p} \text{ satisfying ($w$ is an invalidation write at line~11 by \getTS{p}) and } \\
    & &      \text{($w$ happens during the invisible part of phase $myrnd_p +1$ ) and }  \\
    & &     \text{($w$ is not \getTS{p}'s first invalidation write)} \} 
\end{eqnarray*}
Let $w$ be a write by \getTS{p}.
Then $w$ happens at line~8, or line~11 or line~15,  
and, by Claim \ref{claim:phaseRound}\,(c),
$w$ happens in some phase $\varphi$ satisfying $\varphi \geq myrnd_p$.
By Claim \ref{claim:notInvalidationWrite}, 
if $w$ is a write at line~11 that occurs during phase $myrnd_p$, 
then $w$ is not an invalidation write. Hence $I_1 \cup I_2 \cup I_3 \cup I_4 = I$.
Also, clearly $ I_1, I_2 , I_3$ and  $I_4$ are mutually disjoint.  
Therefore $ \{ I_1 , I_2,  I_3, I_4 \} $ is a partition of $I$. 

For all $ w \in I_1 \cup I_2 \cup I_3$ define $f(w) =w$.
By definition,  $I_3 \subseteq A$, and 
by Claim~\ref{claim:lastWrite}, $I_1 \cup I_2 \subseteq B$.
So, $f$ maps $I_1 \cup I_2 \cup I_3$ to the disjoint union of $A$ and $B$ and
clearly, $f$ is injective on $I_1 \cup I_2 \cup I_3$.


It remains to map $I_4$ to $C$ and show this map is injective.
Let $w$ be a write in $I_4$ by \getTS{p} to register $R[i]$. 
By definition of $I_4$, $w$ occurs during the invisible part of phase $myrnd_p +1$, 
and there is another invalidation write, say $u$, by \getTS{p} that precedes $w$ in $E$. 
Claims \ref{claim:phaseRound}\,(c), \ref{claim:notInvalidationWrite} and \ref{claim:lastWrite} imply that
$u$ is a line~11 write that also occurs during the invisible part of phase $myrnd_p +1$.

In line~10, before executing $w$, \getTS{p} reads a value $x < myrnd_{p}$ from $R[i].rnd$. 
Let $o$ be this read operation. 
Operation $o$ occurs after $u$ and before $w$ and thus also during the invisible part of phase $myrnd_{p} + 1$. 
Define $f(w)$ to be the write operation that wrote the value to $R[i]$ that was read by $o$.

We now show that $f(w)$ is in $C$. 
Let \getTS{a} be the method call that starts phase $myrnd_{p} + 1$ by executing a \scan in line~13. 
Then $myrnd_{a} = myrnd_{p}$, and during phase $myrnd_p$,
\getTS{a} fails at iteration $j$ at line~7 and thus executes line~10, for all $j = 1, \ldots, myrnd_{p} - 1$.
In particular for $j = i$, 
\getTS{a} either writes the value $myrnd_{a} = myrnd_{p} > x$ to $R[i].rnd$ in line~11, or in line~10 it reads $R[i].rnd \geq myrnd_{a} > x$. 
Hence, $f(w)$, which writes the value $x$ to $R[i].rnd$ that is read by $o$, must happen after the $i$-th iteration of \getTS{a}'s for-loop and before $o$. 
Furthermore, $f(w)$ cannot happen during phase $myrnd_p +1$, because otherwise $w$ would not be an invalidation write. 
We conclude that $f(w)$ is a write to $R[i]$ that happened in phase $myrnd_a$ after \getTS{a} failed at iteration $i$,
and hence, by Claim \ref{claim:failingImpliesPriorWrites}, $f(w)$ is not an invalidation write.

Let \getTS{b}  be the method call that executes $f(w)$. 
When \getTS{a} finished its while-loop, $R[myrnd_{a}] = R[myrnd_{p}] \neq \bot$. 
Hence, by Claim~\ref{claim:simpleProperties}\,(a), $R[myrnd_{p}] \neq \bot$ when $f(w)$ occurs. 
Since $myrnd_b = x < myrnd_p$, by Claim~\ref{claim:simpleProperties}\,(a) and (d),
$R[myrnd_{b} + 1] \neq \bot$ when \getTS{b} executes either the if-statement in line~6 (in iteration $i+1$ of its for-loop)
or in line~14 (if its for-loop is completed because $i = myrnd_b-1$). 
In either case, \getTS{b} returns without another write.
Therefore, $f(w)$ is the last write by \getTS{b} and is not an invalidation write, so $f(w) \in C$ by definition of $C$.
Finally, we show that $f(\cdot)$ on the domain $I_4$ is injective.
If $f(w)$ occurs in phase $\varphi$ then, 
as we have just seen, $w$ occurs during the invisible part of phase $\varphi +1$.
Suppose $f(w) = f(w')  $ where $w,w' \in I_4$.
Then $w$ and $w'$ are both invalidation writes to the same register during the same phase $\varphi+1$. 
But this is impossible since there can be only one invalidation write per register per phase.
\end{proof}

\subsection{Algorithms~\ref{algo:compare} and \ref{algo:getTS} are Waitfree }

\begin{lemma}
\label{lemma:waitfreedomOfOneTimeAlg}
Algorithms~\ref{algo:compare} and \ref{algo:getTS} are wait-free provided the bound $M$ on the number of allowed \getTS method call is fixed in advance.
\end{lemma}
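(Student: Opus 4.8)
The plan is to show that each \getTS{} method call performs only a bounded number of steps, using the space bound already established in Lemma~\ref{lemma:spaceComplexityOfOneTimeAlg}. First I would observe that the only potentially unbounded loop in Algorithm~\ref{algo:getTS} is the \scan at line~13, since the while-loop at line~1 terminates as soon as a $\bot$ register is read (and by Claim~\ref{claim:simpleProperties}(d) the non-$\bot$ registers form a prefix), and the for-loop at line~5 runs at most $myrnd_p - 1 \le m - 1$ iterations. So the entire argument reduces to bounding the number of collects performed by each successful-double-collect.

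The key step is to bound the number of writes to shared registers that can occur during a single \getTS{} call's execution of its \scan. Recall from the discussion preceding Lemma~\ref{lemma:correctnessOfOneTimeAlg} that the \scan of Afek et al.\ terminates once two consecutive collects return identical views; by the standard analysis this happens after at most $w+1$ collects, where $w$ is the number of writes concurrent with the \scan. By Lemma~\ref{lemma:spaceComplexityOfOneTimeAlg}, only the registers $R[1],\dots,R[\lceil 2\sqrt M\rceil]$ are ever accessed in any execution with at most $M$ \getTS{} calls, and by Claim~\ref{claim:simpleProperties}(b) each such register is written at most once per \getTS{} call (in fact each \getTS{} call writes at most once to each register, as shown in the proof of Claim~\ref{claim:simpleProperties}(b)). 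Hence the total number of writes in the entire execution is at most $M\cdot\lceil 2\sqrt M\rceil$, so any single \scan terminates within $O(M^{3/2})$ collects, each collect reading $m = \lceil 2\sqrt M\rceil$ registers. Therefore each \getTS{} call completes within $O(M^2)$ steps, which is a finite bound depending only on $M$, establishing wait-freedom. The \compare method is trivially wait-free since it accesses no shared memory.

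The main obstacle is making precise the claim that each \scan is guaranteed to terminate: one must verify that the \scan implementation of Afek et al.\ indeed has the property that it returns after a number of collects bounded by the number of concurrent writes plus a constant, and that ``concurrent writes'' here can be bounded a priori by the finiteness of the execution --- this is exactly the reasoning sketched in the text (``each \getTS{} performs at most $m-1$ writes, so each \scan operation will be successful after a finite number of collects''). Once this is granted, the rest is a routine accounting of loop iterations against the already-established space bound, with the essential point being that the number of writes in the whole execution is finite because there are at most $M$ method calls and each writes at most once per register.
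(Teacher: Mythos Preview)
Your proposal is correct and follows essentially the same approach as the paper's proof: bound the while- and for-loops by $m$, then argue the \scan terminates because the total number of shared-register writes in the execution is finite. The paper's argument is slightly more direct in that it does not invoke Lemma~\ref{lemma:spaceComplexityOfOneTimeAlg}; it simply observes that each \getTS{} writes at most once to each of the $m$ registers (hence fewer than $m$ writes per call, and at most $M(m-1)$ writes overall), which already bounds the number of collects without appealing to the space bound---your detour through the space lemma is harmless but unnecessary.
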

\begin{proof}
Clearly, \compare is wait-free.
The number of registers $m$ provided for \getTS{} is at least one more than the number that can be written, 
so the last register $R[m]$ is always $\bot$.
Since each iteration of the  while-loop increments $j$ until $R[j]=\bot$ is read, 
the while-loop can iterate at most $m-1$ times. 
Similarly, since $myrnd$ is the index of a non-$\bot$ register,  
the for-loop can iterate at most $m-2$ times. 
All, operations inside and outside the while and for loops, except the \scan, are wait-free. 
Hence, it remains to show that all calls of \scan terminate within a bounded number of steps.
It is immediate from the code that each \getTS{} writes to each register at most once, 
implying each \getTS{} method writes fewer than $m = \lceil 2\sqrt{M} \rceil$ times.
Thus, after a finite number of reads during the \scan, 
the scanning process must see no more changes to registers, 
and so will  achieve a successful double collect and terminate.
\end{proof}

\section{Further remarks}

The lower and upper bounds for long-lived and one-shot timestamps compare and contrast in several ways.  
In the execution constructed in the lower bound for one-shot timestamps, 
each process that participates in a block write, takes no further steps in the computation.  
As a consequence, the proof actually applies without change if each register is replaced by any historyless object. 
The asymptotically matching upper bound is, however, achieved using registers.
In contrast, 
our proof of the lower bound for long-lived timestamps does not extend to historyless objects. 
So it remains an open question whether there is an implementation 
of long-lived timestamps from a sub-linear number of historyless objects. 
Both the long-lived and the one-shot lower bounds apply even to non-deterministic solo-terminating algorithms, 
while the asymptotically matching algorithms are wait-free.

The upper bound for one-shot timestamps applies for any bounded number of \getTS{} method invocations.
The covering argument in the proof of the lower bound, however, prevents any similar generalization: it 
depends on each process invoking at most one \getTS{}.

The one-shot algorithm generalizes even to the situation where the number of \getTS{} method invocations 
is not bounded, provided that the system could acquire additional registers as needed. 
In this case however, progress would be non-blocking only instead of wait-free.

\begin{acks}
The authors thank Faith Ellen for valuable comments on an earlier draft of parts of this paper. 
\end{acks}

\bibliographystyle{acmsmall}
\bibliography{timestamps}


\end{document}